\newcolumntype{x}[1]{>{\centering\arraybackslash\hspace{0pt}}p{#1}}
\theoremstyle{definition}
\newtheorem{theorem}{Theorem}[section]
\newtheorem{definition}[theorem]{{{Definition}}}
\newtheorem{example}[theorem]{{{Example}}}
\newtheorem{remark}[theorem]{{{Remark}}}
\newtheorem{corollary}[theorem]{{{Corollary}}}
\newtheorem{proposition}[theorem]{{{Proposition}}}
\newtheorem{lemma}[theorem]{{{Lemma}}}
\newcommand{\numberset}{\mathbb}
\newcommand{\N}{\numberset{N}}
\newcommand{\F}{\numberset{F}}
\newcommand{\mC}{\mathcal{C}}
\newcommand{\mG}{\mathcal{G}}
\newcommand{\mI}{\mathcal{I}}
\newcommand{\mU}{\mathcal{U}}
\newcommand{\mV}{\mathcal{V}}
\newcommand{\mO}{\mathcal{O}}
\newcommand{\Fq}{\F_q}
\newcommand{\Fm}{\F_{q^m}}
\newcommand{\Fmn}{\Fm^n}
\newcommand{\Fmk}{[n,k]_{q^m/q}}
\newcommand{\ba}{\bar\alpha}
\newcommand{\st}{\,:\,}
\DeclareMathOperator{\rk}{rk}
\DeclareMathOperator{\GL}{GL}
\DeclareMathOperator{\PG}{PG}
\DeclareMathOperator{\T}{T}
\title{A geometric invariant of linear rank-metric codes}
\author{Valentina Astore$^{1,2}$}
\author{Martino Borello$^{3,1}$}
\author{Marco Calderini$^4$}
\author{Flavio Salizzoni$^5$}
\address{$^1$INRIA, France}
\address{$^2$LIX, École polytechnique, Institut Polytechnique de Paris, France}
\address{$^3$Universit\'e Paris 8, Laboratoire de G\'eom\'etrie, Analyse et Applications, LAGA, Universit\'e Sorbonne Paris Nord, CNRS, UMR 7539, France.}
\address{$^4$University of Trento, Italy.}
\address{$^5$MPI-MiS, Leipzig, Germany.}
\email{valentina.astore@inria.fr}
\email{martino.borello@univ-paris8.fr} \email{marco.calderini@unitn.it} 
\email{flavio.salizzoni@mis.mpg.de}
\thanks{V.~A. is funded by AID (French \emph{Agence de l'innovation de défense}).}
\thanks{M.~B. is partially supported by the ANR-21-CE39-0009 - BARRACUDA (French \emph{Agence Nationale de la Recherche}) and by PHC GALILEE 2024 Projet n° 50424WM - ``Algebraic and Geometric methods in coding theory''.}
\thanks{The research of M.~C. was partially supported by the Italian National Group for Algebraic and Geometric Structures and their Applications (GNSAGA - INdAM) and by the Italian Ministry of University and Research with the project PRIN 2022RFAZCJ}
\thanks{F.~S. is supported by the P500PT\textunderscore  222344 SNSF project and by the 2023-28 Germaine de Staël project}
\begin{document}

\begin{abstract}
Rank-metric codes have been a central topic in coding theory due to their theoretical and practical significance, with applications in network coding, distributed storage, crisscross error correction, and post-quantum cryptography. Recent research has focused on constructing new families of rank-metric codes with distinct algebraic structures, emphasizing the importance of invariants for distinguishing these codes from known families and from random ones. In this paper, we introduce a novel geometric invariant for linear rank-metric codes, inspired by the Schur product used in the Hamming metric. By examining the sequence of dimensions of Schur powers of the extended Hamming code associated with a linear code, we demonstrate its ability to differentiate Gabidulin codes from random ones. From a geometric perspective, this approach investigates the vanishing ideal of the linear set corresponding to the rank-metric code.
\end{abstract}

\maketitle

\textbf{MSC Classification Codes:} 11T71, 51E20, 94B27. \smallskip

\textbf{Key words:} Rank-metric codes, Schur product, Overbeck's distinguisher, generalized Gabidulin codes, Castelnuovo-Mumford regularity.

\section*{Introduction}
Rank-metric codes have been a topical subject in coding theory since their introduction in 1978. While their initial development was motivated by theoretical reasons~\cite{Delsarte,Gabidulin}, the interest in these codes stems from practical applications such as network coding~\cite{silva2008rank}, distributed data storage~\cite{silberstein2012error}, crisscross error correction~\cite{roth1991maximum}, and code-based cryptography.  Interestingly, the first code-based cryptosystem utilizing Gabidulin codes was proposed as early as 1991~\cite{gabidulin1991ideals}. However, it was not until the recent push for post-quantum cryptography that Gabidulin codes gained significant prominence in this area. Although no public-key cryptosystem based on rank-metric codes has made it past the second round of the NIST Post-Quantum Cryptography Standardization process, NIST believes that rank-based cryptography should continue to be researched, as rank-metric cryptosystems offer a nice alternative to traditional Hamming metric codes with comparable bandwidth~\cite{alagic2020status}. Moreover, in the second call for Digital Signature Schemes launched by NIST, the rank-metric based schemes Mirath (merge of MIRA \cite{aragon2023miradigitalsignaturescheme} and MiRitH \cite{cryptoeprint:2023/1666}) and RYDE \cite{bidoux2023rydedigitalsignaturescheme} have currently made it through the first round and are still competing in the selection process.

Driven by these applications, in recent years there has been a growing interest in building new families of optimal codes with different algebraic structures, especially since it has been shown that most MRD codes are not Gabidulin codes when sufficiently large field extension degrees are considered ~\cite{neri2018genericity}. A first explicit construction for different MRD codes, called \emph{twisted Gabidulin codes}, was presented in ~\cite{sheekey2015new}. Other constructions can be found, among the others, in ~\cite{csajbok2018new,lunardon2018generalized,HorlemannTrautmann-Marshall,sheekey2020new}. For a more complete list of results we refer to~\cite{sheekey201913}. 

A key step in constructing new codes is to prove that they are not equivalent to already known families. Although determining whether two $\Fm$-linear rank-metric codes are equivalent or not can be done in polynomial time~\cite{couvreur2021hardness}, there is much interest in constructing invariants that allow to complete this task in an easy way.
Moreover, in a code-based cryptographic system whose security relies on the difficulty of detecting the code’s algebraic structure, identifying an invariant that discriminates structured codes from random ones results in mining any security proof of the scheme and could potentially lead to an attack. However, the literature on this topic is currently still limited and only few useful invariants are available. For instance, in~\cite{HorlemannTrautmann-Marshall} the authors introduced an invariant for generalized Gabidulin codes, which is based on the dimension of the intersection of the code with itself under some field automorphism. This was investigated further in~\cite{giuzzi2019identifiers} and generalized in~\cite{neri2020equivalence}.

Even in the scenario of the Hamming metric, similar problems arise, and in this case, the Schur product plays a fundamental role. Inspired by~\cite{faugere2013distinguisher}, in~\cite{Corbella-Pellikaan} it was proven that the dimension of the Schur product of a linear code (in the Hamming metric) with itself allows us to differentiate between a generalized Reed Solomon code from a random one. This invariant was then used successfully in~\cite{couvreur2014distinguisher} to construct a key-recovery attack on a code-based cryptographic scheme.

In this paper, we introduce a geometric invariant for linear rank-metric codes which is inspired and similar to the dimension of the Schur products. In the case of rank-metric codes, just considering the Schur powers is not enough and, moreover, is not rank-equivalence-invariant. To overcome this problem, it is necessary to consider the extended Hamming code, introduced in~\cite{Alfarano-Borello}, and examine its Schur products. In particular, we show that the sequence of the dimensions of the Schur powers of the extended code associated with a linear code distinguishes a Gabidulin code from a random one. From a geometric point of view, this corresponds to investigate the vanishing ideal of a linear set corresponding to the rank-metric code. In particular, the behavior of forms of a certain degree distinguishes (generalized) Gabidulin codes from random ones. In the last part of the paper we inquired which properties of the code can be determined from the dimension sequence. This leads also to an investigation of zeros of forms on linear sets.
\bigskip

\noindent \textbf{Outline:}
The structure of this paper is as follows. Section \ref{section:preliminaries} presents the preliminaries on rank-metric codes, including an overview of known invariants in the rank-metric context, an introduction to the Schur product in the Hamming metric, and the geometric interpretation of rank-metric codes. In Section \ref{section:dimension sequence of rank-metric codes}, we define and analyse the dimension sequence of codes in the rank metric. In particular, we study the behavior of Gabidulin codes and compare it with the general case of random codes. Section \ref{section:Evaluation of forms on linear sets} explores the geometric properties that support the results presented in the previous section. Specifically, we demonstrate how the findings from Section \ref{section:dimension sequence of rank-metric codes} relate to the evaluation of forms on linear sets. Finally, Section \ref{section:number of zeros of a form} is dedicated to study the zero locus of the forms introduced in Section \ref{section:Evaluation of forms on linear sets}.
\bigskip

\section{Background}\label{section:preliminaries}
Throughout this paper, let $q$ be a prime power, $\Fq$ be the finite field of order $q$, and $m,n$ be two positive integers. 

\begin{definition}
    Let $v=(v_1,\ldots,v_n),w=(w_1,\ldots,w_n)\in\Fm$. The rank metric over $\Fm$ measures the distance between $v$ and $w$ as   
    \[\rk (v,w):=\dim_{\F_q}\langle v_1-w_1,\ldots,v_n-w_n \rangle_{\F_q}.\]
    A (linear) \emph{rank-metric code} $\mC$ is an $\F_{q^m}$-linear subspace of $\F_{q^m}^n$ endowed with the rank metric. If $k$ is its dimension and $d$ is its minimum rank distance, we say that $\mC$ is an $[n,k,d]_{q^m/q}$ code.
\end{definition}

\noindent A \emph{generator matrix} $G$ for a code $\mC$ is a matrix whose rows form a basis of $\mC$. If the $\F_q$-dimension of the columns of $G$ is equal to $n$, then we say that $\mC$ is \emph{nondegenerate} (see~\cite[Definition 3.1. and Proposition 3.2.]{Alfarano-Borello} for equivalent definitions). Two rank-metric codes $\mC$ and $\mC'$ are \emph{equivalent} if there exists an $\F_{q^m}$-linear \emph{isometry} (that is a map $\varphi_{\alpha,A}:v\mapsto \alpha v A$, with $\alpha\in \F_{q^m}^\ast$ and $A\in \GL_n(\F_q)$) sending $\mC$ to $\mC'$. We refer to~\cite{gorla2021rank} for a complete introduction on this topic. 

\medskip

\subsection{Maximum rank distance codes} We introduce now the main objects of the paper.

\begin{theorem}[{\cite[Section 2]{Gabidulin}}]
    Let $\mC\subseteq\Fmn$ be an $[n,k,d]_{q^m/q}$ code with $n\leq m$. Then \[k\leq n-d+1.\]
\end{theorem}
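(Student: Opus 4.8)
The plan is to reduce this rank-metric bound to the classical Singleton bound by exploiting the elementary inequality $\rk(v)\le w_H(v)$ between the rank weight and the Hamming weight of a vector $v\in\Fmn$. Indeed, if $v$ has exactly $w$ nonzero coordinates, then the $\F_q$-span of its coordinates is generated by at most $w$ elements, so it has $\F_q$-dimension at most $w$. In particular, the minimum Hamming distance $d_H$ of $\mC$ satisfies $d_H\ge d$.

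Concretely, I would argue by puncturing. Consider the projection $\pi\colon\Fmn\to\Fm^{\,n-d+1}$ onto the first $n-d+1$ coordinates and restrict it to $\mC$. I claim that $\pi|_{\mC}$ is injective: if $c\in\mC$ lies in the kernel, then $c$ is supported on the last $d-1$ coordinates, so $w_H(c)\le d-1$, and hence $\rk(c)\le w_H(c)\le d-1<d$. Since $\mC$ is $\Fm$-linear with minimum rank distance $d$, this forces $c=0$.

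From injectivity it follows that $k=\dim_{\Fm}\mC=\dim_{\Fm}\pi(\mC)\le n-d+1$, which is the claimed inequality. (Equivalently, one may simply invoke the Hamming-metric Singleton bound for the $\Fm$-linear code $\mC$, using $d_H\ge d$.)

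There is essentially no obstacle here: the only points to keep straight are the direction of the weight inequality and the fact that deleting $d-1$ coordinates from a code of minimum distance $d$ cannot create a nonzero codeword of weight $0$. Note that the hypothesis $n\le m$ plays no role in this inequality; it becomes relevant only when one asks whether the bound is attained, which is precisely the regime in which (generalized) Gabidulin codes exist and meet it with equality.
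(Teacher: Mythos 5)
Your argument is correct and complete: the inequality $\rk(c)\le w_H(c)$ (the $\F_q$-span of the coordinates of $c$ is generated by its at most $w_H(c)$ nonzero entries), combined with the puncturing step showing that projection onto any $n-d+1$ coordinates is injective on $\mC$, yields $k\le n-d+1$ exactly as you claim. The paper itself offers no proof of this statement --- it is quoted from Section~2 of Gabidulin's paper --- so there is nothing internal to compare against; your reduction to the Hamming-metric Singleton bound is the standard modern argument and is perfectly adequate. Your closing remark is also accurate: the hypothesis $n\le m$ is not needed for the inequality in this $\Fm$-linear form (it follows from the Hamming bound for any $n$), and it matters only for the question of attainability, i.e.\ for the definition of MRD codes and for the sharper Delsarte-type bound that replaces $k\le n-d+1$ when $n>m$.
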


\noindent Codes attaining the previous bound are called \emph{maximum rank distance $($MRD$)$ codes}. Any linear MRD code $\mathcal{C}\subseteq\F_{q^m}^n$ of dimension $k$ has a generator matrix $G\in \F_{q^m}^{k\times n}$ in systematic form, i.e.,
$$G= [\, I_k \, | \, X\, ],$$ where all entries of $X$ are elements of $\F_{q^m}\setminus\F_q$~\cite[Lemma 5.3]{HorlemannTrautmann-Marshall}. It was shown in~\cite[Theorem 4.6]{neri2018genericity} that when large field extension degrees are considered, then a randomly chosen systematic generator matrix defines an MRD code with high probability. \\ In this paper, we focus on a specific family of MRD codes, originally introduced independently by Delsarte in~\cite{Delsarte} and by Gabidulin in~\cite{Gabidulin}, and later generalized in~\cite{kshevetskiy2005new}.
\begin{definition}\label{definition:Gabidulin}
    Let $\alpha_1,\dots,\alpha_n\in\Fm$ be linearly independent elements over $\Fq$ and $s\in\N$ such that $\mathrm{gcd}(s,m)=1$. A \emph{generalized Gabidulin code $\mG_{s,k}(\alpha_1,\dots,\alpha_n)$ with parameter $s$ and of dimension $k$} over $\Fmn$ is the rank-metric code whose generator matrix is
    \[\begin{pmatrix}
    \alpha_1 & \alpha_2 & \cdots & \alpha_n \\ \alpha_1^{[s]} & \alpha_2^{[s]} & \cdots & \alpha_n^{[s]} \\ \vdots & \vdots & & \vdots \\ \alpha_1^{[s(k-1)]} & \alpha_2^{[s(k-1)]} & \cdots & \alpha_n^{[s(k-1)]}
\end{pmatrix},\]
where, for a nonnegative integer $i$, we write $[i]$ to mean $q^i$. In particular, when $s=1$, we will simply say that $\mG_k=\mG_{1,k}(\alpha_1,\dots,\alpha_n)$ is a \emph{Gabidulin code of dimension $k$}.
\end{definition}

\noindent In the following, we will frequently compare structured MRD codes with random ones. Thus, we conclude this section by giving a formal definition of this notion. For us, a \emph{random code} is a random variable that selects with uniform probability a matrix in systematic form and takes the code generated by it.

\medskip

\subsection{Known invariants in the rank metric}
Hereafter, for a positive integer $s$ and a given matrix (or vector) $X$, we denote by $X^{[s]}$ the matrix (or vector) obtained by raising all entries of $X$ to the power $q^s$. \\ The following results present useful characterizations of generalized Gabidulin codes.
\begin{theorem}[{\cite[Theorem 4.8]{HorlemannTrautmann-Marshall}}]
    Let $\mC\subseteq\Fmn$ be a linear MRD code of dimension $k<n$ and let $\mC^{[s]}
    :=\{c^{[s]}\st c\in\mC\}$ for some $s\in\{1,\dots,m-1\}$ with $\gcd(s, m) = 1$. Then, $\dim(\mC\cap\mC^{[s]}) = k-1$ if and only if $\mathcal{C}$ is a generalized Gabidulin code with parameter $s$.
\end{theorem}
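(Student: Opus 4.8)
The plan is to treat the two implications separately: the ``if'' direction is a short computation, and the ``only if'' direction carries the weight. Assume first that $\mC=\mG_{s,k}(\alpha_1,\dots,\alpha_n)$. Raising all entries of its generator matrix to the power $q^s$ shifts the rows of the underlying Moore matrix, so $\mC^{[s]}=\mG_{s,k}(\alpha_1^{[s]},\dots,\alpha_n^{[s]})$, and the $k-1$ vectors $(\alpha_1^{[si]},\dots,\alpha_n^{[si]})$ with $1\le i\le k-1$ lie in both $\mC$ and $\mC^{[s]}$. Since they form a Moore matrix (in the automorphism $x\mapsto x^{[s]}$) built on $\Fq$-linearly independent elements, with $k-1<n\le m$, they are $\Fm$-linearly independent, so $\dim(\mC\cap\mC^{[s]})\ge k-1$. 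For the reverse inequality it is enough that $\mC\neq\mC^{[s]}$: if $\mC=\mC^{[s]}$ then, as $\gcd(s,m)=1$, the code $\mC$ is fixed by the whole group $\mathrm{Gal}(\Fm/\Fq)$, hence spanned by vectors of $\Fq^n$; any nonzero such vector has rank $1$, forcing $d(\mC)=1$ and contradicting $d(\mC)=n-k+1\ge 2$ (here $k<n$ is used). Thus $\dim(\mC\cap\mC^{[s]})=k-1$.

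For the converse, suppose $\mC$ is MRD of dimension $k<n$ with $\dim(\mC\cap\mC^{[s]})=k-1$; the case $k=1$ being immediate, assume $k\ge 2$. For $j\ge 1$ put $V_j:=\mC\cap\mC^{[s]}\cap\cdots\cap\mC^{[s(j-1)]}$, so $V_1=\mC$, $\dim V_2=k-1$, and $V_{j+1}=V_j\cap V_j^{[s]}$. I would prove by induction on $j$ that $\dim V_{j+1}=k-j$ for $0\le j\le k-1$. For the lower bound, both $V_j$ and $V_j^{[s]}$ lie inside $V_{j-1}^{[s]}$ (since $V_j=V_{j-1}\cap V_{j-1}^{[s]}\subseteq V_{j-1}^{[s]}$, while $V_j\subseteq V_{j-1}$ gives $V_j^{[s]}\subseteq V_{j-1}^{[s]}$), so $\dim V_{j+1}\ge 2\dim V_j-\dim V_{j-1}$, which equals $k-j$ by the inductive hypothesis. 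For the upper bound, if $\dim V_{j+1}=\dim V_j$ then $V_j=V_j^{[s]}$; as $\gcd(s,m)=1$, the space $V_j$ is then fixed by $\mathrm{Gal}(\Fm/\Fq)$, hence spanned by vectors of $\Fq^n$, so it contains a rank-one vector --- impossible, since $V_j\subseteq\mC$, $d(\mC)\ge 2$, and $V_j\neq 0$ (as $\dim V_j=k-j+1\ge 1$). Therefore $\dim V_{j+1}=k-j$, and in particular $\dim V_k=1$.

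Let $g=(g_1,\dots,g_n)$ span $V_k$. From $g\in\mC^{[si]}$ for $0\le i\le k-1$ we get $g^{[-si]}\in\mC$ for those $i$, where $x^{[-s]}$ is the image of $x$ under the inverse of $x\mapsto x^{[s]}$; putting $\beta:=g^{[-s(k-1)]}$ we have $\{\beta,\beta^{[s]},\dots,\beta^{[s(k-1)]}\}=\{g^{[-si]}\st 0\le i\le k-1\}\subseteq\mC$ and $t:=\dim_{\Fq}\langle\beta_1,\dots,\beta_n\rangle=\dim_{\Fq}\langle g_1,\dots,g_n\rangle$. I first show $t\ge k$: the $\Fm$-span $U$ of $g,g^{[1]},\dots,g^{[m-1]}$ is spanned by vectors of $\Fq^n$ and has dimension $t$ (the classical rank formula for Moore matrices), so if $t<k$ then $\langle g^{[-si]}\st 0\le i\le k-1\rangle\subseteq U$ already has dimension $\min(k,t)=t=\dim U$, whence $U=\langle g^{[-si]}\st 0\le i\le k-1\rangle\subseteq\mC$, and $\mC$ would contain a rank-one vector, contradicting $d(\mC)\ge 2$. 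Given $t\ge k$, the vectors $\beta,\beta^{[s]},\dots,\beta^{[s(k-1)]}$ are $\Fm$-linearly independent: a nontrivial relation would exhibit $\beta_1,\dots,\beta_n$ as roots of a nonzero linearized polynomial of degree at most $k-1$ with respect to the automorphism $x\mapsto x^{[s]}$ of $\Fm$ (a generator of $\mathrm{Gal}(\Fm/\Fq)$, as $\gcd(s,m)=1$), and since $k-1<m$ its kernel has $\Fq$-dimension at most $k-1<t$, a contradiction. Hence $\mC=\langle\beta,\beta^{[s]},\dots,\beta^{[s(k-1)]}\rangle$.

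It remains to see that $\beta_1,\dots,\beta_n$ are $\Fq$-linearly independent, that is, $t=n$. A linear MRD code with $k<n$ is nondegenerate: otherwise, after an $\Fq$-linear change of coordinates (an isometry), we may assume every codeword has last coordinate $0$, and then deleting that coordinate produces an $[n-1,k]_{q^m/q}$ code of minimum rank distance still $n-k+1$, contradicting the bound $k\le(n-1)-(n-k+1)+1=k-1$. So the columns $(\beta_j,\beta_j^{[s]},\dots,\beta_j^{[s(k-1)]})^{\top}$ of the above generator matrix span an $\Fq$-space of dimension $n$; but if $\gamma_1,\dots,\gamma_t$ is an $\Fq$-basis of $\langle\beta_1,\dots,\beta_n\rangle$, then each such column lies in the $\Fq$-span of $(\gamma_l,\gamma_l^{[s]},\dots,\gamma_l^{[s(k-1)]})^{\top}$, $1\le l\le t$, so $n\le t$ and hence $t=n$. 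Therefore $\mC=\mG_{s,k}(\beta_1,\dots,\beta_n)$ is a generalized Gabidulin code with parameter $s$, as claimed. I expect the converse to be the main obstacle, and within it the two places where degeneracy must be ruled out: the strict, one-step decrease of the chain $V_1\supsetneq V_2\supsetneq\cdots$ (for which the recurring tool is that a Galois-invariant subcode of $\mC$ would contain a rank-one word, incompatible with $d(\mC)\ge 2$), and the closing nondegeneracy step that forces the evaluation points $\beta_i$ to be $\Fq$-linearly independent.
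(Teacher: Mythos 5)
The paper does not prove this statement at all: it is quoted as \cite[Theorem 4.8]{HorlemannTrautmann-Marshall}, so there is no internal proof to compare against. Judged on its own, your argument is correct and essentially reconstructs the standard proof from the literature. The ``if'' direction (shared Moore rows give $\dim\geq k-1$, and $\mC=\mC^{[s]}$ is excluded because a $\sigma$-invariant code with $\sigma$ generating $\mathrm{Gal}(\Fm/\Fq)$ descends to $\Fq^n$ and hence contains a rank-one word, contradicting $d=n-k+1\geq 2$) is fine. The converse is also sound: the identity $V_{j+1}=V_j\cap V_j^{[s]}$ together with $V_j+V_j^{[s]}\subseteq V_{j-1}^{[s]}$ gives the lower bound $\dim V_{j+1}\geq 2\dim V_j-\dim V_{j-1}$, the Galois-descent/rank-one argument gives strict decrease, so $\dim V_k=1$; the generator $g$ then yields $\beta$ with $\beta,\beta^{[s]},\dots,\beta^{[s(k-1)]}\in\mC$, the generalized Moore rank formula (equivalently the kernel bound for $\sigma$-linearized polynomials, valid because $\gcd(s,m)=1$) forces $t\geq k$ and the independence of these $k$ vectors, and your puncturing argument correctly shows an MRD code with $k<n$ is nondegenerate, which pins down $t=n$.

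Two small points you should make explicit. First, you repeatedly use facts that are standard but not trivial and deserve a citation or a one-line proof: Galois descent for subspaces invariant under a generating automorphism, and the rank $\min(j,t)$ of a generalized ($\sigma$-)Moore matrix, which is exactly what makes generalized Gabidulin codes MRD; your kernel-dimension bound for $\sigma$-polynomials of $\sigma$-degree $k-1$ is the same fact in disguise and needs $\gcd(s,m)=1$, not merely $k-1<m$. Second, the hypothesis $n\leq m$ is implicit throughout (you use $d=n-k+1$ and conclude $t=n\leq m$); it is inherent to the MRD/Gabidulin setting of the cited theorem, but it should be stated since the theorem as transcribed here does not display it.
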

\begin{lemma}[{\cite[Lemma 3.3]{neri2018genericity}}]\label{lm:rank1}
Let $\mathcal{C}\subseteq\F_{q^m}^n$  be a linear MRD code of dimension $k$ with generator matrix $ [\,I_k\,|\, X\,]$, and $s\in\{1,\dots,m-1\}$ with $\gcd(s, m) = 1$. Then, $\mathcal{C}$ is a generalized Gabidulin code with parameter $s$ if and only if $\mathrm{rk}(X^{[s]} - X) = 1$.
\end{lemma}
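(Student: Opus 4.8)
The strategy is to translate the rank condition on $X^{[s]}-X$ into a condition on $\dim(\mC\cap\mC^{[s]})$ and then to invoke the characterisation of generalized Gabidulin codes recalled above, namely \cite[Theorem 4.8]{HorlemannTrautmann-Marshall}. Everything rests on the identity
\[\dim(\mC\cap\mC^{[s]})=k-\rk\bigl(X^{[s]}-X\bigr),\]
which holds for any code of dimension $k$ given in systematic form $[\,I_k\,|\,X\,]$, the MRD hypothesis entering only at the very end.

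To prove the identity, I would first note that the entries of $I_k$ belong to $\Fq$ and are therefore fixed by $x\mapsto x^{[s]}$, so that $\mC^{[s]}$ has generator matrix $[\,I_k\,|\,X^{[s]}\,]$ and in particular $\dim\mC^{[s]}=k$. Hence $\mC+\mC^{[s]}$ is the $\Fm$-row space of
\[\begin{pmatrix} I_k & X\\ I_k & X^{[s]}\end{pmatrix},\]
and subtracting the first $k$ rows from the last $k$ yields
\[\begin{pmatrix} I_k & X\\ 0 & X^{[s]}-X\end{pmatrix},\]
a matrix of rank $k+\rk(X^{[s]}-X)$; thus $\dim(\mC+\mC^{[s]})=k+\rk(X^{[s]}-X)$. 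Grassmann's formula then gives
\[\dim(\mC\cap\mC^{[s]})=\dim\mC+\dim\mC^{[s]}-\dim(\mC+\mC^{[s]})=k-\rk(X^{[s]}-X).\]

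It then remains to combine this identity with \cite[Theorem 4.8]{HorlemannTrautmann-Marshall} applied to $\mC$: the code $\mC$ is a generalized Gabidulin code with parameter $s$ if and only if $\dim(\mC\cap\mC^{[s]})=k-1$, and by the identity this happens if and only if $\rk(X^{[s]}-X)=1$.

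I do not expect a genuine obstacle: once \cite[Theorem 4.8]{HorlemannTrautmann-Marshall} is available, the argument is a short linear-algebra computation. The only points meriting a line of care are that $\mC^{[s]}$ has the announced systematic generator matrix (whence $\dim\mC^{[s]}=k$), that the block row reduction is legitimate over $\Fm$, that the cited theorem tacitly requires $k<n$ (for $k=n$ the code is the whole space and the statement degenerates), and the small remark that, $\mC$ being MRD, every entry of $X$ lies in $\Fm\setminus\Fq$, so that $X^{[s]}\neq X$ and $\rk(X^{[s]}-X)\ge 1$ automatically — the real dichotomy being between $\rk(X^{[s]}-X)=1$ and $\rk(X^{[s]}-X)\ge 2$.
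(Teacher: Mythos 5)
Your proof is correct: the identity $\dim(\mC\cap\mC^{[s]})=k-\rk(X^{[s]}-X)$ follows exactly as you compute, and combining it with the characterization $\dim(\mC\cap\mC^{[s]})=k-1$ of generalized Gabidulin codes (Theorem 4.8 of Horlemann-Trautmann and Marshall, recalled just above the lemma) gives the statement. Note, however, that the paper itself offers no proof to compare against: the lemma is imported verbatim as \cite[Lemma 3.3]{neri2018genericity}, so your argument is essentially a reconstruction of the standard derivation from the cited characterization rather than an alternative to anything in this paper. Your side remarks are also well placed, in particular that the characterization requires $k<n$ and that the MRD hypothesis (all entries of $X$ in $\F_{q^m}\setminus\F_q$, together with $\gcd(s,m)=1$) guarantees $\rk(X^{[s]}-X)\ge 1$, so the content of the equivalence is ruling out rank at least $2$.
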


\noindent Following the notation introduced in~\cite{CouvreurZappatore}, for a positive integer $i$, we will denote the $i$-th $q$-sum of a linear rank-metric code $\mC\subseteq\Fmn$ by \[\Lambda_i(\mC):=\mC+\mC^{[1]}+\dots+\mC^{[i]}.\]
As firstly introduced by Overbeck in his attack for the Gabidulin version of McEliece cryptosystem
(GPT)~\cite{Overbeck}, if the degree of the extension field $m$ is large enough, then the first $q$-sum distinguishes a Gabidulin code from a random one. 
\begin{theorem}[{\cite[Proposition 3]{couvreur2014distinguisher},\cite[Proposition 1]{coggia2020security}}]\label{theorem:randomqsum}
    Let $\mG_k,\mC\subseteq\Fmn$ respectively be a Gabidulin and a random $k$-dimensional code. Then, for any positive integer $i<n-k$, it holds that \[\dim\big(\Lambda_i(\mG_k)\big)=k+i<k(i+1)=\dim\big(\Lambda_i(\mC)\big),\] where for any nonnegative integer $a$, $\Pr\big(\dim\big(\Lambda_i(\mC)\big)\leq k(i+1) - a\big) = \mO(q^{-ma})$. 
\end{theorem}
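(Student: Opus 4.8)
The plan is to treat $\mG_k$ and $\mC$ separately; only the random code needs a probabilistic input.

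\emph{The Gabidulin code.} I would identify $\Lambda_i(\mG_k)$ with the row space of a Moore matrix and compute its rank. Set $M_r:=\bigl(\alpha_\ell^{[e]}\bigr)_{0\le e\le r-1,\ 1\le \ell\le n}$, so that $M_k$ generates $\mG_k$. Since $x\mapsto x^{[1]}$ permutes $\Fm$, the code $\mG_k^{[t]}$ is generated by $\bigl(\alpha_\ell^{[e]}\bigr)_{t\le e\le t+k-1,\ 1\le\ell\le n}$, and as $t$ runs through $0,\dots,i$ these exponent blocks glue into the single interval $\{0,1,\dots,k+i-1\}$; hence $\Lambda_i(\mG_k)$ is generated by $M_{k+i}$. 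A nonzero element of the left kernel of $M_{k+i}$ would yield a nonzero linearized polynomial of $q$-degree at most $k+i-1$ vanishing on $\langle\alpha_1,\dots,\alpha_n\rangle_{\F_q}$, a set of cardinality $q^{n}>q^{k+i-1}$ since $i<n-k$; this is impossible, so $\rk_\Fm M_{k+i}=k+i$ and $\dim\Lambda_i(\mG_k)=k+i$. The strict inequality $k+i<k(i+1)$ then holds precisely when $k\ge 2$ (for $k=1$ both sides equal $i+1$), which I would flag as a tacit hypothesis.

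\emph{The random code: reduction to a rank bound.} I would take a systematic generator matrix $G=[\,I_k\mid X\,]$ of $\mC$ with $X$ uniformly distributed over $\Fm^{k\times(n-k)}$, so that $G^{[t]}=[\,I_k\mid X^{[t]}\,]$. Subtracting the first block row of the stacked generator matrix of $\Lambda_i(\mC)$ from each of the others turns it into a block upper triangular matrix with $I_k$ in the top-left corner and, below it, the $ik\times(n-k)$ matrix $M$ obtained by stacking $X^{[t]}-X$ for $t=1,\dots,i$. Hence $\dim\Lambda_i(\mC)=k+\rk_\Fm M$, and the claim reduces to $\Pr\bigl(\rk_\Fm M\le\min(ik,n-k)-a\bigr)=\mO(q^{-ma})$; in the range $k(i+1)\le n$ this gives $\dim\Lambda_i(\mC)=k(i+1)$ exactly (and $\min(k(i+1),n)$ in general).

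\emph{The random code: the probabilistic estimate.} For the rank bound the plan is a union bound over subspaces, the crucial algebraic input being that a nonzero linearized polynomial of $q$-degree at most $i$ has $\F_q$-kernel of dimension at most $i$, hence image of $\F_q$-codimension at most $i$. A vector $\lambda=(\lambda_1,\dots,\lambda_i)$ in the left kernel of $M$ corresponds to a $k$-tuple of linearized polynomials $p_s(T)=\sum_{t=1}^i(\lambda_t)_s\bigl(T^{[t]}-T\bigr)$, of $q$-degree at most $i$ and vanishing on $\F_q$, such that $\sum_s p_s(x_{s\ell})=0$ for every column index $\ell\le n-k$. Because the columns of $X$ are i.i.d.\ uniform on $\Fm^k$, for a fixed $a$-dimensional $\Fm$-subspace $\Lambda$ of such left-kernel candidates, with basis giving the tuples $(p^{(c)}_s)_s$ $(c=1,\dots,a)$, the probability that all of $\Lambda$ lies in the left kernel of $M$ equals $q^{-\rho(\Lambda)(n-k)}$, where $\rho(\Lambda)$ is the $\F_q$-rank of the $\F_q$-linear map $v\mapsto\bigl(\sum_s p^{(c)}_s(v_s)\bigr)_{c}\colon\Fm^k\to\Fm^a$. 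The $q$-degree bound forces $\rho(\Lambda)\ge m-i>0$, and $\rho(\Lambda)$ attains its maximal value for generic $\Lambda$; summing $q^{-\rho(\Lambda)(n-k)}$ over the $\binom{ik}{a}_{q^m}\sim q^{ma(ik-a)}$ subspaces $\Lambda$, the contribution of the generic ones is $\mO(q^{-ma})$ precisely because $k(i+1)\le n$.

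\emph{Main obstacle.} I expect the real difficulty to be in the last step, and specifically in the \emph{degenerate} subspaces $\Lambda$ for which $\rho(\Lambda)$ is much smaller than its generic value — it can be as low as $m-i$ independently of $a$, for instance when all the $p^{(c)}_s$ are supported on a single coordinate $s$ — since for these the naive term $q^{-\rho(\Lambda)(n-k)}$ is too large. The remedy is to stratify the union bound by the value of $\rho(\Lambda)$ and bound the number of subspaces in each stratum, using that a linearized polynomial of $q$-degree $\le i$ has at most $q^{i}$ roots so that configurations in which the $p^{(c)}_s$ have many common zeros are comparatively rare. This is the technical heart of the statement, and it is where the hypothesis $i<n-k$ (keeping the $q$-degrees small, since $i<n\le m$) together with the standing assumption that $m$ is large relative to $n,k,i$ — tacit in the $\mO$-notation, exactly as in Overbeck's original distinguisher — are genuinely used. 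The Gabidulin side needs no probabilistic input at all.
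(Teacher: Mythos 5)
A preliminary remark: the paper offers no proof of Theorem~\ref{theorem:randomqsum} — it is quoted verbatim from \cite{couvreur2014distinguisher,coggia2020security} — so your proposal must stand on its own. Its deterministic parts do: identifying $\Lambda_i(\mG_k)$ with the row space of the Moore matrix with exponents $0,\dots,k+i-1$ and killing the left kernel via a nonzero linearized polynomial of $q$-degree at most $k+i-1$ that would vanish on a set of size $q^n$ is exactly the standard argument and is complete (and your flags that strictness of $k+i<k(i+1)$ needs $k\ge 2$, and that $\dim\Lambda_i(\mC)=k(i+1)$ tacitly needs $k(i+1)\le n$, are fair). The reduction $\dim\Lambda_i(\mC)=k+\rk(M)$, with $M$ the stack of $X^{[t]}-X$ for $t=1,\dots,i$, is also correct.

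The genuine gap is in the probabilistic estimate, and it is precisely the part you yourself defer. You reduce $\Pr\big(\rk(M)\le ik-a\big)$ to a union bound over $a$-dimensional $\Fm$-subspaces $\Lambda$ of left-kernel candidates, with individual terms $q^{-\rho(\Lambda)(n-k)}$, but you neither prove that generic $\Lambda$ attain the maximal $\rho$ nor carry out the stratification over degenerate $\Lambda$; you only assert that bad configurations are ``comparatively rare.'' This is not a routine verification, and the crude bounds you do state are insufficient: for subspaces supported on a single coordinate $s$ one only gets $\rho(\Lambda)\ge m-i$, there are on the order of $q^{ma(i-a)}$ such subspaces, and $q^{ma(i-a)}\cdot q^{-(m-i)(n-k)}$ is \emph{not} $\mO(q^{-ma})$ in general (e.g.\ already for $i=n-k-1$ and $a\approx i/2$ the exponent has the wrong sign for large $m$), so one genuinely needs a quantitative trade-off between $\rho(\Lambda)$ and the number of subspaces realizing that value — which is the actual content of the cited propositions. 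As written, therefore, the claim $\Pr\big(\dim\Lambda_i(\mC)\le k(i+1)-a\big)=\mO(q^{-ma})$ is announced but not established; the Gabidulin half and the rank reduction are proved, the heart of the random-code half is not.
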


\noindent Further investigations and generalizations of these results can be found, e.g., in~\cite{giuzzi2019identifiers, HorlemannTrautmann-Marshall, neri2018genericity, CouvreurZappatore, couvreur2014distinguisher, hormann2022distinguishing}. \medskip

\subsection{Schur product in the Hamming metric}

We denote by $\ast$ the standard component-wise product in $\F_{q^m}^n$, i.e., for $v=(v_1,\ldots,v_n),\,w=(w_1,\ldots,w_n)\in\F_{q^m}^n$, \[v\ast w:=(v_1w_1,\ldots,v_nw_n).\]
Let $\mC_1,\mC_2\subseteq\F_{q^m}^n$ be two linear codes. The \emph{Schur product} $\star$ of $\mC_1$ and $\mC_2$ is defined as \[\mC_1\star\mC_2:=\langle c_1\ast c_2\st c_1\in\mC_1,c_2\in\mC_2\rangle_{\F_{q^m}}.\]    
It is not hard to see that the set of linear codes contained in $\F_{q^m}^n$, together with the operations $+$ and $\star$, is a commutative semiring ordered under inclusion. \\ Let $\mC\subseteq\F_{q^m}^n$ be a linear code and let $\mC^{(0)}=\langle(1,\dots,1)\rangle_{\F_{q^m}}$. For $i\geq1$, the \emph{$i$-th Schur power} of $\mC$ is
    \[\mC^{(i)} := \mC\star\mC^{(i-1)}= \underbrace{\,\mC\star\cdots\star\mC\,}_{i \text{ times}}.\]
The Schur product between linear codes has been largely studied, for instance in~\cite{randriambololona2013asymptotically,randriambololona2013upper,Randriambololona, Mirandola-Zemor, Cascudo-Cramer-Mirandola-Zemor, couvreur2014distinguisher}. In particular, it is well-known that the dimension of the Schur square distinguishes algebraic structured linear codes, such as Reed-Solomon codes, from random ones (see~\cite[Corollary 27]{Mirandola-Zemor}). \\ In this context, we are particularly interested in the following definition.
\begin{definition}\label{definition:dimseq}
    Let $\mC\subseteq \F_{q^m}^n$ be a linear code. The sequence of integers $$\dim\big(\mC^{(i)}\big)\text{ for } i\geq0$$ is called the \emph{dimension sequence}, or the \emph{Hilbert sequence}, of $\mC$. The \emph{Castelnuovo-Mumford regularity} of $\mC$ is the smallest integer $r=r(\mC)\geq0$ such that, for every $t\geq r$, $$\dim\big(\mC^{(t)}\big)=\dim\big(\mC^{(r)}\big).$$
\end{definition}

\noindent Notice that the Castelnuovo-Mumford regularity is well-defined. It is easy to show that the dimension sequence of a nonzero linear code is nondecreasing, therefore it stabilizes after a finite number of steps~\cite[Proposition 11]{randriambololona2013asymptotically}. The terms \emph{Hilbert sequence} and \emph{Castelnuovo-Mumford regularity} are borrowed from commutative algebra, where analogous concepts are defined. \bigskip

\noindent Let $\PG(k-1,q)$ be the projective geometry over $\Fq^k$ defined as $$\PG(k-1,q):=\big(\Fq^k \setminus \{0\}\big) / _\sim ,$$ where $\sim$ denotes the proportionality relation such that, for $u,v\in\Fq^k$, $u \sim v$ if and only if $u = \lambda v$ for some $\lambda \in \Fq^*$. 
\begin{definition}
 Let $\mC\subseteq \F_{q^m}^n$ be a linear code with generator matrix $G=[\ g_1\mid g_2 \mid \cdots \mid g_n\ ]$, where each column $g_i$ is nonzero for all $i\in \{1,\ldots,n\}$. Let 
    \begin{equation*}
    \Pi_{G}:=\{\langle g_1^{\T}\rangle_{q^m},\dots,\langle g_n^{\T}\rangle_{q^m} \}\subseteq \PG(k-1,q^m),
\end{equation*}
    where $\langle g_i^{\T}\rangle_{q^m}$ is the projective point associated to $g_i^{\T}$. This is a {\em set of projective points} (without multiplicities) {\em associated to $\mC$ via $G$}. 
\end{definition}
We recall that the \emph{homogeneous coordinate ring} of a projective variety $V\subseteq \mathrm{PG}(k-1,q)$ is the quotient ring $\F_q[x_1,\dots,x_k]/\mI(V)$, where $\mI(V)$ is the homogeneous ideal of polynomials vanishing on $V$. The \emph{Hilbert function} of $V$ is given by $\mathrm{HF}_V(t):=\dim_{\F_q}(\F_q[x_1,\dots,x_k]/\mI(V))_t$, for $t\geq 0$. There exists a unique polynomial, called the Hilbert polynomial, which coincides with the Hilbert function for sufficiently large $t$. The smallest integer for which the Hilbert function agrees with the Hilbert polynomial is called \emph{index of regularity}. Since the index of regularity and the Castelnuovo-Mumford regularity coincide in our setting, we will adopt the index of regularity as the definition of Castelnuovo-Mumford regularity to avoid unnecessary complexity. For further details we refer to~\cite{eisenbud2005second}. The close connection between these objects is clarified in the following proposition.
\begin{proposition}[{\cite[Proposition 1.28]{Randriambololona}}]\label{proposition:dimseqalg}
    Let $\mC\subseteq \F_{q^m}^n$ be a linear code with generator matrix $G$ and let $\Pi_{G}\subseteq \PG(k-1,q^m)$ be defined as above. Then,
    \begin{enumerate}
        \item the dimension sequence of $\mC$ is equal to the Hilbert function of $\Pi_G$,
        \item the Castelnuovo-Mumford regularity $r(\mC)$ of $\mC$ is equal to the Castelnuovo-Mumford regularity of $\Pi_G$,
        \item $\dim\left(\mC^{(t)}\right)=\lvert \Pi_G\rvert$, for all $t\geq r(\mC)$,
    \end{enumerate}
    where we define the Hilbert function and the Castelnuovo-Mumford regularity of $\Pi_G$ as those of its homogeneous coordinate ring.
\end{proposition}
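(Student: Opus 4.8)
The plan is to realize the Schur powers of $\mC$ as the images of a single graded evaluation homomorphism out of $\Fm[x_1,\dots,x_k]$, and then to read off (1)--(3) from rank--nullity together with the monotonicity of the dimension sequence recalled above. Write $G=[\,g_1\mid\cdots\mid g_n\,]$ and let $r_1,\dots,r_k\in\Fm^n$ be the rows of $G$, so $\mC=\langle r_1,\dots,r_k\rangle_{\Fm}$. For each $t\ge 0$, regarding each column $g_j$ as a vector of $\Fm^k$, I consider the $\Fm$-linear map
\[
\mathrm{ev}_t\colon\ \Fm[x_1,\dots,x_k]_t\ \longrightarrow\ \Fm^n,\qquad f\ \longmapsto\ \bigl(f(g_1),\dots,f(g_n)\bigr).
\]
Since $x_i\mapsto r_i$, a degree-$t$ monomial $x_{i_1}\cdots x_{i_t}$ is sent to the componentwise product $r_{i_1}\ast\cdots\ast r_{i_t}$, and such products span $\mC^{(t)}$ by definition of the Schur power; hence $\operatorname{im}(\mathrm{ev}_t)=\mC^{(t)}$, the case $t=0$ matching the convention $\mC^{(0)}=\langle(1,\dots,1)\rangle_{\Fm}$.

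Next I identify the kernel. A homogeneous degree-$t$ form $f$ lies in $\ker(\mathrm{ev}_t)$ exactly when $f(g_j)=0$ for all $j$; since a homogeneous form vanishes at a nonzero vector if and only if it vanishes at every nonzero scalar multiple of it, this condition depends only on the projective points $\langle g_j^{\T}\rangle_{q^m}$, so $\ker(\mathrm{ev}_t)=\mI(\Pi_G)_t$. Rank--nullity then yields
\[
\binom{k+t-1}{t}\ =\ \dim_{\Fm}\mI(\Pi_G)_t\ +\ \dim_{\Fm}\mC^{(t)},
\]
and since $\mathrm{HF}_{\Pi_G}(t)=\binom{k+t-1}{t}-\dim_{\Fm}\mI(\Pi_G)_t$, we conclude $\mathrm{HF}_{\Pi_G}(t)=\dim_{\Fm}\mC^{(t)}$ for all $t\ge0$, which is (1).

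For (2) and (3) I compare how the two sequences stabilize. Partitioning the column index set $\{1,\dots,n\}$ into the $\lvert\Pi_G\rvert$ proportionality classes of the columns, homogeneity forces the entries of $\mathrm{ev}_t(f)$ indexed by a fixed class to be prescribed scalar multiples of one of them; hence $\mC^{(t)}$ lies in a fixed subspace of $\Fm^n$ of dimension $\lvert\Pi_G\rvert$, so $\dim_{\Fm}\mC^{(t)}\le\lvert\Pi_G\rvert$. Being nondecreasing and bounded, the dimension sequence is eventually constant, and its stable value equals $\lvert\Pi_G\rvert$ since the Hilbert polynomial of a set of $\lvert\Pi_G\rvert$ reduced points is the constant $\lvert\Pi_G\rvert$. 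By monotonicity, $r(\mC)=\min\{r:\dim_{\Fm}\mC^{(r)}=\lvert\Pi_G\rvert\}$, which by (1) equals $\min\{r:\mathrm{HF}_{\Pi_G}(r)=\lvert\Pi_G\rvert\}$, i.e.\ the index of regularity of $\Pi_G$; this proves (2), and then $\dim_{\Fm}\mC^{(t)}=\lvert\Pi_G\rvert$ for all $t\ge r(\mC)$ is exactly (3). The one point that requires care --- and the main, if modest, obstacle --- is keeping the multiset of columns of $G$ apart from the reduced set $\Pi_G$: the evaluation map lands in $\Fm^n$, yet both the kernel identification and the stable value must emerge as invariants of $\Pi_G$ (its vanishing ideal and its cardinality), which is precisely what homogeneity and the passage to proportionality classes ensure.
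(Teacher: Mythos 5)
Your proof is correct. Note that the paper itself does not prove this proposition --- it is imported by citation from Proposition 1.28 of the cited reference --- and your evaluation-map argument is exactly the standard proof behind that citation: identifying $\operatorname{im}(\mathrm{ev}_t)=\mC^{(t)}$ and $\ker(\mathrm{ev}_t)=\mI(\Pi_G)_t$ and applying rank--nullity gives (1), and your deduction of (2)--(3) from the monotonicity of the dimension sequence together with the stabilization of the Hilbert function of a finite reduced point set at its cardinality is sound. The only ingredient you invoke rather than prove is that last stabilization fact; if you want the argument self-contained, observe that for $t\ge\lvert\Pi_G\rvert-1$ evaluation of degree-$t$ forms at the $\lvert\Pi_G\rvert$ distinct points is surjective (for each point take a product of linear forms vanishing at the other points but not at it, padded by a power of a linear form nonvanishing at it), which gives the stable value $\lvert\Pi_G\rvert$ directly.
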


\begin{remark}\label{remark:dimseq}
    The dimension sequence is well-defined. Indeed, the Hilbert function of an ideal is invariant under projective automorphisms. This implies that the Hilbert function of the homogeneous coordinate ring of $\Pi_{G}$ does not depend on the choice of the generator matrix $G$. \\ Furthermore, the dimension of $\mC^{(i)}$ is invariant under monomial equivalence: if $\mC_1$ and $\mC_2$ are two linear codes with generator matrices $G_1$ and $G_2$, where $G_1=G_2M$ for a monomial matrix $M$, it is trivial to conclude that $\Pi_{G_1}=\Pi_{G_2}$. 
\end{remark}\medskip

\subsection{The associated Hamming-metric code}
In order to generalize the previous concept to rank-metric codes, we need to introduce the notion of Hamming-metric code associated to a rank-metric one, as well as the geometric interpretation of rank-metric codes. \\ Let $\mC$ be a nondegenerate $[n,k,d]_{q^m/q}$ rank-metric code. Let $G$ be a generator matrix of $\mC$ and $g_1,\dots,g_n\in\F_{q^m}^k$ be the columns of $G$. The $\F_q$-vector space $$\mU_G:=\langle g_1,\ldots,g_n\rangle_{\F_q}$$ has $\F_q$-dimension $n$ and $\langle \mU_G \rangle_{\F_{q^m}}=\F_{q^m}^k$. Thus, $\mU_G$ is naturally called an $[n,k]_{q^m/q}$-\emph{system} associated to the code $\mC$. Two $[n,k]_{q^m/q}$-systems $\mU$ and $\mU'$ are said to be \emph{equivalent} if there is an $\F_{q^m}$-isomorphism $\varphi:\Fm^k \rightarrow \Fm^k$ such that $\varphi(\mU)=\mU'$. Clearly, if $G$ and $G'$ are two generator matrices of the same code, $\mU_G$ and $\mU_{G'}$ are equivalent. As a consequence, with a little abuse of notation, we may drop the index and simply talk about the $[n,k]_{q^m/q}$-system $\mU$ associated to $\mC$. It is also straightforward to see that equivalent codes are associated to equivalent systems (for more details, see the appendix of~\cite{Alfarano-Borello}). \\
Let now $\sim_{\F_q}$ be the proportionality relation over $\F_{q^m}^k$ such that, for $u,v\in\F_{q^m}^k$, $u\sim_{\F_q} v$ if and only if $u=\lambda v$ for some $\lambda \in \Fq^*$.
\begin{definition}
    Let $\mU$ be the $[n,k]_{q^m/q}$-system associated to the nondegenerate  $[n,k,d]_{q^m/q}$ rank-metric code $\mC$. Let $G^{\rm H}(\mU)\in\F_{q^m}^{k\times N}$ be a matrix whose columns form a set of representatives for  the set of equivalence classes $\big(\mU\setminus \{0\}\big)/\sim_{\F_q}$. We will denote $G^{\rm H}(\mU)$ as an \emph{extended $($generator$)$ matrix} of $\mC   $. Moreover, let $\mC^{\rm H}$ be the $[N,k]_{q^m}$ code generated by $G^{\rm H}(\mU)$, where $N=(q^n-1)/(q-1)$. Then, we will denote $\mC^{\rm H}$ as a \emph{Hamming-metric code associated} to $\mC$. 
\end{definition}

\begin{proposition} \label{prop: uniqueness of associated code}
    Let $\mC$ be a nondegenerate $[n,k,d]_{q^m/q}$ rank-metric code. Then, its associated Hamming-metric code $\mC^{\rm H}$ is unique up to columns permutation and right multiplication by a diagonal matrix with entries in $\F_q^*$.
\end{proposition}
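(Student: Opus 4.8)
The plan is to reduce the statement to tracking the two independent sources of ambiguity in the construction of $\mC^{\rm H}$: the choice of a generator matrix $G$ of $\mC$ (equivalently, the choice of the system $\mU_G$ inside its $\Fm$-equivalence class) and the choice of a family of representatives, together with its ordering, for the classes $(\mU_G\setminus\{0\})/\sim_{\Fq}$. I would fix one extended matrix $G^{\rm H}_1=G^{\rm H}(\mU_G)$, built from a generator matrix $G$ of $\mC$ and from chosen representatives $v_1,\dots,v_N$ with $N=(q^n-1)/(q-1)$ (this is the number of one-dimensional $\Fq$-subspaces of the $n$-dimensional $\Fq$-space $\mU_G$; nondegeneracy is exactly what ensures $\dim_{\Fq}\mU_G=n$, and the same argument shows $N$ does not depend on $G$), and an arbitrary second one $G^{\rm H}_2=G^{\rm H}(\mU_{G'})$ coming from another generator matrix $G'$ and representatives $w_1,\dots,w_N$. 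The goal is then to produce a permutation matrix $P$ and a diagonal matrix $D$ with entries in $\Fq^*$ such that $G^{\rm H}_2$ and $G^{\rm H}_1\,PD$ generate the same code.

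The next step is to write $G'=SG$ for a unique $S\in\GL_k(\Fm)$, so that the columns of $G'$ are the $Sg_i$ and hence $\mU_{G'}=S(\mU_G)$. The one genuinely needed observation is that an $\Fm$-linear automorphism of $\Fm^k$ preserves $\Fq$-proportionality: for $u,w\in\Fm^k$ one has $u\sim_{\Fq}w$ if and only if $Su\sim_{\Fq}Sw$, simply because $\Fq\subseteq\Fm$ makes $S$ in particular an $\Fq$-linear isomorphism. Therefore $S$ induces a bijection from $(\mU_G\setminus\{0\})/\sim_{\Fq}$ onto $(\mU_{G'}\setminus\{0\})/\sim_{\Fq}$, so $Sv_1,\dots,Sv_N$ is itself a legitimate system of representatives for the latter. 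Consequently every system of representatives with every ordering — in particular the one given by $w_1,\dots,w_N$ — is of the form $w_i=\lambda_i\,Sv_{\sigma(i)}$ for some permutation $\sigma$ of $\{1,\dots,N\}$ and scalars $\lambda_i\in\Fq^*$. In matrix terms this reads $G^{\rm H}_2=S\,G^{\rm H}_1\,PD$, where $P$ is the permutation matrix of $\sigma$ and $D=\operatorname{diag}(\lambda_1,\dots,\lambda_N)$ has entries in $\Fq^*$.

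Finally, since $S\in\GL_k(\Fm)$ acts by left multiplication it only changes a basis of the row space, so $G^{\rm H}_2$ and $G^{\rm H}_1\,PD$ generate the same code; that is, the code generated by $G^{\rm H}_2$ is obtained from $\mC^{\rm H}$ (presented by $G^{\rm H}_1$) by the column permutation $P$ followed by right multiplication by the $\Fq^*$-diagonal matrix $D$, which is the claim. I do not expect a real obstacle: the content is bookkeeping, and the only point requiring a moment's care is to keep the rescaling in $\Fq^*$ rather than $\Fm^*$ — this is forced because the classes are taken modulo $\sim_{\Fq}$ and not modulo full $\Fm^*$-proportionality — together with the remark that the number $N$ of columns is an invariant of $\mC$ by nondegeneracy, so the two extended matrices are genuinely comparable.
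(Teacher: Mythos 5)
Your proof is correct and rests on the same key observation as the paper's: two systems of representatives of $(\mU\setminus\{0\})/\sim_{\Fq}$ agree up to a permutation of indices and scalars in $\Fq^*$, which translates into right multiplication of the extended matrix by a permutation matrix and a diagonal matrix with entries in $\Fq^*$. The only difference is that you additionally make explicit the dependence on the choice of the generator matrix through the change-of-basis matrix $S\in\GL_k(\Fm)$ (observing that $S$ preserves $\sim_{\Fq}$, hence carries representative systems to representative systems, and that left multiplication by $S$ does not alter the row space), whereas the paper's proof fixes the system $\mU$ and varies only the representatives, leaving that point implicit; this makes your write-up slightly more complete but does not change the substance of the argument.
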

\begin{proof}
    Let $\mU$ be the $[n,k]_{q^m/q}$-system associated to $\mC$. Consider two matrices $G_1^{\rm H}(\mU),G_2^{\rm H}(\mU)\in\F_{q^m}^{k\times N}$ whose columns form two sets of representatives $\{g_{1,1},\ldots,g_{1,N}\}$ and $\{g_{2,1},\ldots,g_{2,N}\}$ for $\big(\mU\setminus \{0\}\big)/\sim_{\F_q}$. 
    By construction, these two sets are such that $g_{1,1} = \lambda_1 g_{2,{i_1}}, \dots, g_{1,N} = \lambda_N g_{2,{i_N}}$ for some indices permutation $\pi(1,\dots,N)=(i_1,\dots,i_N)$  and some $\lambda_1,\dots,\lambda_N\in\Fq^*$. Hence, $$G_1^{\rm H}(\mU)=
    \begin{pmatrix}
        g_{1,1} | \dots | g_{1,N}
    \end{pmatrix}=
    \begin{pmatrix}
        g_{2,{i_1}} | \dots | g_{2,{i_N}}
    \end{pmatrix}
     \begin{pmatrix}
        \lambda_1 & &\\
        & \ddots & \\
        & & \lambda_N
    \end{pmatrix}.
    $$
    In conclusion, the two Hamming-metric codes $\mC_1^{\rm H}= {\rm rowsp}\big(G_1^{\rm H}(\mU)\big)$ and $\mC_2^{\rm H}={\rm rowsp}\big(G_2^{\rm H}(\mU)\big)$ associated to $\mC$ satisfy the desired uniqueness condition.
\end{proof}
As a consequence, it makes sense again, with a little abuse of notation, to talk about the Hamming-metric code associated to $\mC$. For all the details about this object, see~\cite[Section 4.2.]{Alfarano-Borello}.

\medskip

\subsection{The geometry of rank-metric codes}
The associated Hamming-metric code is closely related to some geometric objects called \emph{linear sets}. These objects were introduced by Lunardon in~\cite{lunardon1999normal} to construct blocking sets and have been the subject of intense research in recent years. A detailed treatment of linear sets can be found in~\cite{polverino2010linear}.

\begin{definition}
Let $\mU$ be an $\Fmk$-system. The $\F_q$-\emph{linear set} of rank $n$ associated with $\mU$ is $$L_\mU:=\big\{\langle u^{\T} \rangle_{\F_{q^m}} \st  u\in \mU\setminus\{0\}\big\}\subseteq \PG(k-1, q^m).$$ 
Two linear sets are said to be \emph{equivalent} if their systems are equivalent.
\end{definition}

\begin{remark} 
The original definition of a linear set does not require that $\langle \mU\rangle_{\Fm} = \Fm^k$. If $\dim_{\Fm}(\langle \mU\rangle_{\Fm}) = h < k$, we can assume, without loss of generality, that $\mU \subseteq \F_{q^m}^h$. In this case, we can then study $L_\mU$ within $\PG(h-1,q^m)$. 
\end{remark}

\begin{definition}
    Let $W$ be an $\Fm$-linear subspace of $\Fm^k$ and $\Lambda = \PG(W, \Fm)$ be the pro\-jec\-tive sub\-space corresponding to $W$. The {\em weight of $\Lambda$ in $L_\mU$} is defined as \[{\rm w}_\mU(\Lambda) := \dim_{\Fq}(\mU\cap W).\]
\end{definition}
\noindent It is immediate to see that the linear set associated with any $\Fmk$-system $\mU$ satisfies \[|\,L_\mU\,|\leq\frac{q^n-1}{q-1}.\] A linear set $L_\mU$ matching with equality the previous bound is said to be {\em scattered}. Equivalently, $L_\mU$ is scattered if and only if ${\rm w}_\mU(P) = 1$ for all $P\in L_\mU$. Scattered linear sets can be further generalized via the notion introduced in \cite{sheekey2020rank} of {\em scattered with respect to $h$-dimensional subspaces} or {\em $h$-scattered}. In particular, a linear set $L_\mU\subseteq\PG(k-1,q^m)$ is said to be {\em scattered with respect to hyperplanes} if ${\rm w}_\mU(\mathcal{H}) \leq k-1$, for every hyperplane $\mathcal{H}$ of $\PG(k-1,q^m)$.

\medskip

\noindent The linear set $L_\mU$ is the \emph{geometric object associated} to the rank-metric code $\mC$ whose generator matrix is $G$. Notice that changing the generator matrix trivially results in equivalent linear sets. Moreover, equivalent rank-metric codes correspond to equivalent associated linear sets. The rank metric can be deduced from linear sets by examining their intersections with hyperplanes (taking into account weights of points, see for example~\cite{Alfarano-Borello}). In our context, the following remark is particularly important.

\begin{remark}\label{remark:linearset}
Let $\mC$ be a $[n,k]_{q^m/q}$ code, $G$ be a generator matrix of $\mC$ and $\mC^{\rm H}$ be the Hamming-metric code associated to $\mC$, with generator matrix $G^{\rm H}$. Then, 
$$L_{\mU_G}=\Pi_{G^{\rm H}}.$$ 
\end{remark}

\begin{example}\label{ex:basic}
    Let $q=2, n=m=3$ and $k=2$. Consider $\F_8=\F_2(\alpha)$, where $\alpha^3+\alpha+1=0$. Let $\mC$ be the $[3,2]_{8/2}$ code generated by \[G=\begin{pmatrix}
        1 & 0 & \alpha \\ 0 & 1 & 1
    \end{pmatrix}.\] Then, the $[3,2]_{8/2}$-system spanned by the columns of $G$ is \[{\mU_G}=\left\{\begin{pmatrix}0 \\ 0\end{pmatrix},\begin{pmatrix}1 \\ 0\end{pmatrix}, \begin{pmatrix}0 \\ 1\end{pmatrix}, \begin{pmatrix}\alpha \\ 1\end{pmatrix}, \begin{pmatrix}1 \\ 1\end{pmatrix}, \begin{pmatrix}\alpha+1 \\ 1\end{pmatrix}, \begin{pmatrix}\alpha \\ 0\end{pmatrix}, \begin{pmatrix}\alpha+1 \\ 0\end{pmatrix}\right\}\]
    and the set of its equivalence classes with respect to $\sim_{\F_2}$ is given by
    \[({\mU_G}\setminus\{0\})/\sim_{\F_2}=\left\{\begin{bmatrix}1 \\ 0\end{bmatrix}, \begin{bmatrix}0 \\ 1\end{bmatrix}, \begin{bmatrix}\alpha \\ 1\end{bmatrix}, \begin{bmatrix}1 \\ 1\end{bmatrix}, \begin{bmatrix}\alpha+1 \\ 1\end{bmatrix}, \begin{bmatrix}\alpha \\ 0\end{bmatrix}, \begin{bmatrix}\alpha+1 \\ 0\end{bmatrix}\right\}.\]
    
    \noindent Hence, the extended generator matrix of $\mC$ is \[G^{\rm H}=\begin{pmatrix}
        1 & 0 & \alpha & 1 & \alpha+1 & \alpha & \alpha+1 \\ 0 & 1 & 1 & 1 & 1 & 0 & 0
    \end{pmatrix},\]
    while its associated geometric object is the linear set \[L_{\mU_G}=\left\{ (1:0), (0:1),(1:1), (\alpha:1), (\alpha+1:1)\right\}\subseteq\PG(1, 8).\] It can then be immediately seen that $$L_{\mU_G}=\Pi_{G^{\rm H}},$$ whose vanishing ideal is \[\mI(L_{\mU_G})=\big(x_1(x_1+x_2)(x_1+\alpha x_2)(x_1+(\alpha+1)x_2)x_2\big)\subseteq\F_8[x_1,x_2].\] \\ Note also that in this example $|\,L_{\mU_G}\,|=5<7$ and we can see that \[{\rm w}_\mU\big((1:0)\big)=\dim_{\F_2}\big(\mU\cap\langle(1,0)\rangle_{\F_8}\big)=2.\] Hence,  $L_{\mU_G}$ is not a scattered linear set.
\end{example}

In the next section we will explore some geometric invariants of these objects. \bigskip

\section{$\F_q$-dimension sequence of rank-metric codes} \label{section:dimension sequence of rank-metric codes}
As we mentioned earlier, the dimension sequence is an important geometric invariant of Hamming-metric codes, which may be used to differentiate between algebraic structured codes and random ones. It might seem natural to define the dimension sequence similarly for rank-metric codes. However, this approach presents two relevant drawbacks:
\begin{itemize}
    \item the dimension sequence often converges too quickly, so there is not enough ``space" to discriminate between families of MRD codes and random ones;
    \item even though the dimension sequence is invariant under monomial equivalences, as observed in Remark~\ref{remark:dimseq}, it is not invariant under rank-metric equivalences, as highlighted in the following example.
\end{itemize}

\begin{example}
    Let $\mC_1=\langle(1,0,1),(1,1,0)\rangle_{\F_{q^m}}$ and $\mC_2=\langle(1,0,0),(0,1,0)\rangle_{\F_{q^m}}$ be two $[3,2,1]_{q^m/q}$ codes. It is easy to see that, although the two codes are rank-metric equivalent, we have $$\dim\left(\mC_1^{(2)}\right)=3 \ \text{ and }\ \dim\left(\mC_2^{(2)}\right)=2.$$
\end{example}

\noindent To fix these problems, in this section we propose the following definition of $\F_q$-dimension sequence for rank-metric codes.
\begin{definition}
    The $\F_q$-\emph{dimension sequence}, or $\F_q$-\emph{Hilbert sequence}, $\{h_i(\mC)\}_{i\geq0}$ of a nondegenerate rank-metric code $\mC$ over $\F_{q^m}$ is the Hilbert sequence of the associate Hamming-metric code $\mC^{\rm H}$, that is
    $$h_i(\mC):=\dim_{\F_{q^m}}\left(\mC^{{\rm H}(i)}\right).$$
    Moreover, the $\F_q$-\emph{Castelnuovo-Mumford regularity} $r(\mC)$ of $\mC$ is the Castelnuovo-Mumford regularity of $\mC^{\rm H}$.
\end{definition}

\begin{remark}
    Notice that the $\F_q$-dimension sequence and the $\F_q$-Castelnuovo-Mumford regularity do not depend on the choice of $\mC^{\rm H}$, hence they are well-defined. Consider two equivalent codes $\mC_1$ and $\mC_2$ and let $G_1^{\rm H}$ and $G_2^{\rm H}$ be two extended matrices of $\mC_1$ and $\mC_2$ respectively. We have already noticed in Proposition \ref{prop: uniqueness of associated code} that extended codes are unique up to monomial equivalences. Then, the columns of $G_1^{\rm H}$ must be equal to those of $G_2^{\rm H}$ up to permutation and right scalar multiplication. This implies that $\Pi_{G_1^{\rm H}}=\Pi_{G_2^{\rm H}}$, and therefore $h_i(\mC_1)=h_i(\mC_2)$, for all $i\geq 0$. Looking at Remark~\ref{remark:linearset}, we realize that we are merely considering the dimension sequence of the linear set $L_{\mU_G}$ associated with the code $\mC$ via a generator matrix $G$.
\end{remark}

\begin{theorem} \label{theorem:hi(C)}
    Let $\mathcal{C}\subseteq\F_{q^m}^n$  be a code of dimension $k$ and generator matrix $G$. Let also $\mathcal{I}(L_{\mU_G})$ be the vanishing ideal of $L_{\mU_G}$ in $\F_{q^m}[x_1,\dots,x_k]$. Then, for every positive integer $i$, $$h_{i}(\mC)=\binom{k+i-1}{i}-\dim_{\F_{q^m}}\big(\mathcal{I}(L_{\mU_G})_{i}\big),$$ where $\mathcal{I}(L_{\mU_G})_{i}$ is the set of all homogeneous polynomials in $\mathcal{I}(L_{\mU_G})$ of degree $i$. In particular, for $i\in\{0,\dots,q\}$, $$h_{i}(\mC)=\binom{k+i-1}{i}.$$
\end{theorem}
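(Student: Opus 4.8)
The plan is to obtain the first identity as a direct translation of the geometric dictionary set up above, and the ``in particular'' part by showing that $L_{\mU_G}$ is forced to contain a full subgeometry, whose vanishing ideal is trivial in degrees up to $q$. For the first identity, I would apply Proposition~\ref{proposition:dimseqalg}(1) to the associated Hamming-metric code $\mC^{\mathrm{H}}$ and its extended matrix $G^{\mathrm{H}}$, and use $L_{\mU_G}=\Pi_{G^{\mathrm{H}}}$ from Remark~\ref{remark:linearset}, to get
$$h_i(\mC)=\dim_{\F_{q^m}}\big(\mC^{\mathrm{H}(i)}\big)=\mathrm{HF}_{L_{\mU_G}}(i)=\dim_{\F_{q^m}}\big(\F_{q^m}[x_1,\dots,x_k]/\mathcal{I}(L_{\mU_G})\big)_i.$$
Since the space of homogeneous polynomials of degree $i$ in $k$ variables has $\F_{q^m}$-dimension $\binom{k+i-1}{i}$ (the number of degree-$i$ monomials), comparing dimensions in $0\to\mathcal{I}(L_{\mU_G})_i\to\F_{q^m}[x_1,\dots,x_k]_i\to\big(\F_{q^m}[x_1,\dots,x_k]/\mathcal{I}(L_{\mU_G})\big)_i\to 0$ gives the claimed formula.

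For the second statement it then suffices, by the first identity, to prove $\mathcal{I}(L_{\mU_G})_i=0$ for $1\le i\le q$, the value $i=0$ being immediate since $h_0(\mC)=\dim_{\F_{q^m}}\mC^{\mathrm{H}(0)}=1=\binom{k-1}{0}$ by definition. The key geometric observation is that $L_{\mU_G}$ always contains a subgeometry: as $\mU_G$ is an $\F_q$-subspace of $\F_{q^m}^k$ with $\langle\mU_G\rangle_{\F_{q^m}}=\F_{q^m}^k$, one may choose $v_1,\dots,v_k\in\mU_G$ that are $\F_{q^m}$-linearly independent; then $\langle v_1,\dots,v_k\rangle_{\F_q}\subseteq\mU_G$, and the $\F_{q^m}$-independence of the $v_j$ guarantees that distinct $\sim_{\F_q}$-classes in this subspace yield distinct points of $\PG(k-1,q^m)$, so $L_{\mU_G}$ contains a set $\Sigma\cong\PG(k-1,q)$. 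Hence $\mathcal{I}(L_{\mU_G})\subseteq\mathcal{I}(\Sigma)$, and since $\Sigma$ is $\PGL(k,q^m)$-equivalent to the standard subgeometry $\{\langle v\rangle_{\F_{q^m}}\st v\in\F_q^k\setminus\{0\}\}$ — an equivalence realised by a linear change of coordinates, hence a graded automorphism of $\F_{q^m}[x_1,\dots,x_k]$ — it is enough to show that no nonzero homogeneous polynomial of degree $i\le q$ vanishes on every point of $\PG(k-1,q)$.

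Finally, such a polynomial $f$ vanishes on all of $\F_q^k$ (on the origin by homogeneity), hence lies in the ideal $(x_1^q-x_1,\dots,x_k^q-x_k)$, which is precisely the ideal of $\F_q^k\subseteq\F_{q^m}^k$; reducing $f$ modulo this ideal produces a unique representative in which every variable has exponent at most $q-1$, and $f$ lies in the ideal if and only if this reduced representative is $0$. If $\deg f=i\le q-1$ then $f$ is already reduced, so $f\ne 0$ would define a nonzero function on $\F_q^k$, a contradiction. The step requiring the most care — and the only genuinely delicate point — is $i=q$: here the only possible reductions are $x_j^q\mapsto x_j$, so the reduced representative of $f$ splits as a degree-$1$ part (the coefficients of the pure powers $x_j^q$ occurring in $f$) plus an already-reduced homogeneous degree-$q$ part; its vanishing as a function on $\F_q^k$ forces it to vanish as a polynomial, and since $q\ge 2$ the degree-$1$ and degree-$q$ components must vanish separately, forcing every coefficient of $f$ to be zero. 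Thus $\mathcal{I}(\PG(k-1,q))_i=0$, hence $\mathcal{I}(L_{\mU_G})_i=0$, for all $i\le q$, which combined with the first identity gives $h_i(\mC)=\binom{k+i-1}{i}$.
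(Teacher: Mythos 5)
Your proposal is correct, and its first half (the identity $h_i(\mC)=\binom{k+i-1}{i}-\dim_{\F_{q^m}}\mathcal{I}(L_{\mU_G})_i$ via Proposition~\ref{proposition:dimseqalg}, Remark~\ref{remark:linearset} and the graded exact sequence) coincides with the paper's argument. For the ``in particular'' part the skeleton is also the same --- force a copy of $\PG(k-1,q)$ inside $L_{\mU_G}$, so that $\mathcal{I}(L_{\mU_G})\subseteq\mathcal{I}(\PG(k-1,q))$ --- but you justify the key degree bound differently. The paper realises the subgeometry by taking $G$ in systematic form and then quotes Beelen et al.\ (\cite{beelen2019vanishing}, Theorems 2.5 and 2.8) for the fact that the vanishing ideal of $\PG(k-1,q)$ over $\F_{q^m}$ is generated by $\mathcal{F}_1$, hence has no nonzero elements in degree $\le q$. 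You instead obtain the subgeometry by choosing an $\F_{q^m}$-basis $v_1,\dots,v_k$ inside $\mU_G$ (legitimate, since $\langle\mU_G\rangle_{\F_{q^m}}=\F_{q^m}^k$, and it sidesteps the systematic-form normalisation), reduce to the standard subgeometry by a projective change of coordinates, and then prove directly that $\mathcal{I}(\PG(k-1,q))_i=0$ for $i\le q$ by reducing modulo $(x_1^q-x_1,\dots,x_k^q-x_k)$ and, in the delicate case $i=q$, separating the degree-$1$ image of the pure powers $x_j^q$ from the already-reduced degree-$q$ part; this splitting argument is sound (it uses only $q\ge2$ and the linear independence of reduced monomials as functions on $\F_q^k$). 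The trade-off: your route is self-contained and more elementary, but it only yields triviality of the ideal in degrees up to $q$, whereas the cited result gives the full generating set $\mathcal{F}_1$ in degree $q+1$, which the paper exploits again later (the inclusion $\mathcal{I}(L_{\mU_G})_{q+1}\subseteq\mathcal{F}_1$ is exactly what makes the proof of Theorem~\ref{theorem:codeq+1} go through), so the citation is doing extra work beyond this theorem.
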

\begin{proof}
    By Proposition~\ref{proposition:dimseqalg} and Remark~\ref{remark:linearset}, we know that the $\F_q$-dimension sequence $\{h_i(\mC)\}_{i\geq0}$ is equal to the dimension sequence of $L_{\mU_G}$. Hence,
    $$h_{i}(\mC)=\dim_{\F_{q^m}}\F_{q^m}[x_1,\dots,x_k]_i - \dim_{\F_{q^m}}\big(\mathcal{I}(L_{\mU_G})_{i}\big),$$
    where $\F_{q^m}[x_1,\dots,x_k]_i$ is the set of homogeneous polynomials of degree $i$, so that
    $$h_{i}(\mC)=\binom{k+i-1}{i}-\dim_{\F_{q^m}}\big(\mathcal{I}(L_{\mU_G})_{i}\big).$$
    
    \noindent As we remarked above, the dimension sequence does not depend on the choice of $G$. Hence, we can assume that $\PG(k-1,q)\subseteq L_{\mU_G}$ (it is sufficient to consider a generator matrix $G$ in systematic form). This implies that $$\mathcal{I}(L_{\mU_G})\subseteq \mathcal{I}(\PG(k-1,q))\subseteq \F_{q^m}[x_1,\dots,x_k].$$ Therefore, knowing the ideal $\mathcal{I}(\PG(k-1,q))$ implies knowing what kind of polynomials we can expect to find in $\mathcal{I}(L_{\mU_G})$. Let $$\mathcal{F}_s:=\big\langle x_i^{[s]}x_j-x_ix_j^{[s]}\st 1\leq i<j\leq k\big\rangle_{\F_{q^m}}.$$ By~\cite[Theorems 2.5 and 2.8]{beelen2019vanishing}, we know that the vanishing ideal over $\F_{q^m}$ of $\PG(k-1,q)$ is the ideal of $\F_{q^m}[x_1,\dots,x_k]$ generated by $\mathcal{F}_1$. Hence, if we restrict ourselves to the case $i\in\{0,\dots,q\}$, we find that $\mathcal{I}(L_{\mU_G})_{i}=\{0\}$, hence the second part of the theorem holds.
\end{proof}

As a consequence, the first significant element in the $\F_q$-dimension sequence is the $(q+1)$-th. As we will see later, this is also sufficient to discriminate between Gabidulin and random codes. 

\begin{example}
We consider again the code of Example \ref{ex:basic}, that is, the $[3,2]_{8/2}$ code $\mC$ generated by \[G=\begin{pmatrix}
        1 & 0 & \alpha \\ 0 & 1 & 1
    \end{pmatrix}.\] 
Recall that the vanishing ideal of $L_{\mU_G}$ is \[\mI(L_{\mU_G})=\big(x_1(x_1+x_2)(x_1+\alpha x_2)(x_1+(\alpha+1)x_2)x_2\big)\subseteq\F_8[x_1,x_2].\] In this case, the corresponding $\F_2$-Hilbert sequence is 
$$\{1,2,3,4,5,5,5,5,\ldots\},$$
while the sequence $\big\{\dim_{\F_{8}}\big(\mathcal{I}(L_{\mU_G})_{i}\big):i\geq 0\big\}$ is 
$$\{0,0,0,0,0,1,2,3,\ldots\}.$$
This is obtained by observing that the ideal is principal and generated by a polynomial of degree $5$.
\end{example}

The next theorem provides a simple way to compute $h_{q+1}(\mC)$. We will postpone its proof to the next section, where we will study the link between Overbeck's invariant (Theorem \ref{theorem:randomqsum}), $h_{q+1}(\mC)$, and the $\mathcal{F}_s$ spaces.
\begin{theorem}\label{theorem:codeq+1}
    Let $\mathcal{C}\subseteq\F_{q^m}^n$  be a code of dimension $k$ with generator matrix $[\,I_k\,|\, X\,]$, where $X\in\F_{q^m}^{k\times (n-k)}$, and $r=\mathrm{rk}(X^{[1]}-X)$. Then,
    $$
h_{q+1}(\mC)=\binom{k+q}{q+1}-\binom{k-r}{2}.$$
\end{theorem}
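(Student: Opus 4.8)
By Theorem~\ref{theorem:hi(C)} we have $h_{q+1}(\mC) = \binom{k+q}{q+1} - \dim_{\Fm}\big(\mI(L_{\mU_G})_{q+1}\big)$, so the task reduces to showing that $\dim_{\Fm}\big(\mI(L_{\mU_G})_{q+1}\big) = \binom{k-r}{2}$, where $r = \rk(X^{[1]}-X)$. The plan is to identify the degree-$(q+1)$ part of the vanishing ideal of the linear set with (a subspace of) the degree-$(q+1)$ part of the ideal of $\PG(k-1,q)$, which by \cite[Theorems 2.5 and 2.8]{beelen2019vanishing} is generated in degree $2$ by $\mF_1 = \langle x_i^{[1]}x_j - x_i x_j^{[1]} : 1\le i<j\le k\rangle_{\Fm}$. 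I would first argue that $\mI(L_{\mU_G})_{q+1}$ is spanned by products of the quadratic Frobenius generators $x_i^q x_j - x_i x_j^q$ with linear forms — this uses that no form of degree $\le q$ vanishes (Theorem~\ref{theorem:hi(C)}), so any degree-$(q+1)$ element of the larger ideal $\mI(\PG(k-1,q))$ is already in the degree-$(q+1)$ component of the ideal generated by $\mF_1$, which in degree $q+1$ is exactly $\Fm[x_1,\dots,x_k]_{q-1}\cdot \mF_1$; intersecting with the condition of vanishing on $L_{\mU_G}$ (not just on $\PG(k-1,q)$) is what brings in $X$ and $r$.

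The key computation is then linear-algebraic: a form of degree $q+1$ vanishes on all of $L_{\mU_G}$ iff it vanishes at the columns $g_1,\dots,g_n$ of $G$ (and hence on all their $\Fq$-scalar multiples, automatically, since the form has degree $q+1 \equiv $ a Frobenius-compatible shape). With $G = [\,I_k \mid X\,]$, evaluation at the first $k$ columns $e_1,\dots,e_k$ kills everything except... — more precisely I expect that the relevant space is governed by the map sending a form $\sum_{i<j}\ell_{ij}\cdot(x_i^q x_j - x_i x_j^q)$ to its evaluations at the columns of $X$, and the kernel dimension is controlled by $\rk(X^{[1]} - X)$. Concretely, I would choose coordinates (via $\GL_k(\Fq)$, which preserves both $\PG(k-1,q)$ and the Hilbert function, cf. Remark~\ref{remark:dimseq}) so that, using Lemma~\ref{lm:rank1}-type reasoning, $X^{[1]} - X$ has a normal form with an $r\times r$ identifiable block; then the forms $x_i^q x_j - x_i x_j^q$ for $i,j$ ranging over the $k-r$ "inert" coordinates already vanish on $L_{\mU_G}$ (because on those coordinates the entries of $X$ behave $\Fq$-linearly relative to each other), contributing a subspace of dimension $\binom{k-r}{2}$, and I would show no further degree-$(q+1)$ forms vanish.

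The main obstacle, I expect, is the converse/upper bound: showing that $\dim_{\Fm}\big(\mI(L_{\mU_G})_{q+1}\big)$ is \emph{at most} $\binom{k-r}{2}$, i.e.\ that there are no "unexpected" degree-$(q+1)$ forms vanishing on the linear set beyond those coming from the $k-r$ inert coordinates. This requires a careful rank count on the evaluation matrix whose rows are indexed by the $N = (q^n-1)/(q-1)$ points of $L_{\mU_G}$ and whose columns are indexed by the $(q-1)$-th symmetric power times $\mF_1$; equivalently one must show that the $\binom{k}{2}$ quadrics $x_i^q x_j - x_i x_j^q$, multiplied out into degree $q+1$, span a space whose intersection with $\mI(L_{\mU_G})$ has exactly the predicted codimension. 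I would handle this by relating it directly to the rank of $X^{[1]}-X$ acting on the span of the columns of $G$, deferring the clean bookkeeping to the geometric analysis of the next section (as the authors announce they will do), where the connection between $h_{q+1}(\mC)$, Overbeck's invariant, and the spaces $\mF_s$ is made precise.
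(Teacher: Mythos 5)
Your opening reduction is the same as the paper's: by Theorem~\ref{theorem:hi(C)} it suffices to show $\dim_{\Fm}\big(\mI(L_{\mU_G})_{q+1}\big)=\binom{k-r}{2}$, and, since $\PG(k-1,q)\subseteq L_{\mU_G}$ and its vanishing ideal is generated by $\mF_1$, the space $\mI(L_{\mU_G})_{q+1}$ sits inside the ideal of $\PG(k-1,q)$. But from there your plan has two concrete problems. First, a degree error: the generators $x_i^{[1]}x_j-x_ix_j^{[1]}=x_i^qx_j-x_ix_j^q$ are forms of degree $q+1$, not quadrics, so the degree-$(q+1)$ component of the ideal they generate is exactly the $\binom{k}{2}$-dimensional space $\mF_1$ of \emph{scalar} combinations $\sum_{i<j}A_{i,j}(x_i^qx_j-x_ix_j^q)$, not $\Fm[x_1,\dots,x_k]_{q-1}\cdot\mF_1$ as you write. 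This is not a cosmetic point: the identity $\mI(L_{\mU_G})_{q+1}=\mI(L_{\mU_G})\cap\mF_1$ is precisely the simplification that makes the count finite-dimensional and tractable, whereas your framing would force you to analyse a much larger space. Second, your reduction of the vanishing condition is false as stated: a form vanishing at the columns $g_1,\dots,g_n$ of $G$ (and their $\Fq$-scalar multiples) need not vanish on $L_{\mU_G}$, because the linear set consists of the points of \emph{all} nonzero $\Fq$-linear combinations of the columns. The paper needs two nontrivial facts special to forms in $\mF_s$ to justify the reduction to the columns of $X$: vanishing of $p\in\mF_s$ on a whole coset $\alpha+\F_q^k$ is equivalent to $k$ explicit $\Fm$-linear conditions on the coefficients $A_{i,j}$ (Lemma~\ref{lm:aU}), and the set of cosets on which $p$ vanishes is $\Fq$-linear (Lemma~\ref{lm:linearity}). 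Your parenthetical ``Frobenius-compatible shape'' only covers scalar multiples, not sums.

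Finally, the heart of the theorem — that the resulting linear system in the $\binom{k}{2}$ unknowns $A_{i,j}$ has solution space of dimension exactly $\binom{k-r}{2}$, equivalently that the solutions are the alternating matrices whose kernel contains the $r$-dimensional $\Fm$-column span of $X^{[1]}-X$ — is not carried out in your proposal; you name the upper bound as the main obstacle and defer it to ``the next section,'' which is where the paper proves it (Theorem~\ref{th:MAIN}, via an explicit rank computation on the block matrix of the system). Your lower-bound device also does not work as described: the ``inert coordinates'' picture would require moving the $\Fm$-span of the columns of $X^{[1]}-X$ to a coordinate subspace, but the only changes of basis that preserve both $\mF_1$ and the shape $(AX)^{[1]}-AX=A(X^{[1]}-X)$ are those in $\GL_k(\Fq)$, which in general cannot achieve this; the actual $\binom{k-r}{2}$-dimensional solution space consists of genuine $\Fm$-linear combinations of the generators, not forms supported on $k-r$ coordinates. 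So the proposal identifies the right reduction and the right controlling quantity $\rk(X^{[1]}-X)$, but the decisive dimension count is missing and the intermediate reductions, as written, are incorrect.
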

\begin{proof}
    See next section.
\end{proof}

\noindent The previous theorem allows us to compute the exact value of $h_{q+1}(\mG_k)$ for a Gabidulin code $\mG_k$.

\begin{corollary}
    Let $\mathcal{C}\subseteq\F_{q^m}^n$  be an MRD code of dimension $k<n$. Then, $\mC$ is a Gabidulin code if and only if $$h_{q+1}(\mC)=\binom{k+q}{q+1}-\binom{k-1}{2}.$$
\end{corollary}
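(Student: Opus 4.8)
The plan is to read the corollary off Theorem~\ref{theorem:codeq+1} combined with Neri's rank-one characterization of Gabidulin codes, Lemma~\ref{lm:rank1}. First I fix a systematic generator matrix $[\,I_k\,|\,X\,]$ of $\mC$ with $X\in\F_{q^m}^{k\times(n-k)}$; this exists since $k<n$ and $\mC$ is MRD, and moreover every entry of $X$ lies in $\F_{q^m}\setminus\F_q$, so $X^{[1]}\neq X$ and hence $r:=\mathrm{rk}(X^{[1]}-X)\geq 1$. Theorem~\ref{theorem:codeq+1} then gives
$$h_{q+1}(\mC)=\binom{k+q}{q+1}-\binom{k-r}{2}.$$

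For the ``only if'' direction, if $\mC$ is a Gabidulin code --- that is, a generalized Gabidulin code with parameter $s=1$ --- then Lemma~\ref{lm:rank1} forces $r=1$, and substituting into the displayed identity yields $h_{q+1}(\mC)=\binom{k+q}{q+1}-\binom{k-1}{2}$. For the ``if'' direction, suppose $h_{q+1}(\mC)=\binom{k+q}{q+1}-\binom{k-1}{2}$; comparing with the identity above forces $\binom{k-r}{2}=\binom{k-1}{2}$. Since $r\geq 1$ we have $k-r\leq k-1$, and the map $j\mapsto\binom{j}{2}$ is strictly increasing on the integers $\geq 1$; hence (for $k\geq 3$, so that $\binom{k-1}{2}\geq 1$ and the harmless collision $\binom{0}{2}=\binom{1}{2}=0$ is irrelevant) the equality forces $k-r=k-1$, i.e. $r=1$. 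Lemma~\ref{lm:rank1} applied with $s=1$ and $r=1$ then shows that $\mC$ is a Gabidulin code, which closes the argument.

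I do not expect a real obstacle here: granting Theorem~\ref{theorem:codeq+1} (whose proof is deferred to Section~\ref{section:Evaluation of forms on linear sets}), the corollary is a two-line consequence obtained by inverting that formula and invoking Lemma~\ref{lm:rank1}. The only steps that need a moment of care are the elementary monotonicity of $r\mapsto\binom{k-r}{2}$, used to recover $r=1$ from the value of $h_{q+1}(\mC)$, and the inequality $r\geq 1$, which is nothing but the MRD property that a systematic $X$ has no entry in the base field.
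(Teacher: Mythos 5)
Your proof is the paper's own argument: pass to a systematic generator matrix, apply Theorem~\ref{theorem:codeq+1} to get $h_{q+1}(\mC)=\binom{k+q}{q+1}-\binom{k-r}{2}$ with $r=\rk(X^{[1]}-X)$, and convert $r=1$ into the Gabidulin property via Lemma~\ref{lm:rank1}; you have merely made explicit the inversion step that the paper compresses into ``we immediately get the result.'' Your parenthetical restriction to $k\ge 3$ is, however, not mere caution: when $k=2$ both possible values $r=1$ and $r=2$ give $\binom{k-r}{2}=0$, so every $2$-dimensional MRD code with $k<n$ has $h_{q+1}(\mC)=q+2=\binom{k+q}{q+1}-\binom{k-1}{2}$, and since MRD codes of dimension $2$ and length $n\ge 4$ that are not Gabidulin do exist (e.g.\ twisted Gabidulin codes, or generic MRD codes for large $m$~\cite{neri2018genericity}), the ``if'' direction genuinely needs $k\ge 3$ (equivalently $\binom{k-1}{2}>0$) --- a point the paper's proof silently skips, so your caveat flags a missing hypothesis in the statement rather than a defect in your argument. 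For completeness you could add one line for $k=1$, where every nondegenerate MRD code is itself a Gabidulin code of dimension $1$, so the equivalence holds trivially there.
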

\begin{proof}
    Without loss of generality, we can assume that $\mC$ has a generator matrix in systematic form, that is $[\,I_k\,|\, X\,]$. Since $\mC$ is MRD, by Lemma~\ref{lm:rank1}, we know that $\mathrm{rk}(X^{[1]}-X)=1$ if and only if $\mC$ is a Gabidulin code. Therefore, we immediately get the result via Theorem~\ref{theorem:codeq+1}.
\end{proof}
\bigskip

Now that the Gabidulin case has been fully studied up to the $(q+1)$-th term of the $\Fq$-dimension sequence, it is natural to investigate the general behavior of a random code. To address this question, we rely on the analysis made in~\cite{neri2018genericity}, where the following result is specifically presented in the same scenario that fits our framework.
\begin{lemma}[Schwartz-Zippel, Corollary 1 in \cite{schwartz1980fast}] \label{lemma:Schwartz-Zippel}
    Let $p\in\F_q[x_1,\dots,x_k]$ be a nonzero polynomial of degree $d\geq0$. Let $\F_{q^m}$ be an extension field and $S\subseteq\F_{q^m}$ be a finite set. If $s_1,\dots,s_k$ are selected independently and uniformly at random from $S$, then $$\Pr\big(p(s_1,\dots,s_k)=0\big)\leq\frac{d}{|S|}.$$
\end{lemma}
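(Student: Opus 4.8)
The plan is to prove the statement by induction on the number of variables $k$. Since regarding a nonzero polynomial with coefficients in $\F_q$ as an element of $\F_{q^m}[x_1,\dots,x_k]$ changes neither its degree nor the fact that it is nonzero, I would first reduce to the case where the ambient ring is $\F_{q^m}[x_1,\dots,x_k]$, so that $S$, the evaluation points, and the coefficients all live in the same field. For the base case $k=1$, a nonzero univariate polynomial of degree $d$ over the field $\F_{q^m}$ has at most $d$ roots, so among the $|S|$ choices for $s_1$ at most $d$ of them make $p$ vanish, giving $\Pr(p(s_1)=0)\le d/|S|$.

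For the inductive step, assume the bound holds for polynomials in $k-1$ variables, and write $p$ as a polynomial in $x_k$:
\[
p(x_1,\dots,x_k)=\sum_{j=0}^{\ell} p_j(x_1,\dots,x_{k-1})\,x_k^{\,j},\qquad p_\ell\neq 0,
\]
where $\ell\le d$ is the degree of $p$ in $x_k$; note that $\deg p_\ell\le d-\ell$. If $\ell=0$ then $p$ does not involve $x_k$ and the claim is immediate from the inductive hypothesis applied to $p\in\F_{q^m}[x_1,\dots,x_{k-1}]$. Otherwise, I would condition on the event $A=\{p_\ell(s_1,\dots,s_{k-1})=0\}$: by the inductive hypothesis applied to the nonzero polynomial $p_\ell$ of degree at most $d-\ell$ in $k-1$ variables, $\Pr(A)\le (d-\ell)/|S|$ (this is vacuous, with $\Pr(A)=0$, in the degenerate case $\ell=d$ where $p_\ell$ is a nonzero constant). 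On the complementary event, $p(s_1,\dots,s_{k-1},x_k)$ is a nonzero univariate polynomial in $x_k$ of degree exactly $\ell$, so by the base case $\Pr\big(p(s_1,\dots,s_k)=0 \mid A^c\big)\le \ell/|S|$, where we use crucially that $s_k$ is chosen independently of $s_1,\dots,s_{k-1}$.

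Combining these via the law of total probability,
\[
\Pr\big(p(s_1,\dots,s_k)=0\big)\le \Pr(A)+\Pr\big(p(s_1,\dots,s_k)=0\mid A^c\big)\le \frac{d-\ell}{|S|}+\frac{\ell}{|S|}=\frac{d}{|S|},
\]
which completes the induction. I do not expect any genuine obstacle: the result is classical, and the only points deserving attention are the harmless reduction to a common field, the bookkeeping of the degenerate cases $\ell=0$ and $\ell=d$, and the explicit use of coordinate independence when invoking the base case conditionally on $A^c$.
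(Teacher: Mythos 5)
Your proof is correct: the reduction to coefficients in $\F_{q^m}$, the induction on the number of variables, the split along the leading coefficient $p_\ell$ in $x_k$, and the handling of the degenerate cases $\ell=0$ and $\ell=d$ are all sound, and the use of independence of $s_k$ in the conditional step is exactly where it is needed. The paper itself gives no proof of this lemma (it is quoted as Corollary 1 of Schwartz's paper), and your argument is precisely the classical induction proof of the Schwartz--Zippel bound, so there is nothing to reconcile.
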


\noindent Hence, we obtain the following.
\begin{lemma} \label{lemma:probability random}
    Let $\mathcal{C}\subseteq\F_{q^m}^n$ be a random code of dimension $k$, with generator matrix $[\, I_k \, | \, X\, ]$. Let also $r=\min\{k,n-k\}$ and $s\in\{1,\dots,m-1\}$ with $\gcd(s, m) = 1$. Then, $$\Pr\big(\rk(X^{[s]}-X)=r\big)\ge 1-\frac{r}{q^{m-1}}.$$
\end{lemma}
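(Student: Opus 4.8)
The plan is to reduce the statement to a single application of the Schwartz–Zippel lemma (Lemma~\ref{lemma:Schwartz-Zippel}) by exhibiting an explicit nonzero polynomial whose non-vanishing forces $\rk(X^{[s]}-X)=r$. First I would set $Y:=X^{[s]}-X\in\F_{q^m}^{k\times(n-k)}$ and recall that, since a random code is generated by a uniformly random systematic matrix, the entries of $X$ are chosen independently and uniformly from $\F_{q^m}$. Writing $x_{1},\dots,x_{k(n-k)}$ for these entries (viewed as indeterminates), each entry of $Y$ is the polynomial $x_{ij}^{[s]}-x_{ij}=x_{ij}^{q^s}-x_{ij}\in\F_q[x_{11},\dots,x_{k,n-k}]$, and every $t\times t$ minor of $Y$ is a polynomial in these variables with coefficients in $\F_q$. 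The matrix $Y$ has rank $<r$ precisely when all $r\times r$ minors of $Y$ vanish (note $r=\min\{k,n-k\}$ is the maximal possible rank), so it suffices to show that at least one $r\times r$ minor is a nonzero polynomial of degree at most $r$, and then bound the probability that this minor evaluates to $0$.

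The key step is the degree/nonvanishing claim for a fixed maximal minor. Pick the $r\times r$ submatrix $Y'$ of $Y$ indexed by any $r$ rows and $r$ columns; its determinant $P$ is a sum of $r!$ products of $r$ entries, each entry being $x_{ij}^{q^s}-x_{ij}$. To bound the degree one observes that this determinant, as a polynomial, does not collapse: specializing each diagonal entry $x_{ii}$ to a value $\beta_i$ with $\beta_i^{q^s}-\beta_i\neq 0$ (such $\beta_i$ exist since $s\not\equiv 0\bmod m$, so $x\mapsto x^{q^s}-x$ is not identically zero on $\F_{q^m}$) and all off-diagonal variables to $0$ gives $\prod_i(\beta_i^{q^s}-\beta_i)\neq 0$; hence $P$ is a nonzero polynomial. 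For the degree, note each factor $x_{ij}^{q^s}-x_{ij}$ — once we regard the whole construction over the ground field $\F_q$ — is, for the purpose of Schwartz–Zippel, a polynomial; to keep the degree linear in $r$ the cleanest route is the one used in~\cite{neri2018genericity}: perform the substitution variable-by-variable so that, after replacing $x^{q^s}-x$ by a single fresh variable $z_{ij}$, the minor becomes an ordinary $r\times r$ determinant in the $z$'s, which has degree exactly $r$ and is nonzero (it is literally $\det(\text{matrix of indeterminates})$ on the chosen submatrix, a nonzero polynomial of degree $r$). Since the map $x\mapsto x^{q^s}-x$ is $q$-to-one onto an image of size $q^{m-1}$, picking $x_{ij}$ uniformly from $\F_{q^m}$ makes $z_{ij}$ uniform on a set $S$ of size $q^{m-1}$.

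Finally I would assemble the bound. By Lemma~\ref{lemma:Schwartz-Zippel} applied to the nonzero degree-$r$ determinant polynomial in the $z_{ij}$'s, with $S$ the image of $x\mapsto x^{q^s}-x$ of size $|S|=q^{m-1}$ and the $z_{ij}$ independent and uniform on $S$,
\[
\Pr\big(\det(Y')=0\big)\le \frac{r}{q^{m-1}}.
\]
Since $\{\det(Y')=0\}$ implies $\rk(Y)<r$ only after intersecting over all maximal minors, it is enough to use this single minor: $\{\rk(Y)<r\}\subseteq\{\det(Y')=0\}$ for any fixed maximal $Y'$, so $\Pr(\rk(Y)<r)\le r/q^{m-1}$, i.e. $\Pr(\rk(X^{[s]}-X)=r)\ge 1-r/q^{m-1}$, as claimed. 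The main obstacle I anticipate is the bookkeeping in the degree bound: one must be careful that passing through the substitution $z_{ij}=x_{ij}^{q^s}-x_{ij}$ is legitimate for the probability estimate (the joint distribution of the $z_{ij}$ is indeed the product of uniform distributions on $S$, because the $x_{ij}$ are independent and each coordinate map is fiber-uniform), and that the chosen minor is genuinely nonzero as a polynomial in the $z$'s — both of which are handled exactly as in the argument of~\cite{neri2018genericity}.
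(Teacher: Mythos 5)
Your proposal is correct and follows essentially the same route as the paper: treat the entries of $X^{[s]}-X$ as independent variables uniform on the image of $x\mapsto x^{[s]}-x$ (the trace-zero set of size $q^{m-1}$), fix a single $r\times r$ minor, and apply Schwartz--Zippel to its determinant as a nonzero degree-$r$ polynomial. The intermediate specialization argument over the original $x_{ij}$ variables is superfluous once you pass to the fresh variables $z_{ij}$, but the final argument you settle on is exactly the paper's.
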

\begin{proof}
    Since $\mC$ is random, the entries of $X$ have been chosen independently and uniformly at random from $\Fm$. When considering the matrix $X^{[s]}-X$ (instead of $X$), this translates into having selected independently and uniformly at random elements in $S=\{\alpha\in\Fm \st \mathrm{Tr}_{q^m/q}(\alpha)=0\}$. Indeed, it is well-known that $|\,S\,|=q^{m-1}$, and the linear map $\phi:x\mapsto x^{[s]}-x$ from $\Fm$ to $S$ is surjective with $\mathrm{ker}(\phi)=\Fq$. This directly implies that for any $\alpha \in S$, the preimage $\phi^{-1}(\alpha)$ has cardinality $q$, see~\cite[Section 2.3]{lidl1994introduction} for further details. Since any element in the preimage of $\alpha$ can be chosen with probability $1/q^m$, we get that the probability for a given entry of $X^{[s]}-X$ to be equal to $\alpha$ is $q\cdot 1/q^m=1/q^{m-1}$. \\ Let now $r=\min\{k,n-k\}$ and consider any $r\times r$ block $B$ of $X^{[s]}-X$. Then, $$\Pr\big(\rk(X^{[s]}-X)=r\big)\ge \Pr\big(\det(B)\ne 0\big)=1-\Pr\big(\det(B)=0\big).$$ If we regard the entries of $B$ as the variables $x_1,\dots,x_{r^2}$, we can look at $\det(B)$ as a polynomial in $\Fq[x_1,\dots,x_{r^2}]$ of degree $r$. Then, by Schwartz-Zippel's lemma, we get that $\Pr\big(\det(B)=0\big)\le\frac{r}{|\,S\,|}$, which directly implies the theorem.
\end{proof}

\begin{corollary}
    Let $\mathcal{C}\subseteq\F_{q^m}^n$ be a random code of dimension $k$ and $r=\min\{k,n-k\}$. Then, $$\Pr\bigg(h_{q+1}(\mC)=\binom{k+q}{q+1}-\binom{k-r}{2}\bigg)\ge1-\frac{r}{q^{m-1}}.$$
\end{corollary}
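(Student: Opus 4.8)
The plan is to read this off directly from Theorem~\ref{theorem:codeq+1} and Lemma~\ref{lemma:probability random}, keeping careful track of which rank is which. First I would invoke the definition of a random code: $\mC$ is generated by a matrix $[\,I_k\,|\,X\,]$ with $X\in\F_{q^m}^{k\times(n-k)}$ whose entries are chosen independently and uniformly at random over $\F_{q^m}$. Set $\rho:=\mathrm{rk}(X^{[1]}-X)$; since $X^{[1]}-X$ has format $k\times(n-k)$, we always have $\rho\le\min\{k,n-k\}=r$, and in particular $k-r\ge 0$ in all cases. Theorem~\ref{theorem:codeq+1} then yields $h_{q+1}(\mC)=\binom{k+q}{q+1}-\binom{k-\rho}{2}$.

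The next step is to observe that the target value $\binom{k+q}{q+1}-\binom{k-r}{2}$ is exactly what this formula produces when $\rho=r$. Consequently the event $\bigl\{h_{q+1}(\mC)=\binom{k+q}{q+1}-\binom{k-r}{2}\bigr\}$ contains the event $\{\rho=r\}$, and hence has probability at least $\Pr(\rho=r)$.

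Finally, Lemma~\ref{lemma:probability random} applied with $s=1$ (admissible, since $\gcd(1,m)=1$) gives $\Pr\bigl(\mathrm{rk}(X^{[1]}-X)=r\bigr)\ge 1-\frac{r}{q^{m-1}}$; chaining this with the inclusion of events above proves the claimed bound. I do not expect any genuine obstacle here: the only points worth flagging in the write-up are that the probabilistic model used in Lemma~\ref{lemma:probability random} coincides with the definition of a random code, and that one must take $s=1$ in that lemma so that the exponent $[s]$ matches the $[1]$ appearing in Theorem~\ref{theorem:codeq+1}. Everything else is immediate bookkeeping with binomial coefficients.
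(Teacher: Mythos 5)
Your proposal is correct and takes essentially the same route as the paper: both reduce the claim to Theorem~\ref{theorem:codeq+1} combined with Lemma~\ref{lemma:probability random} applied with $s=1$. Your observation that the event $\{\mathrm{rk}(X^{[1]}-X)=r\}$ is merely \emph{contained} in the event $\{h_{q+1}(\mC)=\binom{k+q}{q+1}-\binom{k-r}{2}\}$ is in fact slightly more careful than the paper's stated equality of the two probabilities (which can fail when $r\ge k-1$ and the rank drops by one, since both binomial terms vanish), and the inclusion is all that the lower bound requires.
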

\begin{proof}
    Given that $[\, I_k \, | \, X\, ]$ is a generator matrix of $\mC$, by Theorem \ref{theorem:codeq+1} it follows that $$\Pr\big(\rk(X^{[1]}-X)=r\big)=\Pr\bigg(h_{q+1}(\mC)=\binom{k+q}{q+1}-\binom{k-r}{2}\bigg),$$ where $\binom{k-r}{2}$ is meant to be equal to $0$ when $r\ge k-1$. Hence, the result is a direct consequence of Lemma \ref{lemma:probability random}.
\end{proof}

\begin{remark}
    In this paper we focus on linear rank-metric codes. However, the dimension sequence can be also used to differentiate between a linear set and a random set of points. In particular, we expect the Castelnuovo-Mumford regularity of a random set of points to be strictly lower than that of a linear set.
\end{remark}
\bigskip

In order to support our theoretical findings, we conducted a series of tests to study and compare the $\Fq$-dimension sequences of Gabidulin and random codes. In this analysis, we extended our study beyond the $(q+1)$-th Schur power of codes, exploring higher powers as well. These investigations were carried out using {\sc Magma}~\cite{bosma1997magma}, see~\cite{CiteGithub} for more details about the implementation, and the results are presented in Figure~\ref{figure}. \\ In the figure, the orange points represent the terms of the $\Fq$-Hilbert sequence associated with Gabidulin codes, while the blue dots are related with random codes. Specifically, for $q=2$, our experiments showed that in $98.8\%$ of the $10000$ tests conducted, the resulting $\Fq$-Hilbert sequence exactly matched the one shown in blue in the graph. For $q=3$, the same procedure always yielded the same result, hence the sequence displayed in the graph corresponds to $100\%$ of the outcomes observed in our experiments. This suggests that the sequences shown in the figure are representative of the typical behavior of random codes in their respective settings. \\ These experimental results confirm our theoretical observations regarding the first $(q+1)$-th terms of the $\mathbb{F}_q$-dimension sequences for both Gabidulin and random codes. Furthermore, the figure shows that the two $\mathbb{F}_q$-dimension sequences differ in their $\Fq$-Castelnuovo-Mumford regularities, with random codes exhibiting a strictly smaller value than Gabidulin codes.

\begin{figure}[ht]
    \centering
    \includegraphics[width = 16 cm]{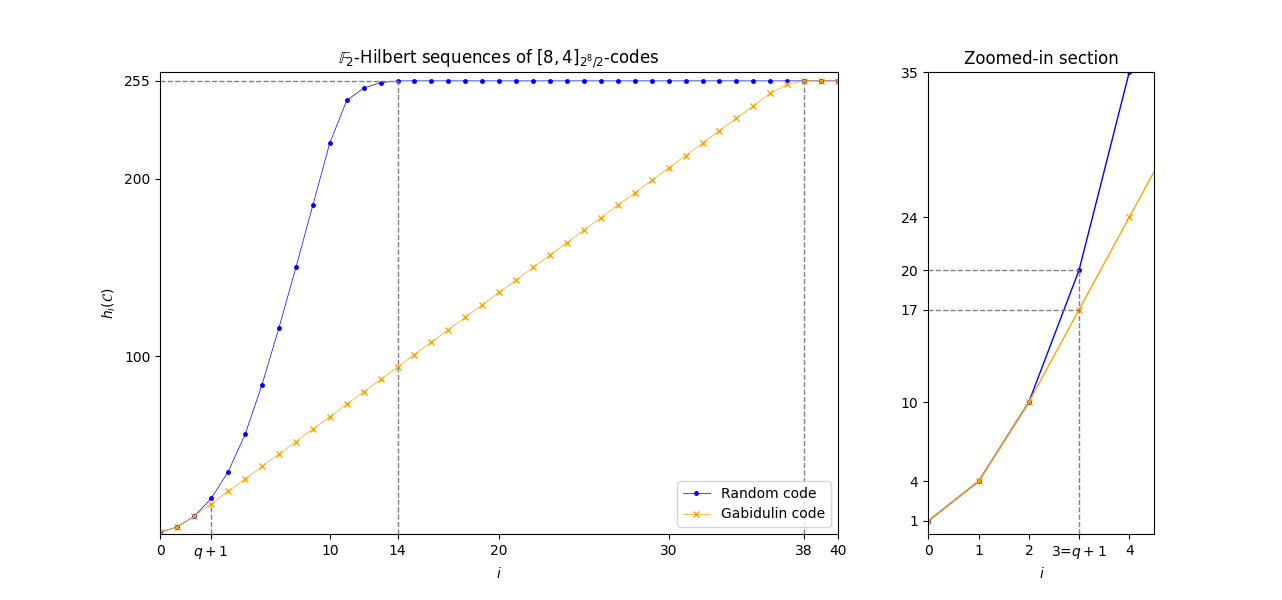}
    \includegraphics[width = 16 cm]{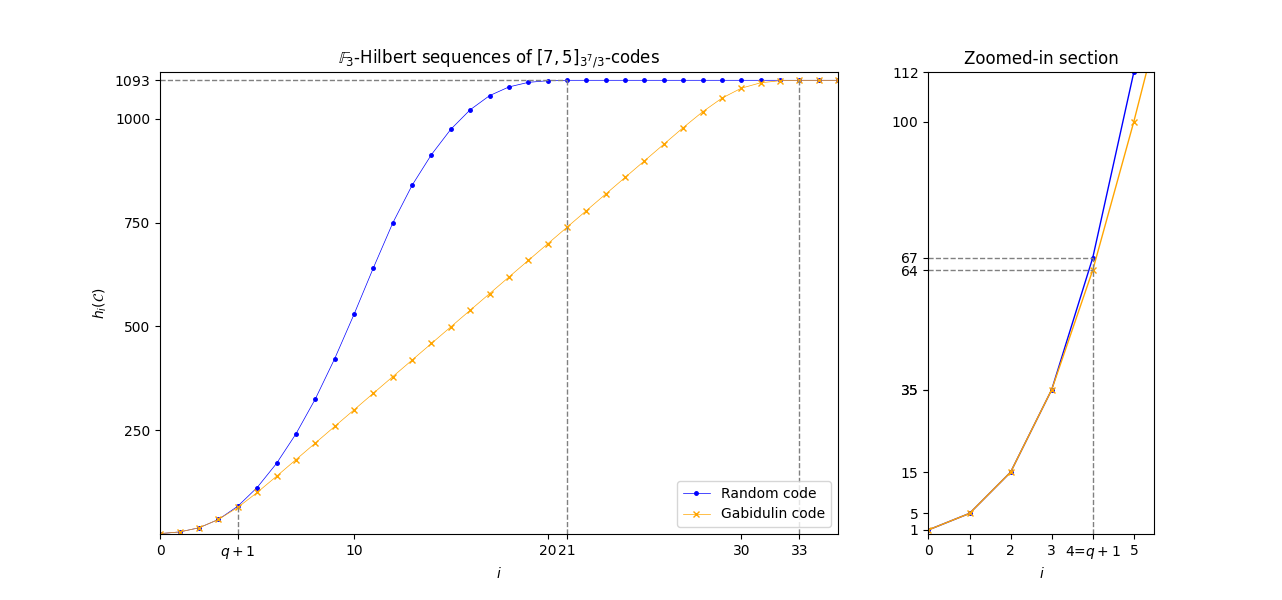}
    \caption{Experimental investigations.} \label{figure}
\end{figure}

\begin{remark}
    In this section, we have shown that the $(q+1)$-th term of the $\mathbb{F}_q$-dimension sequence characterizes Gabidulin codes. A natural question that follows is whether the $\mathbb{F}_q$-Castelnuovo-Mumford regularity could serve a similar purpose. So far, the Castelnuovo-Mumford regularity of codes has been studied by Randriambololona in the context of Reed-Solomon codes~\cite{Randriambololona}. The Gabidulin case is more complex, and obtaining the desired results in this context requires further investigations.
\end{remark}
\bigskip

\section{Evaluation of $(q^s+1)$-forms on linear sets} \label{section:Evaluation of forms on linear sets}

As we have seen in Theorem~\ref{theorem:randomqsum}, the dimension of the first $q$-sum of a given code allows to decide whether it is a Gabidulin or a random one. In particular, to compute the dimension of the first $q$-sum of a Gabidulin code $\mG_k$ with generator matrix $[\,I_k\,|\, X\,]$, it is sufficient to notice that
$$\dim(\mG_k+\mG_k^{[1]})=k+\rk(X^{[1]}-X),$$
and then apply Lemma~\ref{lm:rank1}. This gives   $\dim(\mG_k+\mG_k^{[1]})=k+1$, whereas the dimension of the first $q$-sum is $2k$ in most other cases. A careful reader could have already noticed that the condition $\rk(X^{[1]}-X)=1$ appearing in Lemma~\ref{lm:rank1} is the same condition that is stated in Theorem~\ref{theorem:codeq+1}. In this section, we will show that $\rk(X^{[s]}-X)$ is related to the dimension of the subspaces of $\mathcal{F}_s$ (where we recall that $\mathcal{F}_s:=\langle x_i^{[s]}x_j-x_ix_j^{[s]}\st 1\leq i<j\leq k\rangle_{\F_{q^m}}$) vanishing on the points of the linear set associated to the code generated by $[\,I_k\,|\, X\,]$. In the particular case of $s=1$, this will result in a proof of Theorem~\ref{theorem:codeq+1}. \medskip

\noindent We begin by studying the set of forms in $\mathcal{F}_s$ vanishing on a coset of $\F_q^k$. In the following, we align to the convention that an empty sum is equal to zero.

\begin{lemma}\label{lm:aU}
A $(q^s+1)$-form $$p(x):=\sum_{1\leq i<j\leq k}A_{i,j}(x_i^{[s]}x_j-x_ix_j^{[s]})\in\mathcal{F}_s$$ vanishes on $\mathcal{V}_{\alpha}:=\alpha+\F_q^k$, with $\alpha=(\alpha_1,\dots,\alpha_k)\in\F_{q^m}^k$ if and only if 
\begin{equation*}
\begin{dcases}
\sum_{i=1}^{t-1}A_{i,t}(\alpha_i^{[s]}-\alpha_i)-\sum_{j=t+1}^kA_{t,j}(\alpha_j^{[s]}-\alpha_j)=0&t\in \{1,\ldots,k\}
\end{dcases}.    
\end{equation*}
\end{lemma}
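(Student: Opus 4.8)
The plan is to evaluate $p$ at an arbitrary point $v = \alpha + c$ of $\mathcal{V}_\alpha$, where $c = (c_1,\dots,c_k) \in \F_q^k$, and to extract from the condition ``$p(v) = 0$ for all $c$'' a system of linear equations. First I would expand each term $v_i^{[s]} v_j - v_i v_j^{[s]}$. Since the $c_i$ lie in $\F_q$, we have $c_i^{[s]} = c_i^{q^s} = c_i$, so $v_i^{[s]} = \alpha_i^{[s]} + c_i$. Substituting, the product becomes
\[
v_i^{[s]}v_j - v_iv_j^{[s]} = (\alpha_i^{[s]}+c_i)(\alpha_j+c_j) - (\alpha_i+c_i)(\alpha_j^{[s]}+c_j).
\]
Expanding and cancelling the terms $c_ic_j$ and the ``pure $\alpha$'' term $\alpha_i^{[s]}\alpha_j - \alpha_i\alpha_j^{[s]}$ (which is itself the value of the form at $\alpha$, so constant in $c$), the part that is linear in the $c_i$ collects to $c_j(\alpha_i^{[s]}-\alpha_i) - c_i(\alpha_j^{[s]}-\alpha_j)$. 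Writing $\delta_i := \alpha_i^{[s]}-\alpha_i$, we get
\[
v_i^{[s]}v_j - v_iv_j^{[s]} = \big(\alpha_i^{[s]}\alpha_j - \alpha_i\alpha_j^{[s]}\big) + c_j\delta_i - c_i\delta_j.
\]

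Next I would sum over $i<j$ with coefficients $A_{i,j}$ and reorganize by powers of $c$. The constant-in-$c$ piece is exactly $p(\alpha)$. The linear-in-$c$ piece is $\sum_{i<j} A_{i,j}(c_j\delta_i - c_i\delta_j)$, which, collecting the coefficient of each $c_t$, equals $\sum_{t=1}^k c_t\big(\sum_{i=1}^{t-1} A_{i,t}\delta_i - \sum_{j=t+1}^k A_{t,j}\delta_j\big)$. So $p(v) = p(\alpha) + \sum_{t=1}^k c_t L_t$ where $L_t := \sum_{i=1}^{t-1} A_{i,t}(\alpha_i^{[s]}-\alpha_i) - \sum_{j=t+1}^k A_{t,j}(\alpha_j^{[s]}-\alpha_j)$. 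Now $p$ vanishes on all of $\mathcal{V}_\alpha$ iff $p(\alpha) + \sum_t c_t L_t = 0$ for every $c \in \F_q^k$. Taking $c = 0$ gives $p(\alpha) = 0$; but in fact $p(\alpha)$ itself has the form of a form in $\mathcal{F}_s$ evaluated at $\alpha$, and one checks $p(\alpha) = \sum_{i<j}A_{i,j}(\alpha_i^{[s]}\alpha_j - \alpha_i\alpha_j^{[s]})$ — I should double-check whether the cleanest route is to note that taking $c = 0$ is legitimate since $0 \in \F_q^k$, hence $\mathcal{V}_\alpha \ni \alpha$, forcing $p(\alpha)=0$ and then $\sum_t c_t L_t = 0$ for all $c$; alternatively subtract the $c=0$ equation from the general one. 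Either way we reduce to: $\sum_{t=1}^k c_t L_t = 0$ for all $(c_1,\dots,c_k)\in\F_q^k$.

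Finally, I would argue that $\sum_{t} c_t L_t = 0$ for all $c \in \F_q^k$ forces $L_t = 0$ for each $t$: taking $c$ to be the standard basis vectors $e_t$ immediately gives $L_t = 0$, and conversely these equations are clearly sufficient. This yields precisely the stated system. I do not expect a genuine obstacle here — the argument is a direct computation followed by a basis-vector specialization — but the one point requiring care is the bookkeeping of indices in the double sum when pulling out the coefficient of a fixed $c_t$: the term $A_{i,j}c_j\delta_i$ contributes to $c_t$ exactly when $j = t$ (giving $A_{i,t}\delta_i$ with $i < t$), and $-A_{i,j}c_i\delta_j$ contributes when $i = t$ (giving $-A_{t,j}\delta_j$ with $j > t$), which matches the two sums in the statement with the ranges $i \in \{1,\dots,t-1\}$ and $j \in \{t+1,\dots,k\}$; the empty-sum convention handles $t=1$ and $t=k$. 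One should also note that since $p(\alpha)=0$ is automatically implied by the $L_t=0$ system applied with the point $\alpha$ replaced suitably — or more simply, since $\alpha \in \mathcal{V}_\alpha$ is covered by $c=0$ — there is no extra equation beyond the $k$ listed.
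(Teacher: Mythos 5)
Your expansion of $p(\alpha+c)$ is correct and follows the same route as the paper: using $c^{[s]}=c$ for $c\in\F_q^k$, the terms $c_ic_j$ cancel and one gets $p(\alpha+c)=p(\alpha)+\sum_{t=1}^k c_tL_t$, after which specializing $c$ to the standard basis vectors gives the ``only if'' direction. The gap is in the converse. What your computation literally establishes is that $p$ vanishes on $\mathcal{V}_\alpha$ if and only if $p(\alpha)=0$ \emph{and} $L_t=0$ for all $t$, whereas the lemma asserts equivalence with the $k$ equations $L_t=0$ alone. So for sufficiency you must show that $L_1=\dots=L_k=0$ already forces $p(\alpha)=0$. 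Your justification ``since $\alpha\in\mathcal{V}_\alpha$ is covered by $c=0$'' is circular in this direction: you do not yet know that $p$ vanishes anywhere on $\mathcal{V}_\alpha$, so you cannot invoke $p(\alpha)=0$. The other phrase, ``the $L_t=0$ system applied with the point $\alpha$ replaced suitably'', gestures at the right idea but is never made precise, so as written the claim ``conversely these equations are clearly sufficient'' is unproved.

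The missing ingredient is the identity
\[
\sum_{1\le i<j\le k}A_{i,j}\bigl(\alpha_i^{[s]}\alpha_j-\alpha_i\alpha_j^{[s]}\bigr)
=\sum_{t=1}^{k}\alpha_t\Biggl(\sum_{i=1}^{t-1}A_{i,t}\bigl(\alpha_i^{[s]}-\alpha_i\bigr)-\sum_{j=t+1}^{k}A_{t,j}\bigl(\alpha_j^{[s]}-\alpha_j\bigr)\Biggr),
\]
that is, $p(\alpha)=\sum_{t=1}^k\alpha_tL_t$, which is checked by the same reindexing of the double sum that you carried out for the coefficient of $c_t$, with $\alpha_t$ in place of $c_t$. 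Granting it, $L_t=0$ for all $t$ gives $p(\alpha)=0$, and then $p(\alpha+c)=\sum_{t=1}^k(\alpha_t+c_t)L_t=0$ for every $c\in\F_q^k$, which completes the converse. This is precisely the point the paper makes explicit when it notes that the extra equation $p(\alpha)=0$ is a linear combination of the $k$ stated equations; adding this one-line verification repairs your proof, and the rest of your argument coincides with the paper's.
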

\begin{proof}
Consider a $(q^s+1)$-form 
$$p(x):=\sum_{1\leq i<j\leq k}A_{i,j}(x_i^{[s]}x_j-x_ix_j^{[s]})\in\mathcal{F}_s$$
vanishing on $\mV_\alpha = \alpha+\F_q^k$. Note that, in particular, this implies that $p(x)$ vanishes on $\alpha$. For $a\in\F_q^k$, by evaluating $p(x)$ at $\alpha+a$, we obtain
$$0=p(\alpha+a)=p(\alpha)+\sum_{1\leq i<j\leq k}A_{i,j}((\alpha_i^{[s]}-\alpha_i)a_j-(\alpha_j^{[s]}-\alpha_j)a_i).$$
Since $p(\alpha)=0$, we have
$$\sum_{1\leq i<j\leq k}\left(A_{i,j}(\alpha_i^{[s]}-\alpha_i)a_j-A_{i,j}(\alpha_j^{[s]}-\alpha_j)a_i\right)=0,$$
implying that
$$\sum_{t=1}^{k}\left(\sum_{i=1}^{t-1}A_{i,t}(\alpha_i^{[s]}-\alpha_i)-\sum_{j=t+1}^kA_{t,j}(\alpha_j^{[s]}-\alpha_j)\right)a_t=0.$$
Since this is true for every $a\in\F_q^k$ and  $p(\alpha)=0$, we get the following linear system
\begin{equation*}
\begin{dcases}
\sum_{i=1}^{t-1}A_{i,t}(\alpha_i^{[s]}-\alpha_i)-\sum_{j=t+1}^kA_{t,j}(\alpha_j^{[s]}-\alpha_j)=0&t\in \{1,\ldots,k\}\\
\sum_{1\leq i<j\leq k}A_{i,j}(\alpha_i^{[s]}\alpha_j-\alpha_i\alpha_j^{[s]})=0&
\end{dcases}.    
\end{equation*}
To conclude, it is sufficient to observe that the last equation is a linear combination of the first $k$ ones. Indeed, one can easily check that
\begin{equation*}
\sum_{1\leq i<j\leq k}A_{i,j}(\alpha_i^{[s]}\alpha_j-\alpha_i\alpha_j^{[s]})=\sum_{t=1}^{k}\left(\sum_{i=1}^{t-1}A_{i,t}(\alpha_i^{[s]}-\alpha_i)-\sum_{j=t+1}^kA_{t,j}(\alpha_j^{[s]}-\alpha_j)\right)\alpha_t.
\end{equation*}
To prove the converse it is sufficient to compute $p(\alpha+a)$ and to check that it is a linear combination of the equations appearing in the system.
\end{proof}

\noindent The next lemma shows that if a form in $\mathcal{F}_s$ vanishes at two cosets $\mV_{\alpha_1}$ and $\mV_{\alpha_2}$, then it vanishes at the coset of any linear combination of $\alpha_1$ and $\alpha_2$ over $\F_q$. Equivalently, the set of cosets at which a form in $\mathcal{F}_s$ vanishes is an $\F_q$-vector space.
\begin{lemma}\label{lm:linearity}
Let $\alpha_1=(\alpha_{1,1},\dots,\alpha_{1,k})$ and $\alpha_2=(\alpha_{2,1},\dots,\alpha_{2,k})$ be vectors in $\F_{q^m}^k$.
A $(q^s+1)$-form $p(x)\in\mathcal{F}_s$ vanishes on $\mV_{\alpha_1}$ and on $\mV_{\alpha_2}$ if and only if it vanishes on $\mV_{\lambda_1\alpha_1+\lambda_2\alpha_2}$ for any $\lambda_1,\lambda_2\in\F_q$. Equivalently, a form $p(x):=\sum_{1\leq i<j\leq k}A_{i,j}(x_i^{[s]}x_j-x_ix_j^{[s]})\in\mathcal{F}_s$ vanishes on $\mV_{\lambda_1\alpha_1+\lambda_2\alpha_2}$ for any $\lambda_1,\lambda_2\in\F_q$, if and only if it satisfies the following system
\begin{equation*}
\begin{dcases}
\sum_{i=1}^{t-1}A_{i,t}(\alpha_{1,i}^{[s]}-\alpha_{1,i})-\sum_{j=t+1}^kA_{t,j}(\alpha_{1,j}^{[s]}-\alpha_{1,j})=0&t\in \{1,\ldots,k\}\\
\sum_{i=1}^{t-1}A_{i,t}(\alpha_{2,i}^{[s]}-\alpha_{2,i})-\sum_{j=t+1}^kA_{t,j}(\alpha_{2,j}^{[s]}-\alpha_{2,j})=0&t\in \{1,\ldots,k\}
\end{dcases}.
\end{equation*}
\end{lemma}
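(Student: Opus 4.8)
The plan is to reduce the statement about two cosets to an iterated application of Lemma~\ref{lm:aU}, and then to observe that the linearity is almost automatic once the defining equations are written out. First I would prove the equivalence between the vanishing on $\mV_{\alpha_1}$ and $\mV_{\alpha_2}$ and the displayed system of $2k$ linear equations: applying Lemma~\ref{lm:aU} to $\alpha=\alpha_1$ gives precisely the first $k$ equations, and applying it to $\alpha=\alpha_2$ gives the remaining $k$. This already disposes of the ``equivalently'' part of the statement, so the only genuine content left is to show that vanishing on $\mV_{\alpha_1}$ and $\mV_{\alpha_2}$ forces vanishing on $\mV_{\lambda_1\alpha_1+\lambda_2\alpha_2}$ for all $\lambda_1,\lambda_2\in\F_q$, together with the (trivial) converse obtained by specializing $(\lambda_1,\lambda_2)$ to $(1,0)$ and $(0,1)$.

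For the forward direction, I would again use Lemma~\ref{lm:aU}, this time with $\alpha=\lambda_1\alpha_1+\lambda_2\alpha_2$. The key point is that the map $\alpha\mapsto(\alpha_i^{[s]}-\alpha_i)_{i}$ is $\F_q$-linear: since $\lambda_1,\lambda_2\in\F_q$ we have $(\lambda_1\alpha_{1,i}+\lambda_2\alpha_{2,i})^{[s]}=\lambda_1\alpha_{1,i}^{[s]}+\lambda_2\alpha_{2,i}^{[s]}$, hence
\[
(\lambda_1\alpha_{1,i}+\lambda_2\alpha_{2,i})^{[s]}-(\lambda_1\alpha_{1,i}+\lambda_2\alpha_{2,i})=\lambda_1(\alpha_{1,i}^{[s]}-\alpha_{1,i})+\lambda_2(\alpha_{2,i}^{[s]}-\alpha_{2,i}).
\]
Substituting this into the $t$-th equation of the Lemma~\ref{lm:aU} system for $\alpha=\lambda_1\alpha_1+\lambda_2\alpha_2$, the left-hand side becomes $\lambda_1$ times the $t$-th equation for $\alpha_1$ plus $\lambda_2$ times the $t$-th equation for $\alpha_2$, both of which vanish by hypothesis. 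Thus all $k$ equations characterizing vanishing on $\mV_{\lambda_1\alpha_1+\lambda_2\alpha_2}$ are satisfied, and Lemma~\ref{lm:aU} gives the conclusion.

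I do not expect a serious obstacle here: the whole argument is a bookkeeping exercise once Lemma~\ref{lm:aU} is in hand, and the one thing to be careful about is that $\lambda_1,\lambda_2$ lie in $\F_q$ (so that the Frobenius $[s]$ acts trivially on them), which is exactly what makes the substitution linear. The mildest subtlety is to state the converse direction cleanly: vanishing on all $\mV_{\lambda_1\alpha_1+\lambda_2\alpha_2}$ includes the cases $(\lambda_1,\lambda_2)=(1,0)$ and $(0,1)$, giving back vanishing on $\mV_{\alpha_1}$ and $\mV_{\alpha_2}$ and hence, by Lemma~\ref{lm:aU} applied twice, the displayed system. This closes the circle of equivalences.
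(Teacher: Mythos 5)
Your proposal is correct and follows essentially the same route as the paper's proof: both directions are reduced to Lemma~\ref{lm:aU}, and the forward implication rests on the $\F_q$-linearity of $y\mapsto y^{[s]}-y$ (valid because $\lambda^{[s]}=\lambda$ for $\lambda\in\F_q$), so that the system for $\lambda_1\alpha_1+\lambda_2\alpha_2$ is the corresponding $\F_q$-linear combination of the systems for $\alpha_1$ and $\alpha_2$. The only cosmetic difference is that the paper opens with a homogeneity remark about vanishing on $\mV_{\lambda\alpha}$, which your specialization of $(\lambda_1,\lambda_2)$ to $(1,0)$ and $(0,1)$ renders unnecessary.
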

\begin{proof}
First of all, we observe that if $p(x)$ vanishes on $\mV_\alpha$, then for any $\lambda\in\F_q$, it also vanishes on $\mV_{\lambda\alpha}$. Indeed, let  $\lambda\in\F_q^*$ ($\lambda=0$ is obvious), since $p(x)$ is homogeneous we have $p(\lambda\alpha+a)=\lambda^2 p(\alpha+a/\lambda)$, for any $a\in\F_q^k$. \\ Assume that $p(x)$ vanishes on $\mV_{\alpha_1}$ and $\mV_{\alpha_2}$. Then, by Lemma~\ref{lm:aU}, we have
\begin{equation*}
\begin{dcases}
\sum_{i=1}^{t-1}A_{i,t}\bar\alpha_{1,i}-\sum_{j=t+1}^kA_{t,j}\bar\alpha_{1,j}=0&t\in \{1,\ldots,k\}\\
\sum_{i=1}^{t-1}A_{i,t}\bar\alpha_{2,i}-\sum_{j=t+1}^kA_{t,j}\bar\alpha_{2,j}=0&t\in \{1,\ldots,k\}
\end{dcases},
\end{equation*}
where $\bar y:=y^{[s]}-y$. Since $\overline{\lambda_1\alpha_{1,i}+\lambda_2\alpha_{2,i}}=\lambda_1\bar\alpha_{1,i}+\lambda_2\bar\alpha_{2,i}$ for any $\lambda_1,\lambda_2\in\F_q$, by appropriately adding the previous equations, we obtain
\begin{equation*}
\begin{dcases}
\sum_{i=1}^{t-1}A_{i,t}\overline{\lambda_1\alpha_{1,i}+\lambda_2\alpha_{2,i}}-\sum_{j=t+1}^kA_{t,j}\overline{\lambda_1\alpha_{1,j}+\lambda_2\alpha_{2,j}}=0&t\in \{1,\ldots,k\}
\end{dcases}.
\end{equation*}
By Lemma~\ref{lm:aU}, we conclude that if  $p(x)$  vanishes on $\mV_{\alpha_1}$ and $\mV_{\alpha_2}$, then it vanishes on $\mV_{\lambda_1\alpha_1+\lambda_2\alpha_2}$ for any $\lambda_1,\lambda_2\in\F_q$.
\end{proof}

\noindent We now proceed to prove the main result of this section.
\begin{theorem}\label{th:MAIN}
Let $\mathcal{C}\subseteq\F_{q^m}^n$  be a code of dimension $k$ with generator matrix $G=[\,I_k\,|\, X\,]$, where $X\in  \F_{q^m}^{k\times (n-k)}$, and let $s\in\{1,\dots,m-1\}$. 
Let also $\mathcal{I}(L_{\mU_G})$ be the vanishing ideal of the linear set $L_{\mU_G}$ (which contains $\PG(k-1,q)$). Then, for $r\in\{0,\dots,k\}$,
$$
\mathrm{rk}(X^{[s]}-X) = r \ \text{ if and only if } \ \dim \mathcal{F}_s\cap \mathcal{I}(L_{\mU_G})=\binom{k-r}{2},$$
where $\binom{k-r}{2}=0$ for $r=k-1$ or $r=k$.
\end{theorem}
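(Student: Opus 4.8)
The strategy is to translate the question "which forms in $\mathcal{F}_s$ vanish on $L_{\mU_G}$" into a concrete system of linear equations in the coefficients $A_{i,j}$, exploiting that $L_{\mU_G}$ is a *linear set* built from the $\F_q$-span of the columns of $G=[\,I_k\,|\,X\,]$. A form $p(x)=\sum_{i<j}A_{i,j}(x_i^{[s]}x_j-x_ix_j^{[s]})$ vanishes on $L_{\mU_G}$ if and only if it vanishes on every point $\langle u^{\T}\rangle_{q^m}$ with $u\in\mU_G\setminus\{0\}$, i.e.\ on every nonzero $\F_q$-combination of the columns $g_1=e_1,\dots,g_k=e_k$ and the columns of $X$. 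Since the first $k$ columns are the standard basis $e_1,\dots,e_k$ of $\F_q^k$, the set $\mU_G$ contains all of $\F_q^k$; more precisely, writing $X=(x_{i,\ell})$ and thinking of the $\ell$-th column of $X$ as $\alpha^{(\ell)}=(x_{1,\ell},\dots,x_{k,\ell})$, every element of $\mU_G$ is of the form $\beta + a$ with $\beta$ an $\F_q$-linear combination of $\alpha^{(1)},\dots,\alpha^{(n-k)}$ and $a\in\F_q^k$. So $L_{\mU_G}\supseteq\PG(k-1,q)$ and $L_{\mU_G}$ is covered by the cosets $\mV_{\beta}=\beta+\F_q^k$ as $\beta$ ranges over $\langle\alpha^{(1)},\dots,\alpha^{(n-k)}\rangle_{\F_q}$. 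Hence $p$ vanishes on $L_{\mU_G}$ if and only if $p\in\mathcal{F}_s$ (which is automatic by hypothesis, since these generators already vanish on $\PG(k-1,q)$ by \cite[Theorems 2.5 and 2.8]{beelen2019vanishing} when $s=1$, and one checks directly for general $s$ that $x_i^{[s]}x_j-x_ix_j^{[s]}$ vanishes on $\F_q^k$) and $p$ vanishes on each coset $\mV_{\alpha^{(\ell)}}$, $\ell=1,\dots,n-k$.

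Next I would invoke Lemma~\ref{lm:aU} and Lemma~\ref{lm:linearity}: $p$ vanishes on all these cosets exactly when the coefficient vector $(A_{i,j})_{i<j}$ satisfies, for every $\ell\in\{1,\dots,n-k\}$ and every $t\in\{1,\dots,k\}$,
\begin{equation*}
\sum_{i=1}^{t-1}A_{i,t}\,\overline{x_{i,\ell}}-\sum_{j=t+1}^{k}A_{t,j}\,\overline{x_{t,j}}=0,\qquad \text{where }\bar y:=y^{[s]}-y.
\end{equation*}
Now observe that the entry in position $(t,\ell)$ of the matrix $X^{[s]}-X$ is precisely $\overline{x_{t,\ell}}$. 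So, assembling the $A_{i,j}$ into a skew-symmetric-type array $A$ (set $A_{j,i}:=-A_{i,j}$ and $A_{i,i}:=0$), the system above says exactly that $A\cdot(X^{[s]}-X)=0$, i.e.\ that every column of $X^{[s]}-X$ lies in the right kernel of the $k\times k$ matrix $A$. Therefore $\dim_{\F_{q^m}}\big(\mathcal{F}_s\cap\mathcal{I}(L_{\mU_G})\big)$ equals the dimension of the space of $k\times k$ alternating matrices $A$ whose column space annihilates $\mathrm{colsp}(X^{[s]}-X)$, equivalently $A(X^{[s]}-X)=0$.

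It remains to count alternating $k\times k$ matrices $A$ with $A(X^{[s]}-X)=0$ in terms of $r=\mathrm{rk}(X^{[s]}-X)$. Here I would use that $X^{[s]}-X$ has entries of trace $0$ and, more importantly, that its column space $W\subseteq\F_{q^m}^k$ has dimension $r$ but — crucially — can be taken, after an $\F_{q^m}$-change of basis that does not affect the dimension count (the condition $A(X^{[s]}-X)=0$ only depends on $W=\mathrm{colsp}(X^{[s]}-X)$, and conjugating $A\mapsto P^{\T}AP$ is a bijection on alternating matrices), to be the span of the first $r$ standard basis vectors. Then $A(X^{[s]}-X)=0$ becomes "the first $r$ columns of $A$ are zero", and since $A$ is alternating this forces the first $r$ rows to vanish as well; the remaining free part is an alternating $(k-r)\times(k-r)$ matrix, of which there are $\binom{k-r}{2}$ independent ones over $\F_{q^m}$. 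This yields $\dim\big(\mathcal{F}_s\cap\mathcal{I}(L_{\mU_G})\big)=\binom{k-r}{2}$, and running the argument in reverse (the dimension determines $\binom{k-r}{2}$, hence $r$) gives the "if and only if". The main obstacle I anticipate is the bookkeeping in the reduction "$W$ may be assumed coordinate": one must be careful that changing the $\F_{q^m}$-basis of $\F_{q^m}^k$ corresponds to a linear change of the variables $x_1,\dots,x_k$, and to check that such a substitution maps $\mathcal{F}_s$ into itself — this is true because $\mathcal{F}_s$ is exactly the degree-$(q^s+1)$ part of the vanishing ideal of $\PG(k-1,q)$ when $s\mid m$ or, in general, is the $\F_{q^m}$-span of the $2\times 2$ "$q^s$-minors" $x_i^{[s]}x_j-x_ix_j^{[s]}$, which is $\GL_k(\F_{q^m})$-stable because it equals the image of $\Lambda^2(\F_{q^m}^k)$ under $(u\wedge v)\mapsto \det\begin{pmatrix} u^{[s]} & v^{[s]} \\ u & v\end{pmatrix}$-type map; spelling this equivariance out cleanly is the one place where care is needed. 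Everything else is linear algebra over $\F_{q^m}$.
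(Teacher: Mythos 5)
Your core argument is correct and reaches the count $\binom{k-r}{2}$ by a genuinely different route than the paper. Both proofs start identically: via Lemma~\ref{lm:aU} and Lemma~\ref{lm:linearity}, vanishing of $p=\sum_{i<j}A_{i,j}(x_i^{[s]}x_j-x_ix_j^{[s]})$ on $L_{\mU_G}$ is reduced to the linear system coming from the cosets $\mV_{\alpha}$ with $\alpha$ running over the columns of $X$ (there is a harmless index slip in your displayed system: the second sum should involve $\overline{x_{j,\ell}}$, not $\overline{x_{t,j}}$, as your later matrix reformulation confirms). From there the paper proceeds by brute force: it writes the $rk\times\binom{k}{2}$ coefficient matrix $M$ in block form, shows its rank is $(k-1)+\dots+(k-r)$ by exhibiting $\binom{r+1}{2}$ explicit left-kernel vectors and checking the blockwise cancellations, and concludes that the solution space has dimension $\binom{k-r}{2}$. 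You instead package the unknowns into an alternating matrix $A$ and observe that the whole system is exactly $A(X^{[s]}-X)=0$, so the solution space is the space of alternating matrices annihilating the column space $W$ of $X^{[s]}-X$; the congruence $A\mapsto P^{\T}AP$, which is a bijection of the space of alternating matrices carrying $\{A\st Aw=0 \text{ for all } w\in W\}$ onto the analogous space for $W=\langle e_1,\dots,e_r\rangle$, then gives $\binom{k-r}{2}$ immediately. Your reformulation is shorter and more conceptual; the paper's computation is heavier but fully explicit. (Like the paper, you really prove the implication $\rk(X^{[s]}-X)=r\Rightarrow\dim=\binom{k-r}{2}$; the stated equivalence then follows since this holds for every $r$ and $r\mapsto\binom{k-r}{2}$ is injective away from the degenerate values $r\in\{k-1,k\}$, which is exactly the convention in the statement.)

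One caution: the ``one place where care is needed'' that you single out is not actually needed, and the justification you sketch for it is false. The space $\mathcal{F}_s$ is \emph{not} stable under substitutions $x\mapsto Px$ with $P\in\GL_k(\F_{q^m})$: for $k=2$, substituting $x_1\mapsto x_1+x_2$ and $x_2\mapsto\gamma x_2$ into $x_1^{[s]}x_2-x_1x_2^{[s]}$ yields $\gamma x_1^{[s]}x_2-\gamma^{[s]}x_1x_2^{[s]}+(\gamma-\gamma^{[s]})x_2^{q^s+1}$, which lies outside $\mathcal{F}_s$ whenever $\gamma^{[s]}\neq\gamma$ (stability holds only for $P\in\GL_k(\F_q)$, or for the Frobenius-twisted action using $P^{[s]}$ on the twisted variables). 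Fortunately, your proof never needs this: the basis change occurs \emph{after} the identification of the coefficient vector with the alternating matrix $A$, i.e., entirely on the matrix side via $A\mapsto P^{\T}AP$, and the polynomial variables are never substituted. So drop the equivariance claim --- attempting to spell it out as stated would fail --- and keep only the matrix-side congruence, which is all the count requires.
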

\begin{proof}
Suppose that $\mathrm{rk}(X^{[s]}-X) = r$ and let $\alpha_i=(\alpha_{i,1},\dots,\alpha_{i,k})^T$ be the $i$-th column of $X$.  Up to permuting the columns of $X$, we can assume that $\bar \alpha_1,\dots,\bar\alpha_r$, where $\bar y:=y^{[s]}-y$, are linearly independent. By Lemma \ref{lm:linearity}, a polynomial $p(x)=\sum_{1\leq i<j\leq k}A_{i,j}(x_i^{[s]}x_j-x_ix_j^{[s]})\in\mathcal{F}_s$
vanishes on $L_{\mU_G}$ if and only if the following system is satisfied
\begin{equation}\label{eq:systemwhole}
\begin{dcases}
\sum_{i=1}^{t-1}A_{i,t}\bar\alpha_{\ell,i}-\sum_{j=t+1}^kA_{t,j}\bar\alpha_{\ell,j}=0, &\ell \in \{1,\dots,r\},\,t\in \{1,\ldots,k\}.
\end{dcases}
\end{equation}
The desired conclusion follows by determining the dimension of the solution space of the previous linear system. Since the solution space is invariant under row operations, up to a permutation of the indices, we can suppose that $\bar\alpha_1=(\bar\alpha_{1,1},\dots,
\ba_{1,k})^T$, $\ba_2=(0,\ba_{2,2},\dots,\ba_{2,k})^T,\dots,\ba_r=(0,\dots,0,\ba_{r,r},\dots,\ba_{r,k})^T$, where $\ba_{i,i}\ne 0$  $i\in\{1,\dots,r\}$. In this way, we obtain $r$ subsystems of \eqref{eq:systemwhole} with associated matrices
$$
M_1=\big[M_{1,1}|M_{1,2}|\cdots|M_{1,k-1}\big],\dots,M_r=\big[M_{r,1}|M_{r,2}|\cdots|M_{r,k-1}\big],
$$
where $M_{t,l}$ is a $k\times (k-l)$ matrix such that
$$ M_{t,l}=
\begin{pmatrix}
    0&&&\cdots&&0\\
    \vdots&&&&&\vdots\\
    0&&&\cdots&&0\\
    0&\cdots&0&-\ba_{t,t}&\cdots&-\ba_{t,k}\\
    0&&&\ddots&&\\
    \vdots&&&&\ddots&\\
    0&&&\cdots&&0
\end{pmatrix} \ \text{ or } \ M_{t,l}=
\begin{pmatrix}
    0&\cdots&0\\
    \vdots&&\vdots\\
    0&\cdots&0\\
    -\ba_{t,l+1}&\cdots&-\ba_{t,k}\\
    \ba_{t,l}&&\\
    &\ddots&\\
    &&\ba_{t,l}
\end{pmatrix}, $$ 
for $l< t$ and $l\ge t$ respectively. \\ Let us now consider the matrix
$$ M =
\begin{pmatrix}
M_1\\
\vdots\\
M_r\end{pmatrix}
$$
associated to the whole system \eqref{eq:systemwhole}. The rank of $M$ is at least $k-1+k-2+\dots+k-r$ since the pivot elements $\ba_{i,i}$ are different from zero in the last $k-i$ rows of the block matrices $M_{i,i}$. We now prove that the rank of $M$ is at most $k-1+k-2+\dots+k-r$. First of all, note that in the matrix $M_i$ the $l$-th row, for $l\in\{1,\dots,i\}$, depends on the last $k-i+1$ rows of $M_l$ and on the last $k-l$ rows of $M_i$. In particular, the left kernel of $M$ contains the vectors of the form
$$
(0,\dots,0|\cdots|\overbrace{\underbrace{0,\dots,0}_{i-1},\ba_{i,i},\dots,\ba_{i,k}}^{l\text{-th block}}|0,\dots,0|\overbrace{ \underbrace{0,\dots,0}_{l-1},\ba_{l,l},\dots,\ba_{l,k}}^{i\text{-th block}}|\cdots|0,\dots,0)\in\F_{q^m}^{rk},
$$
for $1\le l\le i\le r$, that is
$$ \begin{aligned}
    &(\ba_{1,1},\dots,\ba_{1,k}|0_k|\cdots|0_k)\\
    &({0},\ba_{2,2},\dots,\ba_{2,k}|\ba_{1,1},\dots,\ba_{1,k}|0_k|\cdots|0_k)\\
    &\hspace{0.5 cm}\vdots\\
    &(0_{r-1},\ba_{r,r},\dots,\ba_{r,k}|0_k|\cdots|\ba_{1,1},\dots,\ba_{1,k})\\
    &(0_k|0,\ba_{2,2},\dots,\ba_{2,k}|0_k|\cdots|0_k)\\
    &(0_k|0,0,\ba_{3,3},\dots,\ba_{3,k}|0,\ba_{2,2},\dots,\ba_{2,k}|\cdots|0_k)\\
    &\hspace{0.5 cm}\vdots\\
    &(0_k|0_{r-1},\ba_{r,r},\dots,\ba_{r,k}|0_k|\cdots|0,\ba_{2,2},\dots,\ba_{2,k})\\
    &\hspace{0.5 cm}\vdots\\
    &(0_k|\cdots|0_{r-2},\ba_{r-1,r-1},\dots,\ba_{r-1,k}|0_k)\\
    &(0_k|\cdots|0_{r-1},\ba_{r,r},\dots,\ba_{r,k}|0_{r-2},\ba_{r-1,r-1},\dots,\ba_{r-1,k})\\
    &(0_k|\cdots|0_k|0_{r-1},\ba_{r,r},\dots,\ba_{r,k}),
\end{aligned} $$
where $0_i$ denotes a string of $0$'s of length $i$. \\ To show that these vectors are contained in the left kernel of $M$, let us just consider
$$
(0_k|\cdots|0_{l},\ba_{l+1,l+1},\dots,\ba_{l+1,k}|0_{l-1},\ba_{l,l},\dots,\ba_{l,k}|\cdots|0_k)$$
and focus on the blocks $M_{l,l},M_{l+1,l},M_{l,l+1}$ and $M_{l+1,l+1}$. The other cases can be similarly treated. Then, we have
$$ M_{l,l}=
\begin{pmatrix}
    & 0_{l-1\times k-l}&\\
    -\ba_{l,l+1}&\dots&-\ba_{l,k}\\
    \ba_{l,l}&&\\
    &\ddots&\\
    &&\ba_{l,l}
\end{pmatrix}, \quad
M_{l+1,l}=
\begin{pmatrix}
    & 0_{l-1\times k-l}&\\
    -\ba_{l+1,l+1}&\dots&-\ba_{l+1,k}\\
    0&&\\
    &\ddots&\\
    &&0
    \end{pmatrix},
$$
$$ M_{l,l+1}=
\begin{pmatrix}
    & 0_{l\times k-l-1}&\\
    -\ba_{l,l+2}&\dots&-\ba_{l,k}\\
    \ba_{l,l+1}&&\\
    &\ddots&\\
    &&\ba_{l,l+1}
\end{pmatrix}, \quad
M_{l+1,l+1}=
\begin{pmatrix}
    & 0_{l\times k-l-1}&\\
    -\ba_{l+1,l+2}&\dots&-\ba_{l+1,k}\\
    \ba_{l+1,l+1}&&\\
    &\ddots&\\
    &&\ba_{l+1,l+1}
\end{pmatrix}.
$$
Thus, we get
\begin{gather*}
    (0_{l},\ba_{l+1,l+1},\dots,\ba_{l+1,k})\cdot M_{l,l}=(\ba_{l,l}\ba_{l+1,l+1},\dots,\ba_{l,l}\ba_{l+1,k}), \\
    (0_{l-1},\ba_{l,l},\dots,\ba_{l,k})\cdot M_{l+1,l}=(-\ba_{l,l}\ba_{l+1,l+1},\dots,-\ba_{l,l}\ba_{l+1,k})    
\end{gather*}
and
\begin{gather*}
    (0_{l},\ba_{l+1,l+1},\dots,\ba_{l+1,k})\cdot M_{l,l+1}=(-\ba_{l,l+2}\ba_{l+1,l+1}+\ba_{l,l+1}\ba_{l+1,l+2},\dots,-\ba_{l,k}\ba_{l+1,l+1}+\ba_{l,l+1}\ba_{l+1,k}), \\
    (0_{l-1},\ba_{l,l},\dots,\ba_{l,k})\cdot M_{l+1,l+1}=(\ba_{l,l+2}\ba_{l+1,l+1}-\ba_{l,l+1}\ba_{l+1,l+2},\dots,\ba_{l,k}\ba_{l+1,l+1}-\ba_{l,l+1}\ba_{l+1,k}).
\end{gather*}
Hence, these parts erase each other. All these $\binom{r+1} {2}$ vectors are linearly independent, therefore $rk-\binom{r+1}{2}=k-1+\dots+k-r$, implying that $\mathrm{rk}(M)=k-1+k-2+\dots+k-r$. In conclusion, we have shown that the space of solutions of \eqref{eq:systemwhole} has dimension $\binom{k-r}{2}$. \\ Note that when $r=k-1$ or $r=k$, we have that $M$ has maximal rank and therefore the only solution is the zero solution.
\end{proof}
\begin{remark}
    From the proof of Theorem \ref{th:MAIN}, we can note that computing the dimension of $ \mathcal{F}_s\cap \mathcal{I}(L_{\mU_G})$ is equivalent to solving a linear system with $k(n-k)$ equations involving $\binom{k}{2}$ variables.
\end{remark}
\begin{corollary}\label{cor:thmain}
    Let $\mathcal{C}\subseteq\F_{q^m}^n$  be a code of dimension $k$ with generator matrix $G=[\,I_k\,|\, X\,]$, where $X\in  \F_{q^m}^{k\times (n-k)}$. Let also $s\in\{1,\dots,m-1\}$  
    and $r=\mathrm{rk}(X^{[s]}-X)$. Then,
    \begin{equation*}
        h_{q^s+1}(\mC)\leq \binom{k+q^s}{q^s+1}-\binom{k-r}{2}.
    \end{equation*}
\end{corollary}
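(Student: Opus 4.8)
The plan is to read off the corollary as an immediate consequence of Theorem~\ref{theorem:hi(C)} and Theorem~\ref{th:MAIN}. First I would apply Theorem~\ref{theorem:hi(C)} with $i=q^s+1$ to the code $\mathcal{C}$ with generator matrix $G=[\,I_k\,|\,X\,]$, obtaining
$$h_{q^s+1}(\mathcal{C})=\binom{k+q^s}{q^s+1}-\dim_{\F_{q^m}}\big(\mathcal{I}(L_{\mU_G})_{q^s+1}\big).$$
Thus the asserted inequality is equivalent to the lower bound $\dim_{\F_{q^m}}\big(\mathcal{I}(L_{\mU_G})_{q^s+1}\big)\geq\binom{k-r}{2}$.

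To get this lower bound I would observe that each generator $x_i^{[s]}x_j-x_ix_j^{[s]}$ of $\mathcal{F}_s$ is homogeneous of degree $q^s+1$, so $\mathcal{F}_s$ sits inside the degree-$(q^s+1)$ part of $\F_{q^m}[x_1,\dots,x_k]$ and therefore $\mathcal{F}_s\cap\mathcal{I}(L_{\mU_G})\subseteq\mathcal{I}(L_{\mU_G})_{q^s+1}$. Since $G$ is in systematic form, the linear set $L_{\mU_G}$ contains $\PG(k-1,q)$, so Theorem~\ref{th:MAIN} applies and yields $\dim_{\F_{q^m}}\big(\mathcal{F}_s\cap\mathcal{I}(L_{\mU_G})\big)=\binom{k-r}{2}$ with $r=\mathrm{rk}(X^{[s]}-X)$ (and the usual convention $\binom{k-r}{2}=0$ for $r\in\{k-1,k\}$). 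Combining the inclusion with this equality gives $\dim_{\F_{q^m}}\big(\mathcal{I}(L_{\mU_G})_{q^s+1}\big)\geq\binom{k-r}{2}$, which is exactly what was needed.

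I do not expect any real obstacle here: the proof is essentially a two-line combination of earlier results, and the only thing to double-check is the bookkeeping, namely that $i=q^s+1$ produces exactly $\binom{k+q^s}{q^s+1}$ in Theorem~\ref{theorem:hi(C)}, and that the systematic form of $G$ supplies the hypothesis $\PG(k-1,q)\subseteq L_{\mU_G}$ required by Theorem~\ref{th:MAIN}. The reason only an inequality appears (rather than the equality of Theorem~\ref{theorem:codeq+1}) is that for $s>1$ the space $\mathcal{I}(L_{\mU_G})_{q^s+1}$ may contain degree-$(q^s+1)$ forms vanishing on $L_{\mU_G}$ that do not belong to $\mathcal{F}_s$; when $s=1$ the description of $\mathcal{I}(\PG(k-1,q))$ used in the proof of Theorem~\ref{theorem:hi(C)} forces $\mathcal{I}(L_{\mU_G})_{q+1}\subseteq\mathcal{F}_1$, which is what upgrades the bound to an equality in that case.
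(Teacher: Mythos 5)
Your proposal is correct and takes essentially the same route as the paper: the paper likewise combines the identity $h_{q^s+1}(\mC)=\binom{k+q^s}{q^s+1}-\dim_{\F_{q^m}}\big(\mathcal{I}(L_{\mU_G})_{q^s+1}\big)$ from Theorem~\ref{theorem:hi(C)} with the inclusion $\mathcal{F}_s\cap\mathcal{I}(L_{\mU_G})\subseteq\mathcal{I}(L_{\mU_G})_{q^s+1}$ and the computation $\dim_{\F_{q^m}}\big(\mathcal{F}_s\cap\mathcal{I}(L_{\mU_G})\big)=\binom{k-r}{2}$ from Theorem~\ref{th:MAIN}. Your closing remark about why equality is recovered only for $s=1$ also matches the paper's proof of Theorem~\ref{theorem:codeq+1}.
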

\begin{proof}
    Let $\mathcal{I}(L_{\mU_G})$ be the vanishing ideal of the linear set $L_{\mU_G}$. It suffices to notice that $$\dim_{\F_{q^m}}(\mI_{q^s+1}(L_{\mU_G}))\geq\dim_{\F_{q^m}}(\mathcal{I}(L_{\mU_G})\cap\mathcal{F}_s).$$ Then, by Theorem~\ref{th:MAIN} we obtain
    \begin{equation*}
        h_{q^s+1}(\mC)=\binom{k+q^s}{q^s+1}-\dim(\mI_{q^s+1}(L_{\mU_G}))\leq \binom{k+q^s}{q^s+1}-\binom{k-r}{2}.\qedhere
    \end{equation*}
\end{proof}

\noindent We are now in position to prove Theorem~\ref{theorem:codeq+1}.
\begin{proof}[Proof of Theorem~\ref{theorem:codeq+1}]
    Since $\mathrm{PG}(k-1,q)$ is contained in $L_{\mU_G}$, we have that $\mI(L_{\mU_G})_{q+1}\subseteq\mathcal{F}_1$, and therefore $\mI(L_{\mU_G})_{q+1}=\mathcal{I}(L_{\mU_G})\cap\mathcal{F}_1$. The result follows by applying Theorem~\ref{th:MAIN}. 
\end{proof}

\noindent From Theorem~\ref{th:MAIN} and Lemma~\ref{lm:rank1}, we immediately obtain the following characterization of generalized Gabidulin codes. 
\begin{corollary}
    Let $\mathcal{C}\subseteq\F_{q^m}^n$  be an MRD code of dimension $k$ with generator matrix $G=[\,I_k\,|\, X\,]$, where $X\in  \F_{q^m}^{k\times (n-k)}$, and let $s\in\{1,\dots,m-1\}$ with $\gcd(s, m) = 1$. Let also $\mathcal{I}(L_{\mU_G})$ be the vanishing ideal of the linear set $L_{\mU_G}$. Then, $\mC$ is a generalized Gabidulin code with parameter $s$ if and only if 
    $$\dim \mathcal{F}_s\cap \mathcal{I}(L_{\mU_G})=\binom{k-1}{2}.$$
\end{corollary}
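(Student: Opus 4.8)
The plan is to obtain the statement as an immediate specialization of Theorem~\ref{th:MAIN}, using Lemma~\ref{lm:rank1} to convert the rank condition into the generalized Gabidulin property. Set $r:=\mathrm{rk}(X^{[s]}-X)$. Since the generator matrix has the form $[\,I_k\,|\,X\,]$, the linear set $L_{\mU_G}$ contains $\PG(k-1,q)$, so Theorem~\ref{th:MAIN} applies directly and gives
$$\dim\big(\mathcal{F}_s\cap\mathcal{I}(L_{\mU_G})\big)=\binom{k-r}{2}.$$
On the other hand, because $\mathcal{C}$ is MRD, Lemma~\ref{lm:rank1} says that $\mathcal{C}$ is a generalized Gabidulin code with parameter $s$ if and only if $r=1$. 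Hence everything reduces to checking that, for $r\in\{0,1,\dots,k\}$, the equality $\binom{k-r}{2}=\binom{k-1}{2}$ holds exactly when $r=1$.

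For this last step I would simply invoke monotonicity: the sequence $\binom{k}{2}>\binom{k-1}{2}>\dots>\binom{2}{2}=1>\binom{1}{2}=\binom{0}{2}=0$ is strictly decreasing until it reaches $0$, and $\binom{k-1}{2}\geq1$ exactly when $k\geq3$; so for $k\geq3$ the value $\binom{k-1}{2}$ is attained only at $r=1$ (while $r=k-1$ and $r=k$ both give $0\neq\binom{k-1}{2}$). Combining the three ingredients---Theorem~\ref{th:MAIN}, Lemma~\ref{lm:rank1}, and this elementary comparison---then yields the claimed equivalence for $k\geq3$. The small cases $k\leq2$ are degenerate: there $\binom{k-1}{2}=0$ and the dimension alone no longer separates $r=1$ from $r=k$, so I would flag them as requiring a separate direct verification rather than force them into the general argument.

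I do not anticipate a genuine obstacle here. All the real work is carried by Theorem~\ref{th:MAIN}, whose proof performs the rank analysis of the large block matrix $M$ encoding the vanishing conditions on the cosets of $\F_q^k$; granted that, the corollary is a one-line consequence. The only points needing a moment's care are the injectivity of $r\mapsto\binom{k-r}{2}$ at the value $\binom{k-1}{2}$ and the handling of small $k$, both entirely routine.
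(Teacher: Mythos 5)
Your argument is exactly the paper's: the corollary is stated there as an immediate consequence of Theorem~\ref{th:MAIN} and Lemma~\ref{lm:rank1}, with no further proof written out. Your extra check that $\binom{k-r}{2}=\binom{k-1}{2}$ forces $r=1$, and your flag that $k\le 2$ is degenerate (a point the paper silently glosses over, since for $k=2$ both $r=1$ and $r=2$ give dimension $0$), only make the step more careful.
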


\noindent Corollary \ref{cor:thmain} provides also a lower bound on the Castelnuovo-Mumford regularity of a code $\mC$. In particular, we obtain the following.
\begin{corollary}
      Let $\mathcal{C}\subseteq\F_{q^m}^n$  be a code of dimension $k$ with generator matrix $G=[\,I_k\,|\, X\,]$, where $X\in  \F_{q^m}^{k\times (n-k)}$.
      Let $s$ be the maximum integer in $\{1,\dots,m-1\}$
      such that 
      $$
      \binom{k+q^s}{q^s+1}-\binom{k-\mathrm{rk}(X^{[s]}-X)}{2}<|L_{\mU_G}|.
      $$
      Then, $q^s+1<r(\mC)$.
\end{corollary}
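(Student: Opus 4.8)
The plan is to combine the upper bound from Corollary~\ref{cor:thmain} with the stabilization property of the $\F_q$-dimension sequence recorded in Proposition~\ref{proposition:dimseqalg}, so that a single term of the sequence falling short of $|L_{\mU_G}|$ already certifies a lower bound on the regularity.

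First I would recall that the $\F_q$-dimension sequence $\{h_i(\mathcal{C})\}_{i\geq 0}$ is, by definition, the Hilbert sequence of $\mathcal{C}^{\rm H}$, hence by Proposition~\ref{proposition:dimseqalg}\,(1)--(3) it coincides with the Hilbert function of $L_{\mU_G}$ and satisfies $h_t(\mathcal{C})=|L_{\mU_G}|$ for every $t\geq r(\mathcal{C})$. Since this sequence is nondecreasing and its stabilization value equals $|L_{\mU_G}|$, the minimality in the definition of the $\F_q$-Castelnuovo--Mumford regularity forces $h_t(\mathcal{C})<|L_{\mU_G}|$ for every $t<r(\mathcal{C})$; equivalently, if $h_t(\mathcal{C})<|L_{\mU_G}|$ for some $t$, then necessarily $t<r(\mathcal{C})$ (otherwise $h_t(\mathcal{C})=|L_{\mU_G}|$, a contradiction).

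Next I would apply Corollary~\ref{cor:thmain} with the chosen exponent $s$ and $r=\mathrm{rk}(X^{[s]}-X)$, obtaining
$$h_{q^s+1}(\mathcal{C})\leq\binom{k+q^s}{q^s+1}-\binom{k-\mathrm{rk}(X^{[s]}-X)}{2}.$$
By the hypothesis on $s$ the right-hand side is strictly smaller than $|L_{\mU_G}|$, so $h_{q^s+1}(\mathcal{C})<|L_{\mU_G}|$. Taking $t=q^s+1$ in the observation above yields $q^s+1<r(\mathcal{C})$, which is exactly the claim.

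There is no genuine obstacle here: the statement is a direct consequence of Corollary~\ref{cor:thmain} together with the fact that the Hilbert sequence of a set of projective points stabilizes precisely at its index of regularity. The only point worth noting is that the maximality of $s$ plays no role in the argument itself — it merely serves to single out, among the bounds furnished by Corollary~\ref{cor:thmain} as $s$ ranges over $\{1,\dots,m-1\}$, the sharpest resulting lower bound $q^s+1<r(\mathcal{C})$ on the regularity of $\mathcal{C}$.
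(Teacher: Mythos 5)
Your argument is correct and is exactly the reasoning the paper intends: the corollary is stated as an immediate consequence of Corollary~\ref{cor:thmain} together with Proposition~\ref{proposition:dimseqalg}\,(3), which forces $h_t(\mathcal{C})=|L_{\mU_G}|$ for $t\geq r(\mathcal{C})$, so $h_{q^s+1}(\mathcal{C})<|L_{\mU_G}|$ gives $q^s+1<r(\mathcal{C})$. Your remark that the maximality of $s$ only serves to select the sharpest bound is also accurate.
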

In this section we have proven that the $(q+1)$-th element of the $\F_q$-Hilbert sequence provides the same information as the dimension of $\Lambda_1(\mC)$.  However, the $\F_q$-Hilbert sequence and the sequence of dimensions of the $q$-sums, $\dim(\Lambda_i(\mC))$, are not the same invariant. There are cases in which two codes are distinguished by the $\F_q$-Hilbert sequence but not by the dimensions of the $q$-sums, and vice versa, as illustrated in the following examples.
\begin{example}
    Let $\mC_1$ and $\mC_2$ be two $[6,3]_{2^8/2}$ linear codes generated respectively by
    \begin{equation*}
        G_1=\begin{pmatrix}
            1&0&0&\alpha^{95}&\alpha^{173}&\alpha^{98}\\
            0&1&0&\alpha^{54}&\alpha^{218}&\alpha^{109}\\
            0&0&1&\alpha^{12}&\alpha^{98}&\alpha^{135}
        \end{pmatrix}\text{ and }G_2=\begin{pmatrix}
            1&0&0&\alpha^{8}&\alpha^{35}&\alpha^{75}\\
            0&1&0&\alpha^{250}&\alpha^{88}&\alpha^{163}\\
            0&0&1&\alpha^{51}&\alpha^{116}&\alpha^{141}
        \end{pmatrix},
    \end{equation*}
    where $\alpha^8+\alpha^4+\alpha^3+\alpha^2+1=0$. By simple computations, we can see that $\dim(\Lambda_1(\mC_1))=\dim(\Lambda_1(\mC_2))=6$. Since the dimension of the ambient space is $6$, we can immediately conclude that $\dim(\Lambda_i(\mC_1))=\dim(\Lambda_i(\mC_2))$, for all $i\geq1$. In particular, for any pair of distinct $\sigma_1,\sigma_2\in\mathrm{Gal}(\mathbb{F}_{2^8}/\mathbb{F}_2)$, $\dim(\sigma_1(\mC_1)\cap\sigma_2(\mC_1))=\dim(\sigma_1(\mC_2)\cap\sigma_2(\mC_2))=0$, implying that invariants based on the field automorphisms (as that in \cite{neri2020equivalence}) cannot be used to determine the nonequivalence of the two codes. \\ On the other hand, one can verify that the $\F_2$-Hilbert sequence of $\mC_1$ is \[\{1,3, 6, 10, 15, 21, 28, 35, 42, 49, 56, 62, 63,63,\dots\},\] while the $\F_2$-Hilbert sequence of $\mC_2$ is \[\{1,3, 6, 10, 15, 21, 28, 35, 42, 49, 56, 61, 63, 63,\dots\}.\] In particular, the two sequences coincide at all positions except for the third-to-last displayed here, which is $62$ in the first case and $61$ in the second. Therefore, $\mC_1$ and $\mC_2$ are not equivalent.
\end{example}

\begin{example}
 Let $\mC_1$ and $\mC_2$ be two $[4,2]_{2^4/2}$ linear codes generated respectively by
 \begin{equation*}
     G_1=\begin{pmatrix}
            1&0&0&\alpha^{12}\\
            0&1&\alpha^{14}&\alpha^{10}
        \end{pmatrix}\text{ and }G_2=\begin{pmatrix}
            1&0&\alpha&\alpha^{5}\\
            0&1&\alpha^{4}&\alpha^{10}
        \end{pmatrix},
 \end{equation*}
 where $\alpha^4+\alpha+1=0$. On one hand, one can verify that the $\F_2$-Hilbert sequences of both $\mC_1$ and $\mC_2$ are equal to $\{1,2,3,4,5,6,7,8,9,9,\dots\}$. On the other hand, for any pair of distinct $\sigma_1,\sigma_2\in\mathrm{Gal}(\mathbb{F}_{2^4}/\mathbb{F}_2)$, $\dim(\sigma_1(\mC_1)\cap\sigma_2(\mC_1))=0$, while $\dim(\sigma_1(\mC_2)\cap\sigma_2(\mC_2))=1$. In particular, we have that $4=\dim(\Lambda_1(\mC_1))\neq \dim(\Lambda_2(\mC))=3$, implying that the two codes are not equivalent.
\end{example}
\section{Some considerations on the number of ``zeros" of a {$(q^s+1)$}-form} \label{section:number of zeros of a form}
In the previous sections, we have shown that two equivalent linear rank-metric codes have the same $\F_q$-dimension sequence, which allows us to distinguish a Gabidulin code from a random one. At this point, it is natural to ask which properties of a linear code can be derived from its $\F_q$-dimension sequence. For instance, given a linear rank-metric code $\mC$ with $\F_q$-dimension sequence $\{h_i(\mC)\}_{i\geq0}$, we can easily find its dimension as $\dim_{\Fm}(\mC)=h_1(\mC)$. However, we are not able to determine the length of the code, as one can see in the following example.
\begin{example}
Let $\mC_1$ and $\mC_2$ be two $\F_4$-linear rank-metric codes generated respectively by
\begin{equation*}
    G_1=\begin{pmatrix}
        1&0&\alpha\\0&1&0
    \end{pmatrix}\text{ and }G_2=\begin{pmatrix}
        1&0&\alpha&0\\0&1&0&\alpha
    \end{pmatrix},
\end{equation*}
 where $\alpha$ is a primitive element of $\F_4$. Even though $\mathcal{U}_{G_1}\neq\mathcal{U}_{G_2}$, we have that $\PG(1,4)=L_{\mathcal{U}_{G_1}}=L_{\mathcal{U}_{G_2}}$. Therefore, $h_i(\mC_1)=h_i(\mC_2)$ for all $i\geq0$. However, $\mC_1$ has length $3$, while $\mC_2$ has length $4$.
\end{example}

\noindent Even though the length of a code is not determined by its $\F_q$-dimension sequence, we can still recover some partial information about it. 
\begin{lemma}
    Let $\mC$ be an $[n,k,d]_{q^m/q}$ code with generator matrix $G$. Then, for all $i\geq0$,
    \begin{equation*}
        n\geq\log_q((q-1)h_{i}(\mC)+1).
    \end{equation*}
\end{lemma}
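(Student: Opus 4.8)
The plan is to bound each term $h_i(\mC)$ of the $\F_q$-dimension sequence by the number of points of the linear set $L_{\mU_G}$, and then to invoke the elementary bound on the size of a linear set recalled in Section~\ref{section:preliminaries}. First I would recall, via Proposition~\ref{proposition:dimseqalg} and Remark~\ref{remark:linearset}, that the $\F_q$-dimension sequence $\{h_i(\mC)\}_{i\ge 0}$ equals the Hilbert function of $L_{\mU_G}\subseteq\PG(k-1,q^m)$, and that by item (3) of Proposition~\ref{proposition:dimseqalg} (applied to $\mC^{\rm H}$, together with $\Pi_{G^{\rm H}}=L_{\mU_G}$) one has $h_t(\mC)=|L_{\mU_G}|$ for all $t\ge r(\mC)$. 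Since the dimension sequence of a nonzero linear code is nondecreasing and hence stabilizes at this value (as noted after Definition~\ref{definition:dimseq}), it follows that $h_i(\mC)\le |L_{\mU_G}|$ for every $i\ge 0$.

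Next I would combine this with the bound $|L_{\mU_G}|\le (q^n-1)/(q-1)$, valid for the linear set of any $[n,k]_{q^m/q}$-system. Chaining the two inequalities gives
\begin{equation*}
(q-1)h_i(\mC)+1\ \le\ (q-1)\cdot\frac{q^n-1}{q-1}+1\ =\ q^n,
\end{equation*}
and applying the (increasing) function $\log_q$ to both sides yields $\log_q\big((q-1)h_i(\mC)+1\big)\le n$, which is exactly the claimed inequality. For the edge case $i=0$ one simply has $h_0(\mC)=1$ and $(q-1)\cdot 1+1=q\le q^n$, consistent with $n\ge 1$.

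No genuine obstacle is expected here: the whole argument is a two-step estimate once the geometric dictionary of the previous sections is in place. The only point deserving a line of care is the passage from ``$h_t(\mC)=|L_{\mU_G}|$ for $t$ past the Castelnuovo--Mumford regularity'' to ``$h_i(\mC)\le |L_{\mU_G}|$ for all $i$'', which is handled by monotonicity of the dimension sequence; alternatively, one can appeal directly to the standard fact that the Hilbert function of a finite set of points never exceeds its cardinality.
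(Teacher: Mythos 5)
Your proof is correct and follows essentially the same route as the paper: bound $h_i(\mC)\le h_{r(\mC)}(\mC)=|L_{\mU_G}|$ via Proposition~\ref{proposition:dimseqalg} and monotonicity of the dimension sequence, then use the bound $|L_{\mU_G}|\le (q^n-1)/(q-1)$ (the paper phrases this as $q^n\geq|\,\mU_G\,|\geq (q-1)|L_{\mU_G}|+1$, which is the same estimate) and rearrange.
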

\begin{proof}
    By Proposition \ref{proposition:dimseqalg}, $\lvert L_{\mathcal{U}_G}\rvert=h_{r(\mC)}(\mC)\geq h_i(\mC)$. Hence, since $q^n\geq|\,\mU_G\,|\geq (q-1)\lvert L_{\mathcal{U}_G}\rvert+1$, we immediately get the theorem.
\end{proof}

\noindent When the length of the code $\mC$ is given, its $\F_q$-dimension sequence allows to recover further information about it. 

\begin{remark}
When $n\leq m$, a code is MRD if and only if the linear set associated is scattered with respect to hyperplanes~\cite[Corollary 5.7]{sheekey2020rank}. Clearly this implies that the linear set is scattered, hence its cardinality is equal to $\frac{q^n-1}{q-1}$. Therefore, if $h_{r(\mC)}(\mC)<\frac{q^n-1}{q-1}$, the code $\mC$ can not be MRD. \\ Moreover, if $k=2$, hyperplanes are points, implying that being scattered with respect to hyperplanes coincides with being scattered. In this case, $h_{r(\mC)}(\mC)=\frac{q^n-1}{q-1}$ if and only if $\mC$ is MRD.
\end{remark}

\noindent Even when only a subsequence of the $\F_q$-dimension sequence is known, it is still possible to obtain information on $h_{r(\mC)}(\mC)$, or equivalently on $\lvert L_{\mathcal{U}_G}\rvert$. Since the $\F_q$-dimension sequence is nondecreasing, it immediately provides a lower bound on $h_{r(\mC)}(\mC)$. On the other side, $h_i(\mC)$ allows us to determine $\dim(\mI(L_{\mathcal{U}_G})_i)$, and any upper bound on the variety described by the ideal generated by $\mI(L_{\mathcal{U}_G})_i$ gives an upper bound on $\lvert L_{\mathcal{U}_G}\rvert$. In particular, for $i=q^s+1$, we have that $$\lvert L_{\mathcal{U}_G}\rvert\leq \lvert{\rm V}(\mathcal{F}_s\cap\mI(L_{\mathcal{U}_G})) \rvert,$$ where ${\rm V}(\mathcal{F}_s\cap\mI(L_{\mathcal{U}_G}))$ is the variety of the ideal generated by $\mathcal{F}_s\cap\mI(L_{\mathcal{U}_G})$. Since for each point in $L_{\mathcal{U}_G}$ there exists an $\alpha\in\Fm^k$ such that $p(\mathcal{V}_{\alpha})=0$ for all $p\in\mathcal{F}_s\cap\mI(L_{\mathcal{U}_G})$, we can consider only the zeros of this type. Therefore, we obtain
\begin{equation*}
    \lvert L_{\mathcal{U}_G}\rvert\leq \frac{\lvert\{\mathcal{V}_{\alpha}\st \alpha\in\Fm^k\text{ and }p(\mathcal{V}_{\alpha})=0\text{ for all }p\in\mathcal{F}_s\cap\mI(L_{\mathcal{U}_G})\}\rvert q^k-1}{q-1}.
\end{equation*}
Hence, to derive an upper bound on $h_{r(\mC)}(\mC)$ it is sufficient to upper bound the number of $\mathcal{V}_{\alpha}$'s vanishing on a subset $\mathcal{S}$ of $\mathcal{F}_s$. In the last part of the section, we focus on the case where $\lvert\mathcal{S}\rvert=1$. Note that the question about the behavior of forms over linear sets has its own mathematical interest.

We recall that $s\in\{1,\dots,m-1\}$ and $\gcd(s,m)=1$.
\begin{lemma}\label{lm:coset}
        Let $\alpha,\beta\in \F_{q^m}^k$. Then, $\mV_\alpha=\mV_\beta$ if and only if $\alpha^{[s]}-\alpha=\beta^{[s]}-\beta$.
\end{lemma}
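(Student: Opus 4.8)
The plan is to reduce the statement to a single elementary fact about the Frobenius map on $\Fm$. Writing $\gamma := \alpha - \beta \in \Fm^k$, the condition $\mV_\alpha = \mV_\beta$ is equivalent to $\gamma \in \F_q^k$ (since $\alpha \in \mV_\alpha = \mV_\beta$ forces $\gamma \in \F_q^k$, and conversely), while the condition $\alpha^{[s]} - \alpha = \beta^{[s]} - \beta$ is equivalent to $\gamma^{[s]} = \gamma$, because the map $x \mapsto x^{[s]}$ is additive on $\Fm$ and hence acts componentwise on $\Fm^k$. Thus the lemma amounts to the scalar claim that, for $c \in \Fm$, one has $c \in \Fq$ if and only if $c^{q^s} = c$, applied coordinate by coordinate.

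For that scalar claim, the ``only if'' direction is immediate: every $c \in \Fq$ satisfies $c^q = c$, hence $c^{q^s} = c$. For the ``if'' direction I would use that the solution set of $c^{q^s} = c$ in $\Fm$ is the subfield $\F_{q^{\gcd(s,m)}} \cap \Fm$, which equals $\Fq$ precisely because $\gcd(s,m) = 1$; equivalently, the automorphism $x \mapsto x^{q^s}$ generates $\mathrm{Gal}(\Fm/\Fq)$ when $\gcd(s,m)=1$, so its fixed field is $\Fq$.

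Assembling the two implications: if $\mV_\alpha = \mV_\beta$ then $\gamma \in \F_q^k$, so each component of $\gamma$ is fixed by $x \mapsto x^{q^s}$, giving $\gamma^{[s]} = \gamma$ and therefore $\alpha^{[s]} - \alpha = \beta^{[s]} - \beta$; conversely, if $\alpha^{[s]} - \alpha = \beta^{[s]} - \beta$ then $\gamma^{[s]} = \gamma$, so each component of $\gamma$ lies in the fixed field $\Fq$, whence $\gamma \in \F_q^k$ and $\mV_\alpha = \alpha + \F_q^k = \beta + \F_q^k = \mV_\beta$.

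I do not expect a genuine obstacle here; the only point deserving explicit attention is the role of the hypothesis $\gcd(s,m)=1$, which is exactly what forces the fixed field of $x \mapsto x^{q^s}$ to be $\Fq$ rather than a strictly larger subfield — without it the ``if'' direction fails — so I would be careful to invoke it precisely at that step.
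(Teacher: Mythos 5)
Your proposal is correct and follows essentially the same route as the paper's proof: the forward direction uses that the $q^s$-power map fixes $\F_q^k$ componentwise, and the converse reduces to the fact that $(\alpha-\beta)^{[s]}=\alpha-\beta$ forces $\alpha-\beta\in\F_q^k$ because $\gcd(s,m)=1$ makes the fixed field of $x\mapsto x^{q^s}$ equal to $\F_q$. Your explicit remark on where the hypothesis $\gcd(s,m)=1$ enters matches the paper's appeal to ``basic Galois theory'' at exactly that step.
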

\begin{proof}
       $\mV_\alpha=\mV_\beta$ directly implies that $\beta=\alpha+v$ for some $v\in\F_q^k$. Therefore, we immediately get that $\beta^{[s]}-\beta=(\alpha+v)^{[s]}-(\alpha+v)=\alpha^{[s]}-\alpha$. \\ On the other side, $\alpha^{[s]}-\alpha=\beta^{[s]}-\beta$ implies that $(\alpha-\beta)^{[s]}=\alpha^{[s]}-\beta^{[s]}=\alpha-\beta$. Since $\gcd(s,m)=1$, from basic Galois theory, we have that $\alpha-\beta\in\F_q^k$. Hence, $\mV_\alpha=\mV_\beta$.
\end{proof}
\noindent The previous lemma implies that the number of sets $\mathcal{V}_\alpha$ on which a form $p$ vanishes is given by the cardinality of $$\bar{\mathcal{Z}}_p:=\{\alpha^{[s]}-\alpha\st\alpha\in\Fm^k, \ p(\mV_\alpha)=0\}.$$
\noindent In addition, by Theorem \ref{th:MAIN}, we can conclude the following.

\begin{lemma}\label{lm:zeroform}
    Let $\alpha_1,\dots,\alpha_{k-1}\in\F_{q^m}^k$ be such that $\bar\alpha_i=\alpha_i^{[s]}-\alpha_i$, for $i\in\{1,\dots,k-1\}$, are linearly independent. If $p\in \mathcal{F}_s$ vanishes on all the sets $\mV_{\alpha_i}$, then $p\equiv 0$.
\end{lemma}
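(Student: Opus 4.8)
The statement to prove is Lemma~\ref{lm:zeroform}: if $\bar\alpha_1,\dots,\bar\alpha_{k-1}$ are linearly independent over $\F_{q^m}$ (equivalently, $\mathrm{rk}$ of the matrix with these columns is $k-1$) and $p\in\mathcal{F}_s$ vanishes on all cosets $\mV_{\alpha_1},\dots,\mV_{\alpha_{k-1}}$, then $p\equiv 0$. The plan is to apply Theorem~\ref{th:MAIN}. First I would construct an auxiliary code: let $X\in\F_{q^m}^{k\times(k-1)}$ be the matrix whose $i$-th column is $\alpha_i$, and consider the code $\mathcal{C}$ with generator matrix $G=[\,I_k\,|\,X\,]$ (note $n=2k-1$, so $n-k=k-1$ columns in $X$, which is admissible). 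By hypothesis $X^{[s]}-X$ has columns $\bar\alpha_1,\dots,\bar\alpha_{k-1}$, which are linearly independent, so $r:=\mathrm{rk}(X^{[s]}-X)=k-1$. Theorem~\ref{th:MAIN} then gives $\dim\big(\mathcal{F}_s\cap\mathcal{I}(L_{\mU_G})\big)=\binom{k-(k-1)}{2}=\binom{1}{2}=0$.

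The second step is to identify "$p$ vanishes on all $\mV_{\alpha_i}$" with "$p\in\mathcal{F}_s\cap\mathcal{I}(L_{\mU_G})$". Since $G$ is in systematic form, $\PG(k-1,q)\subseteq L_{\mU_G}$, and more precisely the points of $L_{\mU_G}$ are exactly the projective classes of the vectors in the $\F_q$-span of the columns $e_1,\dots,e_k$ of $I_k$ together with $\alpha_1,\dots,\alpha_{k-1}$. By Lemma~\ref{lm:linearity}, a form $p\in\mathcal{F}_s$ vanishes on $L_{\mU_G}$ if and only if it vanishes on $\mV_{\alpha_1},\dots,\mV_{\alpha_{k-1}}$ (the cosets $\mV_{e_j}$ contribute nothing new, as $\bar e_j = e_j^{[s]}-e_j = 0$, so the corresponding equations in the system of Lemma~\ref{lm:aU} are trivially satisfied; alternatively one notes that every point of $L_{\mU_G}$ lies in the $\F_q$-span of the $\alpha_i$'s and the standard basis, and Lemma~\ref{lm:linearity} propagates vanishing to all $\F_q$-linear combinations). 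Hence the hypothesis on $p$ says exactly $p\in\mathcal{F}_s\cap\mathcal{I}(L_{\mU_G})$, which is the zero space by the previous step. Therefore $p\equiv 0$.

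Alternatively, and perhaps more cleanly, one can bypass the auxiliary code and argue directly from the linear-algebra computation inside the proof of Theorem~\ref{th:MAIN}: by Lemma~\ref{lm:linearity}, $p=\sum A_{i,j}(x_i^{[s]}x_j-x_ix_j^{[s]})$ vanishes on the $\mV_{\alpha_\ell}$ if and only if the coefficient vector $(A_{i,j})$ solves the linear system \eqref{eq:systemwhole} with $r=k-1$; the proof of Theorem~\ref{th:MAIN} shows the coefficient matrix of that system has rank $(k-1)+(k-2)+\dots+1 = \binom{k}{2}$, which equals the number of unknowns $A_{i,j}$, forcing $(A_{i,j})=0$ and hence $p\equiv 0$.

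There is essentially no obstacle here: the lemma is a direct specialization of Theorem~\ref{th:MAIN} to the extremal case $r=k-1$. The only point requiring a line of care is the bookkeeping that the cosets of the standard basis vectors impose no constraints and that Lemma~\ref{lm:linearity} correctly reduces "vanishing on $L_{\mU_G}$" to "vanishing on the $k-1$ prescribed cosets" — but this is immediate since $\bar e_j=0$.
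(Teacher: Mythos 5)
Your proof is correct and follows essentially the same route as the paper: build the auxiliary code with generator matrix $[\,I_k\,|\,X\,]$ whose last $k-1$ columns are the $\alpha_i$, note $\mathrm{rk}(X^{[s]}-X)=k-1$, and invoke Theorem~\ref{th:MAIN} to get $\dim\bigl(\mathcal{F}_s\cap\mathcal{I}(L_{\mU_G})\bigr)=\binom{1}{2}=0$. The extra bookkeeping you include (via Lemma~\ref{lm:linearity} and $\bar e_j=0$, showing the hypothesis really places $p$ in $\mathcal{F}_s\cap\mathcal{I}(L_{\mU_G})$) is a point the paper leaves implicit, so it is a welcome, fully compatible elaboration rather than a different argument.
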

\begin{proof}
Take $\alpha_1,\dots,\alpha_{k-1}\in\F_{q^m}^k$ to be the column of a $(k-1)\times k$ matrix $X$ over $\F_{q^m}$. Now, $\mathrm{rk}(X^{[s]}-X) = k-1$. Hence, $\dim \mathcal{F}_s\cap \mathcal{I}(L_{\mU_G})=0$, where $G=[\ I_k \mid X \ ]$, by Theorem \ref{th:MAIN}. Then $p$ should be the zero form. 
\end{proof}

\begin{theorem}\label{th:zeros}
    Let $p\in\mathcal{F}_s$ be a nonzero $(q^s+1)$-form. If $\dim_{\F_{q^m}}(\langle \bar{\mathcal{Z}}_p\rangle_{\F_{q^m}})=r$, then there exist $\alpha_1,\dots,\alpha_r\in\Fm^k$ such that  $$q^{\sum_{i=1}^r\dim_{\F_q}(\mathcal{H}_{\alpha_i})}\le |\bar{\mathcal{Z}}_p|\le q^{r(m-1)},$$ where $\mathcal{H}_{\alpha_i}=\big\{v\in\Fm\st\mathrm{Tr}_{q^m/q}\big((\alpha_{i,j}^{[s]}-\alpha_{i,j})v\big)=0\text{ for all } j\in\{1,\dots,k\}\big\}$.  In particular, the maximum number of sets $\mathcal{V}_\alpha$ on which  a $(q^s+1)$-form in $\mathcal{F}_s\setminus\{0\}$ can vanish is $q^{(k-2)(m-1)}$.
\end{theorem}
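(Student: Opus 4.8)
The plan is to translate the vanishing of a fixed form $p\in\mathcal F_s\setminus\{0\}$ on the sets $\mV_\alpha$ into a linear-algebra count, exploiting Lemma~\ref{lm:aU}, Lemma~\ref{lm:coset}, and Lemma~\ref{lm:zeroform}. By Lemma~\ref{lm:coset}, the distinct sets $\mV_\alpha$ on which $p$ vanishes are in bijection with the elements of $\bar{\mathcal Z}_p=\{\alpha^{[s]}-\alpha : \alpha\in\Fm^k,\ p(\mV_\alpha)=0\}$, so the task is to bound $|\bar{\mathcal Z}_p|$. Writing $p(x)=\sum_{i<j}A_{i,j}(x_i^{[s]}x_j-x_ix_j^{[s]})$, Lemma~\ref{lm:aU} says that $\bar{\mathcal Z}_p$ is exactly the set of vectors $\bar\alpha=(\bar\alpha_1,\dots,\bar\alpha_k)$ (where $\bar y=y^{[s]}-y$) solving the homogeneous linear system
\begin{equation*}
\sum_{i=1}^{t-1}A_{i,t}\,\bar\alpha_i-\sum_{j=t+1}^k A_{t,j}\,\bar\alpha_j=0,\qquad t\in\{1,\dots,k\},
\end{equation*}
intersected with the image of the map $\alpha\mapsto\alpha^{[s]}-\alpha$, i.e.\ the $\F_q$-subspace $S^k$ with $S=\{v\in\Fm:\mathrm{Tr}_{q^m/q}(v)=0\}$, $|S|=q^{m-1}$. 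In particular $\bar{\mathcal Z}_p$ is an $\F_q$-subspace of $S^k\subseteq\Fm^k$, so $|\bar{\mathcal Z}_p|$ is a power of $q$.

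For the upper bound: set $r=\dim_{\F_{q^m}}\langle\bar{\mathcal Z}_p\rangle_{\F_{q^m}}$ and pick $\alpha_1,\dots,\alpha_r\in\Fm^k$ so that $\bar\alpha_1,\dots,\bar\alpha_r$ form an $\F_{q^m}$-basis of this span. Since every element of $\bar{\mathcal Z}_p$ lies in the $\F_{q^m}$-span of $\bar\alpha_1,\dots,\bar\alpha_r$ and also in $S^k$, and since the coordinatewise trace conditions defining $S^k$ cut down each of the $r$ ``free'' coordinates (with respect to the basis) to a space of size at most $q^{m-1}$, one gets $|\bar{\mathcal Z}_p|\le (q^{m-1})^r=q^{r(m-1)}$; more precisely one shows $\bar{\mathcal Z}_p$ embeds $\F_q$-linearly into $\prod_{i=1}^r\mathcal H_{\alpha_i}$ via the coordinates relative to the basis, where $\mathcal H_{\alpha_i}=\{v\in\Fm:\mathrm{Tr}_{q^m/q}(\bar\alpha_{i,j}v)=0\ \forall j\}$, which simultaneously yields the lower bound $q^{\sum_{i=1}^r\dim_{\F_q}\mathcal H_{\alpha_i}}\le|\bar{\mathcal Z}_p|$ by producing, conversely, from each tuple $(v_1,\dots,v_r)\in\prod\mathcal H_{\alpha_i}$ an element $\sum v_i\bar\alpha_i$ that still solves the system and lies in $S^k$. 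The final ``in particular'' follows because $p\neq 0$ forces $r\le k-2$: indeed, if $r\ge k-1$ we could find $\alpha_1,\dots,\alpha_{k-1}$ with $\bar\alpha_1,\dots,\bar\alpha_{k-1}$ linearly independent and $p$ vanishing on each $\mV_{\alpha_i}$, so Lemma~\ref{lm:zeroform} would give $p\equiv0$, a contradiction; hence $|\bar{\mathcal Z}_p|\le q^{(k-2)(m-1)}$.

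The main obstacle is the bookkeeping in the two-sided sandwich: one must be careful that the identification of $\bar{\mathcal Z}_p$ with a subspace of $\prod_{i=1}^r\mathcal H_{\alpha_i}$ is genuinely an $\F_q$-linear isomorphism onto its image (injectivity is clear from $\F_{q^m}$-linear independence of the $\bar\alpha_i$, but surjectivity-onto-the-relevant-image and the compatibility with both the linear system and the trace-zero condition need the fact that membership in the solution space of the system is an $\F_{q^m}$-linear — hence $\F_q$-linear — condition while membership in $S^k$ is only $\F_q$-linear). Handling these two conditions of different linearity types simultaneously, and checking that the coordinates $v_i$ range exactly over $\mathcal H_{\alpha_i}$ rather than some larger or smaller set, is the delicate point; everything else reduces to the already-established Lemmas~\ref{lm:aU}, \ref{lm:coset}, and \ref{lm:zeroform}.
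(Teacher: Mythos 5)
Your overall skeleton matches the paper's proof: identify the vanishing cosets with $\bar{\mathcal{Z}}_p$ via Lemma~\ref{lm:coset}, describe $\bar{\mathcal{Z}}_p$ through the linear system of Lemma~\ref{lm:aU} intersected with the trace-zero vectors, get the lower bound by showing every $\sum_{i=1}^r v_i\bar\alpha_i$ with $v_i\in\mathcal{H}_{\alpha_i}$ lands back in $\bar{\mathcal{Z}}_p$, and get $r\le k-2$ from Lemma~\ref{lm:zeroform}. The lower bound and the ``in particular'' reduction are correct as you state them. The genuine gap is in your upper bound: you claim that $\bar{\mathcal{Z}}_p$ embeds $\F_q$-linearly into $\prod_{i=1}^r\mathcal{H}_{\alpha_i}$ via the coordinates relative to the basis $\bar\alpha_1,\dots,\bar\alpha_r$. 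This is false in general. If $w=\sum_{i=1}^r c_i\bar\alpha_i\in\bar{\mathcal{Z}}_p$, the trace-zero condition on the $j$-th entry reads $\sum_{i=1}^r\mathrm{Tr}_{q^m/q}(c_i\bar\alpha_{i,j})=0$; this mixes the basis coordinates and does not force each $\mathrm{Tr}_{q^m/q}(c_i\bar\alpha_{i,j})$ to vanish separately, so there is no reason for $c_i$ to lie in $\mathcal{H}_{\alpha_i}$. Indeed, if your embedding existed with that image, the upper and lower bounds would coincide at $q^{\sum_i\dim_{\F_q}(\mathcal{H}_{\alpha_i})}$, and since $\dim_{\F_q}(\mathcal{H}_{\alpha_i})=m-\dim_{\F_q}\langle\bar\alpha_{i,1},\dots,\bar\alpha_{i,k}\rangle_{\F_q}$ is typically much smaller than $m-1$, you would be ``proving'' a statement strictly stronger than the theorem, which only asserts a sandwich. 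You flag exactly this point as delicate, but flagging it does not close it, and as written the step fails.

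The repair is to change which coordinates you project onto. The paper works with $r$ suitably chosen positions $j_1,\dots,j_r\in\{1,\dots,k\}$ of the ambient space $\Fm^k$: since $\bar\alpha_1,\dots,\bar\alpha_r$ are $\Fm$-independent, the projection of $\langle\bar\alpha_1,\dots,\bar\alpha_r\rangle_{\Fm}$ onto these positions is a bijection onto $\Fm^r$, and because every element of $\bar{\mathcal{Z}}_p$ has all entries of trace zero, the (injective) image of $\bar{\mathcal{Z}}_p$ lies in the set of vectors of $\Fm^r$ with trace-zero entries, of size $q^{r(m-1)}$. Alternatively, your basis-coordinate viewpoint can be saved by a rank count instead of the false embedding: the set $W=\{c\in\Fm^r : \sum_i c_i\bar\alpha_i\in\mathcal{T}_0\}$ is cut out by the $k$ $\F_q$-linear functionals $c\mapsto\mathrm{Tr}_{q^m/q}\bigl(\sum_i c_i\bar\alpha_{i,j}\bigr)$, and at least $r$ of them are $\F_q$-independent (a relation $\sum_j\lambda_j$ with $\lambda\in\F_q^k$ killing all of them forces $\sum_j\lambda_j\bar\alpha_{i,j}=0$ for all $i$, and such $\lambda$ form a space of dimension at most $k-r$), whence $|\bar{\mathcal{Z}}_p|=|W|\le q^{rm-r}=q^{r(m-1)}$. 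Either fix restores the bound; without one of them, your upper bound, and hence also your final bound $q^{(k-2)(m-1)}$, is unproved.
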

\begin{proof}
    Let $\alpha_1,\dots,\alpha_r\in\F_{q^m}^k$ be such that $\bar{\mathcal{Z}}_p\subseteq\langle\bar\alpha_1,\dots,\bar \alpha_r\rangle_{\Fm}$, where $\bar{\alpha}_i=\alpha_i^{[s]}-\alpha_i$.
    Let $\mathcal{T}_0:=\{\bar{\beta}\,:\,\beta\in\Fm^k\}$ be the set of vectors in $\Fm^k$ with entries having trace equal to $0$ over $\Fq$. Then, from Lemma \ref{lm:aU}, we obtain that
    $$\bar{\mathcal{Z}}_p=\langle\bar \alpha_1,\dots,\bar\alpha_r \rangle_{\Fm}\cap \mathcal{T}_0.$$
    Since $\bar \alpha_1,\dots,\bar\alpha_r$ are linearly independent over $\Fm$, there exist $r$ entries such that the projection of $\langle\bar \alpha_1,\dots,\bar\alpha_r \rangle_{\Fm}$ over these entries is a bijection between $\langle\bar \alpha_1,\dots,\bar\alpha_r \rangle_{\Fm}$ and $\Fm^r$. Hence, the image of $\langle\bar \alpha_1,\dots,\bar\alpha_r \rangle_{\Fm}\cap \mathcal{T}_0$ under this projection is contained in the set of vectors in $\Fm^r$ whose entries have trace equal to zero, that has cardinality equal to $q^{r(m-1)}$. On the other side, since $\bar{\mathcal{Z}}_p$ is an $\F_q$-linear space, every element of the form $\sum_{i=1}^rv_i\bar\alpha_i$, for $v_i\in\mathcal{H}_{\alpha_i}$, belongs to $\mathcal{T}_0$, hence to $\bar{\mathcal{Z}}_p$. Thus, we obtain the lower bound on the cardinality of $\bar{\mathcal{Z}}_p$. Finally, by Lemma \ref{lm:zeroform}, we get that the cardinality of $\bar{\mathcal{Z}}_p$ can be at most $q^{(k-2)(m-1)}$.
\end{proof} \smallskip

All the arguments above can be generalized to the case $\gcd(s,m)=\delta>1$. \\ In this case, let us consider $\tilde\mV_\alpha:=\alpha+\mathbb{F}_{q^\delta}^k$. Then, $\tilde\mV_\alpha$ can be seen as the disjoint union of $q^{\delta-1}$ cosets of $\Fq^k$. That is, there exist $v_1,\dots,v_{q^{\delta-1}}\in\mathbb{F}_{q^\delta}^k$ such that \[\tilde\mV_\alpha=\bigcup_{i=1}^{q^{\delta-1}}\mV_{\alpha+v_i}.\] As in Lemma \ref{lm:coset}, we have that $\tilde\mV_\alpha=\tilde\mV_\beta$ if and only if $\alpha^{[s]}-\alpha=\beta^{[s]}-\beta$. Moreover, by Lemma~\ref{lm:aU}, if $p$ is a $(q^s+1)$-form that vanishes over $\mV_{\alpha+v_i}$ for some $i\in\{1,\dots,q^{\delta-1}\}$, then it vanishes over $\mV_{\alpha+v_j}$ for any $j\in\{1,\dots,q^{\delta-1}\}$. Therefore, the number of sets $\mathcal{V}_\alpha$ on which a form $p$ vanishes is given by $q^{\delta-1}$ times the cardinality of $$\tilde{\mathcal{Z}}_p:=\{\alpha^{[s]}-\alpha\st\alpha\in\Fm^k, \ p(\tilde\mV_\alpha)=0\}.$$ Hence, by similar arguments as those in Theorem \ref{th:zeros}, and considering the trace function $\mathrm{Tr}_{q^m/q^\delta}:\Fm\to \mathbb{F}_{q^\delta}$, we get the following.

\begin{theorem}\label{th:zeros2}
    Let $s\in\{1,...,m-1\}$, with $\gcd(s,m)=\delta$. Let $p\in\mathcal{F}_s$ be a nonzero $(q^s+1)$-form. If $\dim_{\F_{q^m}}(\langle \bar{\mathcal{Z}}_p\rangle_{\F_{q^m}})=r$, then there exist $\alpha_1,\dots,\alpha_r\in\Fm^k$ such that, for all $i\in\{1,\dots,r\}$, $$(q^\delta)^{\sum_{i=1}^r\dim_{\F_{q^\delta}}(\mathcal{H}_{\alpha_i})}\le |\tilde{\mathcal{Z}}_p|\le q^{r(m-\delta)},$$ where $\mathcal{H}_{\alpha_i}=\big\{v\in\Fm\st\mathrm{Tr}_{q^m/q^\delta}\big((\alpha_{i,j}^{[s]}-\alpha_{i,j})v\big)=0\text{ for all } j\in\{1,\dots,k\}\big\}$.  In particular, the maximum number of sets $\mathcal{V}_\alpha$ on which  a $(q^s+1)$-form in $\mathcal{F}_s\setminus\{0\}$ can vanish is $q^{\delta-1}q^{(k-2)(m-\delta)}$.
\end{theorem}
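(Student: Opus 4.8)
The plan is to reduce the statement with $\gcd(s,m)=\delta>1$ to the already established case $\delta=1$ of Theorem \ref{th:zeros}, using the decomposition of each set $\tilde\mV_\alpha=\alpha+\F_{q^\delta}^k$ into $q^{\delta-1}$ cosets of $\F_q^k$ which is described just before the statement. First I would record, exactly as in Lemma \ref{lm:coset}, that $\tilde\mV_\alpha=\tilde\mV_\beta$ if and only if $\alpha^{[s]}-\alpha=\beta^{[s]}-\beta$: one direction is immediate since $\alpha-\beta\in\F_{q^\delta}^k$ forces $(\alpha-\beta)^{[s]}=\alpha-\beta$, and the converse uses that the fixed field of $x\mapsto x^{[s]}$ inside $\F_{q^m}$ is $\F_{q^\delta}$ when $\gcd(s,m)=\delta$. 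This shows that the number of distinct sets $\tilde\mV_\alpha$ on which $p$ vanishes equals $|\tilde{\mathcal{Z}}_p|$, and, combined with the remark that $p$ vanishing on one coset $\mV_{\alpha+v_i}$ forces it to vanish on all of them (Lemma \ref{lm:aU} applied to the fact that $\overline{\alpha+v_i}=\overline{\alpha+v_j}$ up to an element fixed by $x\mapsto x^{[s]}-x$ on $\F_q$... more precisely the $\bar\alpha$'s coincide modulo the relevant linear system), that the total number of $\mV_\alpha$'s is $q^{\delta-1}|\tilde{\mathcal{Z}}_p|$.

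Next I would run the argument of Theorem \ref{th:zeros} verbatim but with the trace $\mathrm{Tr}_{q^m/q^\delta}\colon\F_{q^m}\to\F_{q^\delta}$ in place of $\mathrm{Tr}_{q^m/q}$. Writing $\bar y:=y^{[s]}-y$ and $\mathcal{T}_0:=\{\bar\beta:\beta\in\F_{q^m}^k\}$, the key point is that $\mathcal{T}_0$ is precisely the set of vectors in $\F_{q^m}^k$ all of whose entries lie in the image of $x\mapsto x^{[s]}-x$, which is the kernel of $\mathrm{Tr}_{q^m/q^\delta}$; this set has cardinality $(q^{m-\delta})^k=q^{k(m-\delta)}$ since the map $\F_{q^m}\to\F_{q^m}$, $x\mapsto x^{[s]}-x$, is $\F_{q^\delta}$-linear with kernel $\F_{q^\delta}$, hence image of size $q^{m-\delta}$. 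From Lemma \ref{lm:aU} one gets $\tilde{\mathcal{Z}}_p=\langle\bar\alpha_1,\dots,\bar\alpha_r\rangle_{\F_{q^m}}\cap\mathcal{T}_0$, where $\bar\alpha_1,\dots,\bar\alpha_r$ is an $\F_{q^m}$-basis of $\langle\bar{\mathcal{Z}}_p\rangle_{\F_{q^m}}$. Projecting onto $r$ suitably chosen coordinates gives an $\F_{q^m}$-linear bijection of this span with $\F_{q^m}^r$ carrying $\mathcal{T}_0$-vectors into vectors with all entries of trace zero, a set of size $(q^{m-\delta})^r=q^{r(m-\delta)}$; this yields the upper bound. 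For the lower bound, $\tilde{\mathcal{Z}}_p$ is an $\F_{q^\delta}$-linear space (again by Lemma \ref{lm:aU}, now with scalars in $\F_{q^\delta}$), and for $v_i\in\mathcal{H}_{\alpha_i}$ one checks $\sum_i v_i\bar\alpha_i\in\mathcal{T}_0$, so $\tilde{\mathcal{Z}}_p$ contains an $\F_{q^\delta}$-subspace of dimension $\sum_i\dim_{\F_{q^\delta}}\mathcal{H}_{\alpha_i}$, giving the claimed lower bound $(q^\delta)^{\sum_i\dim_{\F_{q^\delta}}(\mathcal{H}_{\alpha_i})}$.

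Finally, for the last assertion I would invoke the analogue of Lemma \ref{lm:zeroform}: if $\bar\alpha_1,\dots,\bar\alpha_{k-1}$ were linearly independent over $\F_{q^m}$ then, taking the $\alpha_i$ as columns of a $(k-1)\times k$ matrix $X$ and applying Theorem \ref{th:MAIN}, we would get $\dim\mathcal{F}_s\cap\mathcal{I}(L_{\mU_G})=0$ and hence $p\equiv0$, a contradiction. Thus $r=\dim_{\F_{q^m}}\langle\bar{\mathcal{Z}}_p\rangle_{\F_{q^m}}\le k-2$, so $|\tilde{\mathcal{Z}}_p|\le q^{(k-2)(m-\delta)}$ and the number of $\mV_\alpha$'s is at most $q^{\delta-1}q^{(k-2)(m-\delta)}$.

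The routine parts are the cardinality counts and the coordinate projection; the one place needing genuine care is the bookkeeping that translates "sets $\tilde\mV_\alpha$" into "sets $\mV_\alpha$" — i.e., justifying the factor $q^{\delta-1}$ and checking that $p$ vanishing on $\tilde\mV_\alpha$ is equivalent to vanishing on every constituent coset $\mV_{\alpha+v_i}$, so that no overcounting or undercounting occurs. I expect this combinatorial reconciliation, rather than any deep new idea, to be the main obstacle, and it is essentially dispatched by Lemma \ref{lm:aU} together with Lemma \ref{lm:coset}'s analogue.
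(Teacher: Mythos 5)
Your proposal follows exactly the route the paper itself takes: the paper proves this theorem only through the paragraph preceding it (decompose $\tilde\mV_\alpha=\alpha+\F_{q^\delta}^k$ into cosets of $\F_q^k$, note via Lemma~\ref{lm:aU} that vanishing on one constituent coset forces vanishing on all of them, and rerun the proof of Theorem~\ref{th:zeros} with $\mathrm{Tr}_{q^m/q^\delta}$ in place of $\mathrm{Tr}_{q^m/q}$), and the steps you actually carry out are correct. In particular, since $\delta=\gcd(s,m)$ divides $s$, the map $x\mapsto x^{[s]}-x$ is $\F_{q^\delta}$-linear with kernel $\F_{q^\delta}$ and image equal to $\ker\mathrm{Tr}_{q^m/q^\delta}$, which justifies $\tilde{\mathcal{Z}}_p=\langle\bar\alpha_1,\dots,\bar\alpha_r\rangle_{\Fm}\cap\mathcal{T}_0$, the projection argument giving $\lvert\tilde{\mathcal{Z}}_p\rvert\le q^{r(m-\delta)}$, the lower bound through the spaces $\mathcal{H}_{\alpha_i}$, and the cap $r\le k-2$ obtained from Theorem~\ref{th:MAIN} as in Lemma~\ref{lm:zeroform}.

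The genuine gap is precisely the step you defer, namely ``justifying the factor $q^{\delta-1}$'': carried out, that bookkeeping does not give $q^{\delta-1}$. The coset $\tilde\mV_\alpha=\alpha+\F_{q^\delta}^k$ splits into $[\F_{q^\delta}^k:\F_q^k]=q^{k(\delta-1)}$ cosets of $\F_q^k$, so the number of sets $\mV_\alpha$ on which $p$ vanishes is $q^{k(\delta-1)}\lvert\tilde{\mathcal{Z}}_p\rvert$, not $q^{\delta-1}\lvert\tilde{\mathcal{Z}}_p\rvert$. Concretely, for $k=2$ and $\delta=2$ the form $p=x_1^{[s]}x_2-x_1x_2^{[s]}$ vanishes on $\mV_\alpha$ (by Lemma~\ref{lm:aU}) exactly when $\alpha\in\F_{q^2}^2$, i.e.\ on $q^2$ distinct cosets, which exceeds the asserted maximum $q^{\delta-1}q^{(k-2)(m-\delta)}=q$. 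Your bounds on $\lvert\tilde{\mathcal{Z}}_p\rvert$ are fine; it is the translation into a count of sets $\mV_\alpha$, and hence the ``in particular'' clause, that needs the corrected factor $q^{k(\delta-1)}$. In fairness this slip is inherited from the paper's own preamble and statement, but since you singled out this reconciliation as the crux, it is the one point that cannot be dispatched by citing the analogue of Lemma~\ref{lm:coset}: it requires the explicit index computation, and that computation changes the constant.
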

\noindent We conclude noting that the upper bound in Theorem \ref{th:zeros2} is tight. Indeed, we can meet it for instance by considering $\gamma e_1,\dots,\gamma e_{k-2}$, where $\gamma\in\Fm\setminus\F_{q^\delta}$ and $e_i$ is the $i$-th vector of the canonical base of $\Fm^k$.

\noindent The situation when $|\mathcal{S}|>1$ is way more complicated and needs further investigations.

\bigskip

\section*{Acknowledgments}
The authors of this paper would like to thank Alain Couvreur, Elisa Gorla, and Ferdinando Zullo for their insightful feedback and suggestions. They would also like to thank the reviewers for their thoughtful and constructive comments, which helped us improve the quality and clarity of the manuscript.

\bibliographystyle{abbrv}
\bibliography{biblio.bib}
\end{document}